\documentclass{article}
\usepackage[utf8]{inputenc}
\usepackage{secdot, amsfonts, amsmath, amssymb, bm, breqn, mathtools, graphicx, derivative, caption, bbm, dsfont, float, subcaption, tikz, relsize, amsthm, fullpage, array, authblk, algpseudocode, algorithm, algcompatible}
\usepackage{natbib}

\usepackage[export]{adjustbox}
\usepackage{multirow}
\usepackage{color}
\usetikzlibrary{positioning,fit,arrows.meta,calc}
\usetikzlibrary{automata, fit, shapes.geometric, backgrounds} 
\usetikzlibrary{bayesnet}
\usetikzlibrary{arrows}
\tikzset{
  latentnode/.style={draw, minimum width=5mm, shape=circle, scale = 0.6, ultra thick, black, transform shape, minimum size = 1cm},
  dagconn/.style={arrows=->, black, thick},
  plate/.style={draw, shape=rectangle, rounded corners=0.5ex, thick,
    minimum width=2.1cm, text width=3.1cm, align=right, inner sep=10pt, inner ysep=10pt,label={[xshift=-14pt,yshift=14pt]south east:#1}}
}

\usepackage[title]{appendix}

\newtheorem{prp}{Proposition}

\makeatletter
\def\Vdots{\vbox{\baselineskip12\p@ \lineskiplimit\z@
  \kern1\p@\hbox{.}\hbox{.}\hbox{.}\hbox{.}\hbox{.}\hbox{.}\hbox{.}\hbox{.}\hbox{.}}}
\makeatother

\allowdisplaybreaks 
\usepackage[colorlinks=false, linktocpage=true]{hyperref}
\hypersetup{
  colorlinks,
  citecolor=blue,
  linkcolor=blue,
  urlcolor=blue}

\usepackage[capitalise]{cleveref} 
\usetikzlibrary{automata, fit, shapes.geometric}
\crefrangelabelformat{equation}{(#3#1#4--#5#2#6)} 
\usepackage[export]{adjustbox} 
\usepackage[margin=1in]{geometry}
\newcommand{\beginsupplement}{%
        \setcounter{section}{0}
        \renewcommand{\thesection}{\Alph{section}} 
        \setcounter{lemma}{0}
        \setcounter{equation}{0}
        \setcounter{table}{0}
        \renewcommand{\thetable}{S\arabic{table}}%
        \setcounter{figure}{0}
        \renewcommand{\thefigure}{S\arabic{figure}}%
     }

 \makeatletter
\newenvironment{breakablealgorithm}
  {
   \begin{center}
     \refstepcounter{algorithm}
     \hrule height.8pt depth0pt \kern2pt
     \renewcommand{\caption}[2][\relax]{
       {\raggedright\textbf{\fname@algorithm~\thealgorithm} ##2\par}%
       \ifx\relax##1\relax 
         \addcontentsline{loa}{algorithm}{\protect\numberline{\thealgorithm}##2}%
       \else 
         \addcontentsline{loa}{algorithm}{\protect\numberline{\thealgorithm}##1}%
       \fi
       \kern2pt\hrule\kern2pt
     }
  }{
     \kern2pt\hrule\relax
   \end{center}
  }
\makeatother

\DeclareMathOperator*{\argmax}{arg\,max}
\usepackage{setspace}
\begin{document}

\title{\bf Nested Atoms Model with Application to Clustering Big Population-Scale Single-Cell Data}
\date{}

\author[1]{\small Arhit Chakrabarti}
\author[1, 2]{\small Yang Ni}
\author[1, 3, 4]{\small Yuchao Jiang}
\author[1]{\small Bani K. Mallick}

\affil[1]{\footnotesize \textit{Department of Statistics, Texas A\&M University}}
\affil[2]{\footnotesize \textit{Department of Statistics and Data Sciences, The University of Texas at Austin}}
\affil[3]{\footnotesize \textit{Department of Biology, Texas A\&M University}}
\affil[4]{\footnotesize \textit{Department of Biomedical Engineering, Texas A\&M University}}

\maketitle

\begin{abstract}
\sloppy We consider the problem of clustering \emph{nested} or \emph{hierarchical} data, where observations are grouped and there are both group-level and observation-level variables. In our motivating OneK1K dataset, observations consist of single-cell RNA-sequencing (scRNA-seq) data from 982 individuals (groups), totaling 1.27 million cells (observations), along with individual-specific genotype data. This type of data would enable the identification of cell types and the investigation of how genetic variations among individuals influence differences in cell-type profiles. Our goal, therefore, is to jointly cluster cells and individuals to capture the heterogeneity across both levels using cell-specific gene expressions as well as individual-specific genotypes. However, existing grouped clustering methods do not incorporate group-level variables, thereby limiting their ability to capture the heterogeneity of genotypes in our motivating application. To address this, we propose the Nested Atoms Model (NAM), a new Bayesian nonparametric approach that enables the desired two-layered clustering, accounting for both group-level and observation-level variables. To scale NAM for high-dimensional data, we develop a fast variational Bayesian inference algorithm. Simulations show that NAM outperforms existing methods that ignore group-level variables. Applied to the OneK1K dataset, NAM identifies clusters of genetically similar individuals with homogeneous cell-type profiles. The resulting cell clusters align with known immune cell types based on differential gene expression, underscoring the ability of NAM to capture nested heterogeneity and provide biologically meaningful insights.
\end{abstract}
\sloppy \noindent%
{\it Keywords:}  Bayesian nonparametrics, common atoms model, nested dataset, single-cell data, variational inference.

\vfill

\newpage
\section{Introduction}
\label{sec:intro}
\subsection{Motivating Application: OneK1K study}
The rapidly decreasing cost and increasing throughput of single-cell RNA-sequencing (scRNA-seq) has enabled researchers to sequence a large number of cells over a large cohort of individuals. Our motivating application is the OneK1K study \citep{OneK1K_Yazar}, which sequenced 1.27 million peripheral blood mononuclear cells from 982 individuals of Northern European ancestry. The scRNA-seq data is also combined with genotype data comprising approximately 760,000 single-nucleotide polymorphisms (SNPs). Such population-scale single-cell data is inherently grouped and nested -- cells are grouped by individuals, and SNPs are individual-specific, and RNA gene expression is cell-specific while further clustered by cellular subtypes. Such data are also known as hierarchical or multi-level data, where heterogeneity exhibits across both cells and individuals. It is worth noting that while we focus our study on the OneK1K dataset, more datasets of this setup are becoming readily available in single-cell population genomics, such as the more recent ROSMAP \citep{ROSMAP} and TenK10K \citep{TenK10K} studies.

Figure~\ref{fig:Biometrics_Motivation} presents an illustrative overview of the motivating OneK1K application. Specifically, individuals exhibiting similar patterns of genetic variations (i.e., SNPs) may also share comparable cell-type-specific profiles. Furthermore, among individuals with distinct genetic variations, partial similarity in cell-type composition and/or cell-type-specific expression may also be expected. While the original OneK1K study \citep{OneK1K_Yazar} first clustered cells into distinct cell types and states at the cell level and then conducted association testing between gene expression and genetic variants at the individual level, our overarching goal is to simultaneously cluster both cells and individuals to achieve a better understanding of heterogeneity at both levels. Importantly, the cell clusters reflect genetically governed molecular cell types, whereas the individual clusters are characterized by both shared SNP profiles and similar cell-type-specific profiles. For generality, we refer to individuals as groups and the cells within each individual as group-specific observations. Using these terminologies, we aim to identify both observation-level and group-level clusters. 

\begin{figure}
\centerline{\includegraphics[width=1\linewidth]{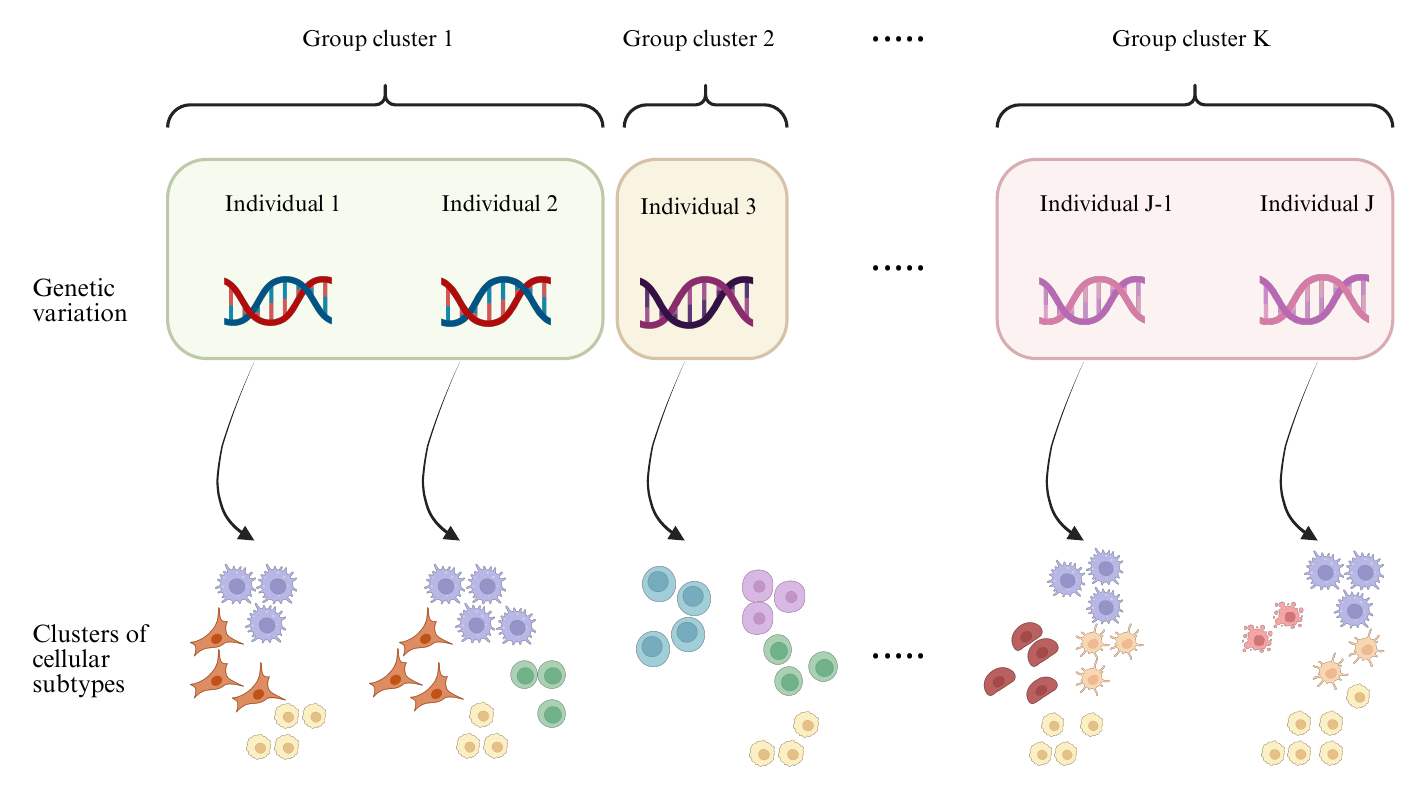}}
\caption{Illustrative figure showing the inherent groups of individuals based on shared genetic variation along with clusters of cell-type distribution within and across such groups in the motivational OneK1K data.
}
\label{fig:Biometrics_Motivation}
\end{figure}
 
\subsection{Literature Review on Grouped Clustering}
Bayesian nonparametric methods have been one of the most popular approaches for achieving such two-layered clustering due to the flexible specification of generative/sampling models for complex data structures, the sharing of observation-level clustering information across groups, and the automatic determination of the number of clusters.    

The hierarchical Dirichlet process (HDP; \citealp{hdp}) assumes group-specific Dirichlet processes (DP; \citealp{ferguson}) with a common DP-distributed base measure. Since the draws from DP are almost surely discrete, all group-specific distributions are based on a shared set of atoms, which leads to the sharing of clusters of observations across groups. However, while the HDP facilitates the clustering of group-specific observations across groups, it does not cluster the groups themselves. The nested DP (nDP; \citealp{nestedDP}) considers a DP-distributed random measure whose base measure is also a DP. The discreteness of the two DPs respectively induces the clustering of groups and group-specific observations. Although nDP has been widely employed in various contexts \citep{Functionalclusteringinnesteddesigns2014, graziani2015bayesian, zuanetti2018clustering}, recently, \cite{latent_nDP} highlighted the inherent degeneracy property of the nDP -- two groups sharing even one observation-level cluster are automatically assigned to the same group-level cluster. To mitigate this degeneracy issue, several recent works have been proposed to combine the cluster-sharing feature of the HDP and the group-clustering feature of the nDP \citep{beraha2021semi, balocchi2022clusteringarealunitsmultiple_arXiv, SCSC_arXiv, hidden_HDP}. 

Notably, the common atoms model (CAM; \citealp{denti2023common}) closely resembles the nDP, but with a crucial difference that avoids the degeneracy issue of the nDP. Particularly, the DP-distributed random measure in nDP is replaced by a discrete measure whose atoms are assumed to be distributions with a common set of atoms. Several variants of CAM have also been proposed, including the finite version of CAM \citep{BNPnestedDP} and the finite-infinite shared atoms nested model (fiSAN; \citealp{fiSAN}). More recently, \citealp{DENTI2025}  extends the CAM's nested framework for a general class of nested random measures with common atoms. The corresponding mixture models thereby achieve a simultaneous clustering of groups, referred to as group clusters (GCs), and clustering of observations across groups, referred to as observational clusters (OCs). 
\subsection{Our Major Contributions}
Although the aforementioned methods have been proven useful for clustering grouped data, they are not suitable for nested data, where there are not only grouped observation-level variables as in grouped data but also group-level variables. For example, in our motivating application, the scRNA-seq gene expressions are observation-level variables grouped by individuals, and SNPs are group-level variables. Therefore, to handle nested data, we propose a nested atoms model (NAM), which allows for the desired two levels of clustering and accommodates both observation- and group-level variables. Importantly, the group-level variables can influence both levels of clustering. We study the properties of the NAM prior and provide a fast variational Bayesian inference algorithm to scale the proposed model to the size of the motivating application. Through extensive simulations, we demonstrate the superior clustering performance of the proposed model in comparison to the existing methods that do not account for the group-level variables. Applying the proposed model to the motivating OneK1K data reveals that individuals with shared genetic variation show homogeneous cell type distribution compared to individuals with contrasting SNP patterns. Furthermore, differential gene expression analysis within each identified observational cluster enabled the identification of marker genes consistent with known cell-type annotations, thereby facilitating biologically interpretable labeling of clusters. By applying existing methods to the same dataset, we also demonstrate that disregarding the SNP data leads to suboptimal clustering performance at both the group- and observation-levels. These findings underscore the practical utility and the necessity of NAM in disentangling complex structures in nested data.

The remainder of the article is organized as follows. In Section~\ref{sec::preliminaries}, we provide a brief overview of the existing Bayesian nonparametric models for clustering grouped data. In Section~\ref{sec::NAM}, we introduce the proposed model for clustering nested data. Additionally, we study some properties of the proposed model. Section~\ref{sec::posterior_inference} provides the variational inference algorithm. We present simulation studies in Section~\ref{sec::simulations}, showcasing the superior two-layered clustering performance of the proposed method over existing methods. In Section~\ref{sec:real_data_analysis}, we apply the proposed model to the motivating OneK1K data. The paper concludes with a brief discussion in Section~\ref{sec::conclusion}. The code implementing the proposed method is available in the GitHub repository: \url{https://github.com/Arhit-Chakrabarti/NAM}. A glossary of all abbreviations used throughout the paper is provided in Section~\ref{supp::sec::glossary} of the Supplementary Materials.

\section{Backgrounds}\label{sec::preliminaries}
We present a brief overview of four Bayesian nonparametric models -- nDP, CAM, fiSAN, and the hidden hierarchical Dirichlet process (HHDP; \citealp{hidden_HDP}) -- for clustering grouped data. 
For the $j$-th individual ($j=1,\dots,J$), let $\bm{y}_{ji}$ denote the vector of $p$ gene expressions measured in cell $i=1,\dots, n_j$ and let $\bm{x}_j$ denote the vector of $q$ SNPs (in fact, later we will use the principal components). We call $\{\{\bm{y}_{ji}\}_{i=1}^{n_j}\}_{j=1}^J$ grouped data and $\{\bm{x}_j,\{\bm{y}_{ji}\}_{i=1}^{n_j}\}_{j=1}^J$ nested data as $\bm{x}_j$ and $\bm{y}_{ji}$ have different granularity. The models that we review in this section are only applicable to the grouped data $\{\{\bm{y}_{ji}\}_{i=1}^{n_j}\}_{j=1}^J$.

\subsection{Nested Dirichlet Process (nDP)}
Assume $\bm{y}_{ji}$ takes values in a suitable Polish space $\mathbb{Y}$ endowed with the respective Borel $\sigma$-field $\mathcal{Y}$. Let $G_j$ denote the random probability measure or distribution of the $j$-th group,
\begin{equation}
\begin{aligned}
\bm{y}_{ji} \overset{iid}{\sim} G_j\quad \text{for} \quad i=1,\dots,n_j,
\end{aligned}
\label{eq::nDP1}
\end{equation}
where iid denotes that the variables are independent and identically distributed.
Let $\bm{y}_j=\{\bm{y}_{ji}\}_{i=1}^{n_j}$. nDP simultaneously partitions $\bm{y}_1, \ldots, \bm{y}_J$ into group clusters (GCs) and partitions $\bm{y}_{j1},\ldots,\bm{y}_{jn_j}$ into observational clusters (OCs) within each such GC by assuming that the $G_j$'s are a sample from an almost surely discrete distribution over the space of probability distributions on $(\mathbb{Y}, \mathcal{Y})$,
\begin{equation}
\begin{aligned}
G_j\overset{iid}{\sim}Q=\sum_{k=1}^{\infty}\,  \pi_k \, \delta_{G^*_k },
\end{aligned}
\label{eq::nDP2}
\end{equation}
with
\begin{equation}
G^*_k = \sum_{l = 1}^{\infty} \omega_{ lk}\, \delta_{\theta_{lk}},
\label{eq::nDP3}
\end{equation}
where $\delta$ is a point mass, the atoms $\theta_{lk}$'s are iid samples from a non-atomic base probability measure $H$ on $(\mathbb{Y}, \mathcal{Y})$, and the weights $\bm{\pi}=\{\pi_k\}_{k=1}^{\infty}\sim \mathrm{GEM}(\alpha)$ and $\bm{\omega}_k = \{\omega_{lk}\}_{l=1}^{\infty}\sim \mathrm{GEM}(\beta)$ are GEM-distributed, where GEM stand for the Griffiths-Engen-McCloskey distribution \citep{pitman_GEM}. In other words, $\pi_k = v_k\prod_{j=1}^{k-1}(1-v_j)$ and $\omega_{lk} = u_{lk}\prod_{j=1}^{l-1}(1-u_{jk})$ with $v_k \overset{iid}{\sim} \mathrm{Beta}(1, \alpha)$ and $u_{lk} \overset{iid}{\sim} \mathrm{Beta}(1, \beta)$ for all $k\geq 1$,  $l\geq 1$ \citep{Sethuraman_arXiv}.  Consequently, $Q,G_1^*,G_2^*,\dots$ are all DP-distributed. It has been shown \citep{latent_nDP} that two distributions $G_j$ and $G_{j'}$ sharing even one atom are automatically assigned to the same GC as a consequence of \eqref{eq::nDP3}.

\subsection{Common Atoms Model (CAM)}
To overcome the degeneracy property of the nDP, \cite{denti2023common} proposed the CAM. Their specification closely resembles that of the nDP, but with \eqref{eq::nDP3} replaced by,
\begin{equation}
G^*_k = \sum_{l = 1}^{\infty} \omega_{ lk}\, \delta_{\theta_{l}},
\label{eq::CAM}
\end{equation}
where the common atoms $\theta_{l} \overset{iid}{\sim} H$, a non-atomic base measure on $(\mathbb{Y}, \mathcal{Y})$. However, under the CAM, the prior correlation between two random distributions $G_j$ and $G_{j'}$ ($j \neq j'$) lies between 0.5 and 1 by construction. As noted by \citealp{fiSAN}, such strong prior dependence may lead to biased posterior inference when the underlying distributional clusters are well separated. To address this issue, the authors proposed the fiSAN model by modifying the distributional assumptions on the observational weights, which we review next.

\subsection{Finite-Infinite Shared Atoms Nested Model (fiSAN)}
As a finite variation of CAM, fiSAN fixes the number of OCs by modifying \eqref{eq::CAM} as,
\begin{equation}
G^*_k = \sum_{l = 1}^{L} \omega_{lk}\, \delta_{\theta_{l}},
\label{eq::fiSAN}
\end{equation}
where the mixing weights are assigned the Dirichlet prior distribution,
\begin{equation}
    \begin{aligned}
    \bm{\omega}_k = (\omega_{1k},\ldots,\omega_{Lk}) &\sim \mathrm{Dirichlet}_L(\beta, \ldots, \beta).
    \label{eq::fiSAN_weights}
    \end{aligned}
\end{equation}
The authors show that adopting a symmetric Dirichlet distribution implies that, a priori, all atoms are equally likely to be resampled across the random measures. Consequently, the induced prior correlation structure under the fiSAN lies in $(0,1)$, thereby allowing a broader range of dependence and leading to improved prior flexibility and posterior inference. Next, we provide a brief review of an alternative yet related model, the HHDP. In particular, the HHDP explicitly integrates the cluster-sharing mechanism of the HDP with the group-clustering structure of the nDP. Despite some initially apparent structural differences, the HHDP, similar to the CAM and fiSAN, ultimately relies on a common set of atoms.

\subsection{Hidden hierarchical Dirichlet process (HHDP)}
The HHDP can be hierarchically expressed as \eqref{eq::nDP2}, where the random measures $G_k^*$ are themselves discrete and admit the representation
\begin{equation}
G_k^* = \sum_{l=1}^{\infty} \omega_{lk}\,\delta_{Z_{lk}}, 
\qquad 
Z_{lk} \mid G_0 \overset{iid}{\sim} G_0,
\qquad 
G_0 = \sum_{\ell=1}^{\infty} \omega_{\ell0}\,\delta_{\theta_\ell}.
\label{eq::HHDP2}
\end{equation}
Here, the atoms $\theta_\ell$ are independently drawn from a non-atomic base measure $H$ defined on $(\mathbb{Y}, \mathcal{Y})$. Consequently, the atoms $Z_{lk}$ are sampled from the discrete random measure $G_0$, thereby inducing sharing of atoms across the measures $\{G_k^*\}_{k \geq 1}$. The observational weights are further assumed to follow
\[
\{\omega_{lk}\}_{l=1}^{\infty} \overset{iid}{\sim} \mathrm{GEM}(\beta), \; k\ge 1,
\qquad 
\{\omega_{\ell0}\}_{\ell=1}^{\infty} \sim \mathrm{GEM}(\beta_0).
\]
By the closure property of the DP, the representation in \eqref{eq::HHDP2} can be equivalently expressed using a common set of atoms,
\begin{equation}
G_k^* = \sum_{l=1}^{\infty} \omega_{lk}^*\,\delta_{\theta_l}, 
\qquad 
G_0 = \sum_{l=1}^{\infty} \omega_{l0}\,\delta_{\theta_l},
\label{eq::HHDP3}
\end{equation}
where
\[
\{\omega_{lk}^*\}_{l=1}^{\infty} \mid \bm{\omega}_0 
\overset{iid}{\sim} \mathrm{DP}(\beta, \bm{\omega}_0), 
\qquad 
\{\omega_{l0}\}_{l=1}^{\infty} \sim \mathrm{GEM}(\beta_0),
\qquad 
\theta_l \overset{iid}{\sim} H.
\]
The key difference between the distributional assumptions underlying the HHDP and CAM lies in the dependence structure of the component weights. Under the CAM, the weight sequences $\{\omega_{lk}\}_{l\ge1}$ are independent across groups, implying that each group independently determines the relative importance of the common atoms $\{\theta_l\}_{l\ge1}$. Consequently, there is no mechanism encouraging similarity in cluster proportions across groups, and the same clusters may appear with markedly different weights in different groups. In contrast, under the HHDP the group-specific weights arise as Dirichlet process perturbations around a base weight distribution. As a result, clusters exhibit a form of global popularity governed by the base measure $\bm{\omega}_0$, and individual groups inherit this structure while allowing for group-specific variability.

\section{Nested Atoms Model}\label{sec::NAM} Since existing methods are only applicable to grouped data,
clustering nested data  $\{\bm{x}_j, \{\bm{y}_{ji}\}_{i=1}^{n_j}\}_{j=1}^J$ calls for a new Bayesian nonparametric model. Assume that each $\bm{x}_j$ takes values in a suitable Polish space $\mathbb{X}$ endowed with the respective Borel $\sigma$-field $\mathcal{X}$. 
We propose the following NAM model,
\begin{equation}
\begin{aligned}
\bm{x}_j\overset{ind}{\sim} G_j^x\quad\text{and}\quad \bm{y}_{ji}\overset{iid}{\sim} G_j^y,
\end{aligned}
\label{eq::NAM1}
\end{equation}
where ind denotes that the variables are independently distributed,
\begin{equation}
\begin{aligned}
G_j:=(G_j^x, G_j^y)&\overset{iid}{\sim}Q=\sum_{k=1}^\infty\,  \pi_k \, \delta_{(\delta_{\theta_k^x}, G^{y*}_k)},
\end{aligned}
\label{eq::NAM2}
\end{equation}
and
\begin{equation}
G^{y*}_k = \sum_{l = 1}^{\infty} \omega_{ lk}\, \delta_{\theta_{l}^y}.
\label{eq::NAM3}
\end{equation}
Here the atoms $\theta_k^x$ associated with $\bm{x}_j$  are drawn from a base measure $H^x$ on $(\mathbb{X}, \mathcal{X})$, while the common atoms $\theta_l^y$ associated with $\bm{y}_{ji}$ are independently sampled from $H^y$ on $(\mathbb{Y}, \mathcal{Y})$. Both base measures $H^x$ and $H^y$ are assumed to be non-atomic. The fact that $\delta_{\theta_k^x}$ is coupled with $G_k^{y*}$ in \eqref{eq::NAM2} suggests that $\bm{x}_j$ and $\bm{y}_{ji}$ 
 would jointly determine both OCs and GCs. Additionally, the inclusion of common atoms $\theta_l^y$ facilitates the sharing of OCs across groups, thereby enabling the \emph{comparable clusters} feature of the NAM. Furthermore, we assume $\bm{\pi} =\{\pi_k\}_{k=1}^\infty\sim \mathrm{GEM}(\alpha)$ and $\bm{\omega}_k =\{\omega_{lk}\}_{l=1}^\infty\sim \mathrm{GEM}(\beta)$ for all $k$. We refer to the model given by \eqref{eq::NAM2} and \eqref{eq::NAM3} as the NAM. Because, marginally,
\begin{equation}
\begin{aligned}
G_j^y&\overset{iid}{\sim} Q^y= \sum_{k= 1}^\infty \pi_k \delta_{G_k^{y*}},
\end{aligned}
\label{eq::NAM_Marginal_for_x2}
\end{equation}
the proposed NAM reduces to CAM in the absence of $\bm{x}_j$. Furthermore, as marginally, 
\begin{equation}
\begin{aligned}
G_j^x &\overset{iid}{\sim} Q^x= \sum_{k= 1}^\infty \pi_k \delta_{\delta_{\theta_k^x}}
\end{aligned}
\label{eq::NAM_Marginal_for_x}
\end{equation}
implies $P(G_j^x = \delta_{\theta_k^x}|Q^x) = \pi_k$, it is straightforward to see for all $k\geq 1$, 
\begin{align*}
    P(\bm{x}_j = \theta_k^x |Q^x) & = \sum_{l=1}^{\infty} P(\bm{x}_j = \theta_k^x, G_j^x = \delta_{\theta_l^x} |Q^x)\\
    & = P(\bm{x}_j = \theta_k^x, G_j^x = \delta_{\theta_k^x} |Q^x)\\
    & = P(G_j^x = \delta_{\theta_k^x} |Q^x)\\ 
    & = \pi_k.
\end{align*}
In other words, marginally $\bm{x}_j \overset{iid}{\sim}Q^x$, where $Q^x \sim \mbox{DP}(\alpha, H^x)$.

We note that our proposed construction of the NAM adopts a weight specification similar to that of the CAM, in that the sequences $\{\omega_{lk}\}_{l\ge1}$ are assumed to follow independent $\mathrm{GEM}(\beta)$ distributions across all groups $k\geq 1$. This modeling choice is particularly appealing for our motivating OneK1K cohort, where it is desirable to allow cell-type clusters to be shared across individuals while permitting their relative abundances to vary across groups. At the same time, even among individuals with dissimilar SNP profiles, the proportions of the same cell-types may remain comparable due to underlying biological mechanisms. Consequently, modeling the group-specific weight sequences independently provides sufficient flexibility to capture heterogeneous cluster prevalence across groups, while the SNP-derived data possibly inform the similarity among individuals through the group-level variables.

\subsection{Properties of the NAM}
In this section, we study some properties of the NAM. In particular, we investigate the prior mean, variance, co-clustering probabilities, and correlation structure. In the following propositions, we assume that the concentration parameters, $\alpha$ and $\beta$ are fixed. All proofs are presented in Section~\ref{sec::suppl_proofs_properties} of the Supplementary Materials. The first proposition provides the expressions for the prior mean and variance for a random measure under the NAM prior.
\begin{prp}\label{prop1}
Consider the random distribution $G_j$ defined on $(\mathbb{X}\times \mathbb{Y}, \mathcal{X}\otimes\mathcal{Y})$, with $G_j | Q \sim  Q$, where $Q$ is defined in \eqref{eq::NAM2}. Then, for any Borel set $A \in \mathcal{X}\otimes\mathcal{Y}$,
\begin{align*}
    \mathbb{E}\left[ G_j(A)\right] &= H^x(A)H^y(A),\\
    Var\left[ G_j(A)\right] & = H^x(A)H^y(A)\left[q_2 + H^y(A) - q_2H^y(A) - H^x(A) H^y(A) \right],
\end{align*}
where $q_2 = 1/{(1+ \beta)}$.
\end{prp}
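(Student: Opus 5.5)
The plan is to read the Borel set in the statement as a measurable rectangle $A = A^x \times A^y$, with $A^x \in \mathcal{X}$ and $A^y \in \mathcal{Y}$. We then write $H^x(A)$ for $H^x(A^x)$ and $H^y(A)$ for $H^y(A^y)$, which is the only reading under which the stated formula makes sense; a general set would then be handled through the rectangles generating $\mathcal{X}\otimes\mathcal{Y}$. By \eqref{eq::NAM2}, conditionally on $Q$ we have $G_j = \delta_{\theta_k^x}\otimes G_k^{y*}$ with probability $\pi_k$. Hence
\[
G_j(A) = \mathbbm{1}\{\theta_{K_j}^x \in A^x\}\, G_{K_j}^{y*}(A^y),
\]
where $K_j$ is a latent index with $P(K_j = k \mid \bm{\pi}) = \pi_k$.

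For the mean, I will condition on $K_j = k$ and use the independence of $\theta_k^x \sim H^x$, of the weights $\bm{\omega}_k$, and of the atoms $\theta_l^y \sim H^y$. This gives $\mathbb{E}[\mathbbm{1}\{\theta_k^x\in A^x\}] = H^x(A^x)$. Since $\sum_l \omega_{lk} = 1$ almost surely, it also gives $\mathbb{E}[G_k^{y*}(A^y)] = \sum_l \mathbb{E}[\omega_{lk}]\, H^y(A^y) = H^y(A^y)$. Neither expectation depends on $k$, and $\sum_k \pi_k = 1$ almost surely, so the tower property yields $\mathbb{E}[G_j(A)] = H^x(A)H^y(A)$.

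For the variance, I will compute the second moment by the same conditioning. The key simplification is that the indicator is idempotent, so
\[
\mathbb{E}[G_j(A)^2] = H^x(A)\,\mathbb{E}\big[G_k^{y*}(A^y)^2\big].
\]
Because $\bm{\omega}_k\sim\mathrm{GEM}(\beta)$ and the atoms are iid from the non-atomic $H^y$, $G_k^{y*}$ is a $\mathrm{DP}(\beta, H^y)$. Its second moment is therefore
\[
\mathbb{E}\big[G_k^{y*}(A^y)^2\big] = \frac{H^y(1-H^y)}{1+\beta} + (H^y)^2 = q_2 H^y + (1-q_2)(H^y)^2 .
\]
If I want to avoid quoting the DP moment formula, I can derive it directly. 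Expanding $\sum_{l,l'}\omega_{lk}\omega_{l'k}\mathbbm{1}\{\theta_l^y\in A^y\}\mathbbm{1}\{\theta_{l'}^y\in A^y\}$ gives $\sum_l\mathbb{E}[\omega_{lk}^2]\,H^y + \big(1-\sum_l\mathbb{E}[\omega_{lk}^2]\big)(H^y)^2$. It then remains to show $\sum_l \mathbb{E}[\omega_{lk}^2] = 1/(1+\beta)$, which is the probability that two draws from $\mathrm{GEM}(\beta)$ coincide. This is the Pólya urn tie probability and also follows from a short stick-breaking computation.

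Subtracting the squared mean gives
\[
Var[G_j(A)] = H^x H^y\big[q_2 + (1-q_2)H^y\big] - (H^x H^y)^2 ,
\]
which rearranges to the stated expression. The main obstacle is not computational. It is fixing the interpretation of $A$ as a rectangle, together with verifying the identity $\sum_l \mathbb{E}[\omega_{lk}^2] = q_2$. For the latter, I would first try the direct stick-breaking recursion: writing $s = \sum_l \mathbb{E}[\omega_{lk}^2]$, it gives
\[
s = \mathbb{E}[u^2] + \mathbb{E}[(1-u)^2]\, s, \qquad u\sim\mathrm{Beta}(1,\beta),
\]
whose solution is $s = 1/(1+\beta)$.
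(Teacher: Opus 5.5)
Your proposal is correct and is essentially the paper's proof. Both condition on $Q$ (your latent index $K_j$) and use idempotence of $\delta_{\theta_k^x}(A)$ together with independence of $\theta_k^x$ from $G_k^{y*}$ to get $\mathbb{E}[G_j(A)^2]=H^x(A)\{q_2H^y(A)+(1-q_2)H^y(A)^2\}$. Everything then rests on $\sum_l\mathbb{E}[\omega_{lk}^2]=1/(1+\beta)$, which the paper just asserts from the stick-breaking construction and you verify through the recursion $s=\mathbb{E}[u^2]+\mathbb{E}[(1-u)^2]s$.

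Your rectangle reading $A=A^x\times A^y$ is exactly what the paper's notation $\delta_{\theta_k^x}(A)$, $G_k^{y*}(A)$ implicitly assumes. Drop the aside about handling general sets through generating rectangles: $A\mapsto Var[G_j(A)]$ is not a measure, so a $\pi$--$\lambda$ argument cannot carry the variance formula beyond rectangles, and the statement does not need it.
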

Both the prior mean and variance of the NAM depend on the base measures $H^x(\cdot)$ and $H^y(\cdot)$. The next proposition states the prior co-clustering probabilities under the NAM prior.
\begin{prp}\label{prop2}
Consider two random distributions $G_j$ and $G_{j'}, \ j' \neq j$, defined on $(\mathbb{X}\times \mathbb{Y}, \mathcal{X}\otimes\mathcal{Y})$, with $G_j,\,G_{j'} \mid Q \overset{iid}{\sim} Q$ and $Q$ is defined by \eqref{eq::NAM2}. Additionally, consider two observations $\bm{z}_{ji} \mid G_j \sim G_j$ and $\bm{z}_{j'i'} \mid G_{j'} \sim G_{j'}$, where $i' \neq i$, $\bm{z}_{ji}= (\bm{x}_{j}, \bm{y}_{ji})$, and $\bm{z}_{j'i'}= (\bm{x}_{j'}, \bm{y}_{j'i'})$. Then, the prior co-clustering probabilities under the NAM are given by,
\begin{align*}
    P\left[G_j = G_{j'}\right] & = \frac{1}{1+ \alpha},\\
    P\left[\bm{z}_{ji} = \bm{z}_{j'i'}\right] & = \frac{1}{1+\alpha}\left[\frac{1}{1+\beta} + \frac{\alpha}{2\beta + 1}\right].
\end{align*}
\end{prp}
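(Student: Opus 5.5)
We compute both probabilities by conditioning on the latent allocation variables implied by the stick-breaking construction in \eqref{eq::NAM2}--\eqref{eq::NAM3}, and then integrating out the GEM weights. Introduce a group label $S_j$ with $P(S_j=k\mid \bm{\pi})=\pi_k$, so that $G_j=(\delta_{\theta^x_{S_j}},G^{y*}_{S_j})$. Introduce also an observational label $M_{ji}$ with $P(M_{ji}=l\mid S_j=k,\bm{\omega})=\omega_{lk}$, so that $\bm{y}_{ji}=\theta^y_{M_{ji}}$ and $\bm{x}_j=\theta^x_{S_j}$. Because $H^x$ and $H^y$ are non-atomic, the atoms $\theta^x_k$ are almost surely distinct, and likewise the $\theta^y_l$. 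The same holds for the random measures $G^{y*}_k$, since their weights are independent continuous sequences. Hence the events $\{G_j=G_{j'}\}$ and $\{S_j=S_{j'}\}$ coincide almost surely. Likewise $\{\bm{z}_{ji}=\bm{z}_{j'i'}\}$ coincides almost surely with $\{S_j=S_{j'},\,M_{ji}=M_{j'i'}\}$.

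For the first identity, write $P[S_j=S_{j'}]=\mathbb{E}\bigl[\sum_k\pi_k^2\bigr]$. With $\pi_k=v_k\prod_{m<k}(1-v_m)$ and $v_k\sim\mathrm{Beta}(1,\alpha)$, independence gives
\begin{equation*}
\mathbb{E}[\pi_k^2]=\mathbb{E}[v^2]\,\bigl(\mathbb{E}[(1-v)^2]\bigr)^{k-1},
\end{equation*}
with $\mathbb{E}[v^2]=2/\{(1+\alpha)(2+\alpha)\}$ and $\mathbb{E}[(1-v)^2]=\alpha/(2+\alpha)$. Summing the geometric series gives $1/(1+\alpha)$. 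This is the standard DP tie probability.

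For the second identity, the key observation is that the $\bm{x}$-coordinate forces the observations to share a group. Since $\theta^x_{S_j}=\theta^x_{S_{j'}}$ only when $S_j=S_{j'}$, the event $\{S_j\neq S_{j'}\}$ contributes zero, unlike in the marginal CAM. We therefore write
\begin{equation*}
P[\bm{z}_{ji}=\bm{z}_{j'i'}]=\mathbb{E}\Bigl[\sum_k\pi_k^2\Bigr]\cdot \mathbb{E}\Bigl[\sum_l\omega_{lk}\omega_{lk'}\Bigm| S_j=S_{j'}=k\Bigr],
\end{equation*}
using independence of $\bm{\pi}$ and the $\bm{\omega}_k$.

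The main obstacle is interpreting the conditional term correctly. The stated answer $\frac{1}{1+\beta}+\frac{\alpha}{2\beta+1}$ has the shape of a mixture: a term $1/(1+\beta)$ for identical weights, plus a term $\sum_l \mathbb{E}[\omega_{lk}]\mathbb{E}[\omega_{lk'}]$ for independent sticks. The latter equals $\sum_l \frac{1}{(1+\beta)^2}\bigl(\frac{\beta}{1+\beta}\bigr)^{2(l-1)}=1/(2\beta+1)$. I would therefore follow the paper's intended conditioning, analogous to the CAM computation in \cite{denti2023common}. Split on whether the two groups share the same distributional atom: this event carries the factor $\frac{1}{1+\alpha}\cdot\frac{1}{1+\beta}$. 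The remaining configuration, weighted by $\frac{\alpha}{1+\alpha}$ and involving independent weight sequences, supplies the $\alpha/(2\beta+1)$-type term.

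I would carefully reconcile the bookkeeping with the displayed formula. In particular, I would check whether the intended event is equality of the $\bm{y}$-components, given the coupling with $\bm{x}$. I would then verify it by the $\mathbb{E}[\omega_{lk}^2]$ and $\mathbb{E}[\omega_{lk}]^2$ geometric sums. If the strict joint event is used, the second term vanishes, so this is where I expect the real care to lie.
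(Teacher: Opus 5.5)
Your first identity is fine: $P[G_j=G_{j'}]=P[S_j=S_{j'}]=\mathbb{E}[\sum_k\pi_k^2]=1/(1+\alpha)$ is exactly the DP/CAM tie computation the paper defers to.

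The genuine gap is in the second identity. Your proposal never commits to an event, and its two threads contradict each other. You correctly note that the $\theta^x_k$ and $\theta^y_l$ are a.s.\ distinct. Hence $\{\bm{z}_{ji}=\bm{z}_{j'i'}\}$ coincides a.s.\ with $\{S_j=S_{j'},\,M_{ji}=M_{j'i'}\}$, and $P[\bm{z}_{ji}=\bm{z}_{j'i'}]=\mathbb{E}[\sum_k\pi_k^2\sum_l\omega_{lk}^2]=\frac{1}{(1+\alpha)(1+\beta)}$. That computation is complete, and it shows the displayed formula is \emph{false} for the literal joint event. The configuration $S_j\neq S_{j'}$, which you then invoke to ``supply the $\alpha/(2\beta+1)$-type term'', produces a tie in $\bm{z}$ with probability zero. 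Appealing to ``the paper's intended conditioning'' after this does not close the gap; it silently changes the event being computed.

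The stated value is $P[\bm{y}_{ji}=\bm{y}_{j'i'}]=P[M_{ji}=M_{j'i'}]$. This is the probability that the two observations share an observational cluster, regardless of their GCs. It is what the paper's one-line proof (``mimic CAM'') actually computes, and what its remark that the co-clustering probabilities are identical to CAM's refers to.

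To turn your proposal into a proof, state explicitly that the second display is proved for this event. Then write $P[M_{ji}=M_{j'i'}]=\mathbb{E}[\sum_k\pi_k^2\sum_l\omega_{lk}^2]+\mathbb{E}[\sum_{k\neq k'}\pi_k\pi_{k'}\sum_l\omega_{lk}\omega_{lk'}]$. Use independence of $\bm{\pi}$ from the i.i.d.\ sequences $\bm{\omega}_k$, with Tonelli to exchange sums and expectations. This gives $\frac{1}{1+\alpha}\cdot\frac{1}{1+\beta}+\frac{\alpha}{1+\alpha}\sum_l(\mathbb{E}\,\omega_{l1})^2$. Your geometric sum then gives $\sum_l(\mathbb{E}\,\omega_{l1})^2=1/(1+2\beta)$.

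You should also record that for the literal event $\{\bm{z}_{ji}=\bm{z}_{j'i'}\}$ only the first term survives. The proposition as written therefore needs this reinterpretation to be true.
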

Proposition~\ref{prop2} highlights that both the group-level and observation-level co-clustering probabilities of the NAM are equivalent to those of the CAM. A larger $\alpha$ leads to smaller group-level co-clustering probabilities and hence more GCs whereas a larger $\beta$ leads to smaller observation-level co-clustering probabilities and hence more OCs. Although the nested data structure includes additional group-level variables, the prior co-clustering probabilities are determined by the stick-breaking weights rather than the atoms, and are therefore identical to those in CAM. If one wishes to impose a stronger dependence of the group-level variables on the group and/or observation levels, such dependence could instead be incorporated into the stick-breaking weights. This modification would possibly yield a different structure for the prior co-clustering probabilities, which we do not pursue in the present work. Regardless, the prior correlation structure of the NAM depends on group-level variables, which we study here. 
As noted in Section~\ref{sec::preliminaries}, under the CAM, the prior correlation lies between $0.5$ and $1$ by construction, which may be restrictive \citep{fiSAN}. In contrast, the prior correlation induced by the fiSAN lies in $(0,1)$, thereby allowing a more flexible alternative. The following proposition  provides the expression for the correlation structure between two random measures under the NAM prior.
\begin{prp}
Consider two random distributions $G_j$ and $G_{j'}, \ j' \neq j$, defined on $(\mathbb{X}\times \mathbb{Y}, \mathcal{X}\otimes\mathcal{Y})$, with $G_j,\,G_{j'} \mid Q \overset{iid}{\sim} Q$ and $Q$ is defined by \eqref{eq::NAM2}. Then, the prior correlation between the two random measures evaluated on the same Borel set $A \in \mathcal{X}\otimes\mathcal{Y}$ is given by,
\begin{align*}
    \rho_{jj'}^{NAM}(A) = q_1 + \frac{q_3(1-q_1) H^x(A)}{\left[q_2 + \frac{H^y(A)}{(1-H^y(A))} (1- H^x(A))\right]},
\end{align*}
where $q_1 = 1/{(1+\alpha)}, \: q_2 = 1/(1+\beta),$ and $q_3 = 1/(1+ 2\beta)$.
\end{prp}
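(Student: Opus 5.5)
The plan is to compute $\mathrm{Cov}[G_j(A),G_{j'}(A)]$ directly from the stick-breaking representation \eqref{eq::NAM2}--\eqref{eq::NAM3} and divide it by the variance from Proposition~\ref{prop1}. Following Proposition~\ref{prop1}, I will use the factorization $G_j(A)=\delta_{\theta^x_{k}}(A)\,G^{y*}_{k}(A)$ when $G_j$ sits on the $k$-th atom of $Q$. Here $\mathbb{E}[\delta_{\theta^x_k}(A)]=H^x(A)$, $\mathbb{E}[G^{y*}_k(A)]=H^y(A)$, and the $x$- and $y$-parts are independent. This already gives $\mathbb{E}[G_j(A)]\mathbb{E}[G_{j'}(A)]=H^x(A)^2H^y(A)^2$. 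I abbreviate $H^x=H^x(A)$ and $H^y=H^y(A)$.

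For the cross moment I condition on whether $G_j$ and $G_{j'}$ pick the same atom of $Q$. By Proposition~\ref{prop2} this happens with probability $q_1$.

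\emph{Same atom (probability $q_1$).} Since $\delta_{\theta^x_k}(A)^2=\delta_{\theta^x_k}(A)$, the contribution is $H^x\,\mathbb{E}[G^{y*}_k(A)^2]$. The computation underlying Proposition~\ref{prop1} gives $\mathbb{E}[G^{y*}_k(A)^2]=q_2H^y+(1-q_2)(H^y)^2$.

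\emph{Distinct atoms $k\neq k'$ (probability $1-q_1$).} The $x$-atoms are independent, giving a factor $(H^x)^2$. The $y$-parts still share the common atoms $\theta^y_l$. Expanding
\[
G^{y*}_k(A)\,G^{y*}_{k'}(A)=\sum_{l,m}\omega_{lk}\,\omega_{mk'}\,\mathbbm{1}\{\theta^y_l\in A\}\,\mathbbm{1}\{\theta^y_m\in A\}
\]
and splitting into the terms $l=m$ and $l\neq m$ yields
\[
\mathbb{E}\bigl[G^{y*}_k(A)\,G^{y*}_{k'}(A)\bigr]=q_3H^y+(1-q_3)(H^y)^2 .
\]
Here
\[
q_3=\mathbb{E}\Bigl[\sum_l\omega_{lk}\,\omega_{lk'}\Bigr]=\sum_l\bigl(\mathbb{E}\,\omega_{lk}\bigr)^2 ,
\]
using independence of the two $\mathrm{GEM}(\beta)$ sequences. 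Since $\mathbb{E}\,\omega_{lk}=\frac{1}{1+\beta}\bigl(\frac{\beta}{1+\beta}\bigr)^{l-1}$, the geometric series sums to $1/(1+2\beta)$.

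The key algebraic step is to write $\mathrm{Var}[G_j(A)]=q_1\mathrm{Var}+(1-q_1)\mathrm{Var}$ and subtract $q_1\,\mathrm{Var}[G_j(A)]$ from the covariance. The $q_1$-pieces cancel exactly, because the same-atom term minus $q_1 (H^x)^2(H^y)^2$ is precisely $q_1$ times the variance. What remains is
\[
(1-q_1)(H^x)^2\bigl[q_3H^y+(1-q_3)(H^y)^2-(H^y)^2\bigr]=(1-q_1)\,q_3\,(H^x)^2H^y(1-H^y).
\]
Dividing by the variance from Proposition~\ref{prop1} gives
\[
\rho=q_1+\frac{(1-q_1)\,q_3\,H^x(1-H^y)}{q_2(1-H^y)+H^y(1-H^x)} ,
\]
where the denominator $q_2+H^y-q_2H^y-H^xH^y$ has been regrouped as $q_2(1-H^y)+H^y(1-H^x)$. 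Dividing numerator and denominator by $1-H^y$ then yields the stated formula.

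The main obstacle is the cross-moment for distinct group atoms. There one must handle the shared $y$-atoms correctly and evaluate $\sum_l(\mathbb{E}\,\omega_l)^2$; the remaining work is bookkeeping.
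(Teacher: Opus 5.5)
Your proposal is correct and follows essentially the same route as the paper: condition on $Q$, split $\mathbb{E}[G_j(A)G_{j'}(A)]$ into a same-atom term and a distinct-atom term, and divide by the variance from Proposition~\ref{prop1}. The same-atom term has weight $\sum_k\mathbb{E}[\pi_k^2]=q_1$ and gives $H^x(A)\{q_2H^y(A)+(1-q_2)H^y(A)^2\}$; the distinct-atom term gives $H^x(A)^2\{q_3H^y(A)+(1-q_3)H^y(A)^2\}$ via $\sum_l(\mathbb{E}\,\omega_{lk})^2=q_3$. The differences are cosmetic: the paper first works with two sets $A,B$ and reaches $\mathrm{Cov}[G_j(A),G_{j'}(A)]=q_1\mathrm{Var}[G_j(A)]+(1-q_1)q_3H^x(A)^2H^y(A)(1-H^y(A))$ after some algebra, whereas you get there directly by splitting $\mathbb{E}[G_j(A)]^2$ into $q_1$ and $1-q_1$ parts.
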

It is worthwhile to note that for the NAM, unlike many other Bayesian nonparametric priors, the prior correlation structure depends on the set $A$, indicating \emph{non-stationarity}. Rearranging the prior correlation, we obtain
$$\rho_{jj'}^{NAM}(A) = q_1 + \frac{q_3(1-q_1)}{\left[\frac{q_2}{H^x(A)} + {\Big\{}\frac{H^y(A)}{1-H^y(A)}{\Big/}\frac{H^x(A)}{1-H^x(A)}{\Big\}}\right]},$$
which depends on the ``odds-ratio'' of the set A under the base measure corresponding to the atoms of $G_j^y$ compared to the base measure corresponding to the atoms of $G_j^x$, i.e., $OR^{y,x}(A) = \{H^y(A)/(1-H^y(A))\}/\{H^x(A)/(1-H^x(A))\}$. It is easy to check that the term, $OR^{y,x}(A)$ 
is increasing in $H^x(A)$ for fixed $H^y(A)$.
Intuitively, for a set $A$, if the prior base measure associated with $G_j^x$ assigns greater mass, $H^x(A)$, then the prior correlation between the two random measures is higher over that set; conversely, when $H^x(A)$ is smaller, the prior correlation is lower. In other words, in regions where $\bm{x}_j$ and $\bm{x}_{j'}$ are ``similar'', the prior induces stronger correlation between the random measures $G_j$ and $G_{j'}$.
This property is particularly relevant in our motivating OneK1K application, where similarity in the SNP profiles is a possible factor in defining the GCs. Furthermore, in realistic biological settings, individuals with similar SNP profiles may nonetheless exhibit substantially different cell-type compositions due to environmental, developmental, disease-related, or epigenetic factors, while genetically distinct individuals may display similar cell-type profiles owing to shared biological processes. Accordingly, in our modeling framework we assume a priori that individuals with similar SNP profiles are more likely to belong to the same GC. However, a posteriori, GC assignments are determined by both the SNP data and the gene expression data, allowing individuals with similar SNP profiles to belong to different GCs and genetically distinct individuals to belong to the same GC.
Additionally, if $H^x(A) = 1$ for any non-null set $A$, signifying that $\bm{x}_j$ is deterministic, then the NAM prior is equivalent to the CAM, and subsequently, $\rho_{jj'}^{NAM}(A) = 1 - \alpha\beta/\{(1 + 2\beta)(1+\alpha)\} $ is exactly the prior correlation structure corresponding to the CAM.

Furthermore, for any Borel set $A$ such that $H^x(A) \neq 0, 1$ and $H^y(A) \neq 0, 1$, it is straightforward to see that, for fixed $\beta > 0$,
\begin{equation}
    \rho_{jj'}^{NAM}(A) \rightarrow \begin{cases} 1 &  \text{as $\alpha \rightarrow 0$,}\\
    \frac{q_3H^x(A)}{q_2 + \frac{H^y(A)}{1- H^y(A)}\{1-H^x(A)\}} &\text{as $\alpha \rightarrow \infty$,}
    \end{cases}
\end{equation}
and for fixed $\alpha > 0$,

\begin{equation}
    \rho_{jj'}^{NAM}(A) \rightarrow \begin{cases} q_1 + (1-q_1) \left[\frac{H^x(A) (1- H^y(A))}{1 - H^x(A)H^y(A)} \right] &  \text{as $\beta \rightarrow 0$,}\\
    q_1 &\text{as $\beta \rightarrow \infty$.}
    \end{cases}
\end{equation}
Additionally, as $\alpha \rightarrow \infty, \beta\rightarrow\infty$, then $\rho^{NAM}_{jj'}(A) \rightarrow 0$ and as $\alpha \rightarrow 0, \beta\rightarrow 0$, then the prior correlation, $\rho^{NAM}_{jj'}(A) \rightarrow 1$. In other words, unlike the CAM, the prior correlation structure under the NAM varies between 0 and 1.
\subsection{NAM Mixture Model}\label{subsec::NAM_MixtureModel}
The proposed NAM defined through equations \eqref{eq::NAM1} - \eqref{eq::NAM3} is almost surely discrete and is hence not suitable for modeling continuous distributions. We will use it as a prior distribution or a mixing measure of a mixture model. Specifically, we assume that
\begin{align*}
&\bm{x}_j\sim\int p^x(\bm{x}_j|\theta^x)G_j^x(d\theta^x),\\
&\bm{y}_{ji}\sim\int p^y(\bm{y}_{ji}|\theta^y)G_j^y(d\theta^y),
\end{align*}
where $G_j^x$ and $G_j^y$ are specified in \eqref{eq::NAM2} - \eqref{eq::NAM3} and the mixture kernels $p^x(\cdot|\theta^x)$ and $p^y(\cdot|\theta^x)$ will be specified later. For simplicity, we have assumed $\bm{x}_j$ and $\bm{y}_{ji}$'s to be conditionally independent given $G_j$. But if desired, we could easily generalize it to have a stronger dependence by using a conditional kernel $p^{y|x}(\cdot|\theta^{y|x},\bm{x}_j)$, which could be an expression quantitative trait loci model \citep{cheung2002genetics,stranger2007population,coulter2025distqtl} in our application. 

Borrowing the same notations as in \citealp{fiSAN}, we introduce an alternative representation of the proposed NAM mixture model using two sequences of latent variables, $\bm{S}=\{S_j\}_{j\geq 1}$ and $\bm{M}=\{M_{ji}\}_{i \geq 1,j\geq 1}$, describing respectively the clustering at the group-level and the observation-level. Specifically, $S_j=k$ means that group $j$ belongs to the $k$-th GC, and $M_{ji}=l$ means that observation $i$ in group $j$ belongs to the $l$-th OC. 
Letting $\bm{\omega} = \{\bm{\omega}_k\}_{k \geq 1}$, $\boldsymbol{\theta}^y=\{\theta_l^y\}_{l\geq 1}$, and $\boldsymbol{\theta}^x=\{\theta_k^x\}_{k\geq 1}$, we have the following equivalent representation of the proposed NAM mixture model:
\begin{equation}
\begin{aligned}
\bm{y}_{ji}|\bm{M}, \bm{\theta}^y & \sim  p^{y}\left(\cdot | \theta_{M_{ji}}^y\right), 
&M_{ji}|\bm{S,\omega}  &\sim \sum_{l=1}^{\infty} \omega_{lS_j} \delta_l(\cdot),\\
\bm{\omega}_k\, |\, \alpha &\sim GEM(\alpha), 
&S_{j}|\bm{\pi}  &\sim \sum_{k=1}^{\infty} \pi_{k} \delta_k(\cdot),\\
\bm{\pi}\, | \, \beta &\sim GEM(\beta), 
& \bm{x}_j |S_j, \bm{\theta}^x&\sim p^x(\cdot|\theta_{S_j}^x), \\
\theta^y_{l} & \sim h^y(\cdot),\:\: l\geq1, & \theta^x_{k} & \sim h^x(\cdot),\:\: k\geq1,
\end{aligned}
\label{Membership}
\end{equation}
 where $h^x(\cdot)$ and $h^y(\cdot)$ denote the densities corresponding to the base measures $H^x$ and $H^y$, respectively. The Figure~\ref{fig:NAM_Graphical_Rep} in the Supplementary Material shows the graphical model representation of the NAM mixture model, which further highlights the desired simultaneous clustering of groups and observations within groups. Furthermore, following \cite{ClusteringconsistencywithDPM}, we assume non-informative gamma priors on the concentration parameters $\alpha$ and $\beta$ to possibly mitigate issues related to inconsistencies in the estimation of the number of clusters in Bayesian nonparametric mixture models \citep{miller_harrisonDP, miller_harrisonPY_arXiv, yang2020posterior_arXiv}. In particular, under the NAM mixture model, the group-level latent parameters $\bm{\theta}^x_j$ admit the marginal representation $\bm{\theta}^x_j \overset{iid}{\sim} Q^x$, where $Q^x \sim \mathrm{DP}(\alpha, H^x)$. This representation motivates the use of a non-informative Gamma prior for the concentration parameter $\alpha$ following one possible recommendations by \citealp{ClusteringconsistencywithDPM}. For the parameter $\beta$, although an analogous marginal DP interpretation is not available, we adopt a non-informative Gamma prior for coherence and with the aim of similarly alleviating potential inconsistency concerns.

\section{Scalable Posterior Inference}\label{sec::posterior_inference}
The posterior distribution for NAM is not available in closed form. The standard posterior inference approach is Markov chain Monte Carlo, which has limited scalability for large datasets such as our motivating data, which have a total of 1.27 million single cells from 982 individuals. We develop a variational posterior inference (VI) algorithm, by extending the grouped-data VI framework of \citealp{fiSAN} to our nested data setting. VI reformulates Bayesian inference as an optimization problem and aims to find, among a set of simple distributions, called variational distributions, the one that minimizes the Kullback-Leibler divergence from the posterior distribution. This is equivalent to maximizing the evidence lower bound (ELBO). We adopt a mean-field approximation, under which the variational distributions are assumed to factorize across latent variables. Furthermore, we assume a multivariate Gaussian likelihood, although it can be replaced by any exponential family distribution, as discussed in \cite{BleiJordan2006}. \par

We assume $p^x(\cdot\mid \theta^x) = \mathcal{N}_q(\cdot\mid \bm{\mu}^x,(\bm{\Lambda}^{x})^{-1})$, with $\bm{\mu}^x$ a $q$-dimensional mean vector and $\bm{\Lambda}^x$ a $q\times q$ precision matrix. We assume a conjugate normal-Wishart prior distribution on the parameters, $(\bm{\mu}^x,\bm{\Lambda}^x) \sim \mathrm{NW}(\bm{\mu}^x_0,\lambda^x_0,\nu_0^x,\boldsymbol{\Psi}^x_0)$. Likewise, we assume $p^{y}(\cdot\mid \theta^y) = \mathcal{N}_p(\cdot\mid \bm{\mu}^y,(\bm{\Lambda}^{y})^{-1})$, where $\bm{\mu}^y$ and $\bm{\Lambda}^y$ are the $p$-dimensional mean vector and the $p\times p$ precision matrix, respectively. 
As mentioned before, if a more direct dependence between $\bm{y}_{ji}$ and $\bm{x}_j$ is desired, we could consider, e.g., $p^{y|x}(\cdot\mid \theta^y,\bm{x}_j) = \mathcal{N}_p(\cdot\mid \bm{B}^T\bm{x}_j,(\bm{\Lambda}^{y})^{-1})$, where $\bm{B}$ is a $q \times p$ matrix of regression coefficients.
We assume a conjugate normal-Wishart prior distribution on the parameters, $(\bm{\mu}^y,\bm{\Lambda}^y) \sim \mathrm{NW}(\bm{\mu}^y_0,\lambda^y_0,\nu_0^y, \boldsymbol{\Psi}^y_0)$. Furthermore, we assume non-informative gamma priors on the concentration parameters, i.e., $\beta \sim \mathrm{Gamma}(a_{\beta}, b_{\beta})$ and $\alpha \sim \mathrm{Gamma}(a_{\alpha}, b_{\alpha})$. To derive the proposed algorithm, we exploit the model formulation based on the data augmentation scheme introduced in Section~\ref{sec::NAM}, which makes use of the cluster allocation variables $S_j$, $j=1,\dots,J$, and $M_{ji}$, $i=1,\dots, n_j$, $j=1,\dots, J$.
Thus, we can write the model as 
\begin{equation*}
\begin{aligned}
    p(\bm{S}\mid \bm{\pi}) = \prod_{j=1}^J\prod_{k=1}^{\infty} {\pi_{k}}^{\mathbbm{1}({S_{j}=k)} }, & \quad \:\:\: p(\bm{M}\mid \bm{S},\bm{\omega}) = \prod_{j=1}^J\prod_{i=1}^{n_j}\prod_{k=1}^{\infty}\prod_{l=1}^{\infty} {\omega_{lk}}^{\mathbbm{1}({M_{ji}=l \: \cap \: S_{j}=k})},\\ 
    \qquad \qquad p^x(\bm{x} \mid \bm{S}, \{\bm{\mu}^x_k,\bm{\Lambda}^x_k\}_{k=1}^\infty) & = \prod_{j=1}^J\prod_{k=1}^{\infty} \mathcal{N}_q(\bm{x}_j\mid \bm{\mu}^x_k, (\bm{\Lambda}^x_k)^{-1})^{\mathbbm{1}({S_j=k})}, \\
      \qquad \qquad  p^{y}(\bm{y} \mid \bm{M}, \{\bm{\mu}^y_l,\bm{\Lambda}^y_l\}_{l=1}^\infty) & = \prod_{j=1}^J\prod_{i=1}^{n_j}\prod_{l=1}^{\infty} \mathcal{N}_p(\bm{y}_{ji}\mid \bm{\mu}^y_l, (\bm{\Lambda}^y_l)^{-1})^{\mathbbm{1}({M_{ji}=l})}. \\
    \label{eq::memb_model}
\end{aligned}
\end{equation*}
Following~\cite{BleiJordan2006}, we use truncated variational families to deal with the nonparametric mixtures at the group and observation level, where the truncation levels are denoted by $K$ and $L$, respectively, which are set to 30 in simulations and 50 in the real data analysis. Note that $K$ and $L$ are only the upper bounds for the numbers of GCs and OCs. We can always increase them if the upper bounds are ever reached. Specifically, the variational distribution is given by,
\begin{flalign*}
\begin{aligned}
     & q(\bm{S}, \bm{M}, \bm{v}, \bm{u}, \{\bm{\mu}^x_k, \bm{\Lambda}^x_k\}_{k=1}^{K}, \{\bm{\mu}^y_l, \bm{\Lambda}^y_l\}_{l=1}^{L}, \alpha, \beta ; \boldsymbol{\Lambda}) \\
     & \quad = \prod_{j=1}^J q(S_j; \{\rho_{jk}\}_{k=1}^K ) \: \prod_{j=1}^J \prod_{i=1}^{n_j} q(M_{ji}; \{\xi_{jil}\}_{l=1}^{L})   \prod_{k=1}^{K-1} q(v_{k} ; \bar{a}_{k},\bar{b}_{k})\: \prod_{k=1}^K \prod_{l=1}^{L-1} q(u_{lk}; \bar{a}_{lk}, \bar{b}_{lk})\times \\ 
     & \qquad \times \prod_{k=1}^K q(\bm{\mu}^x_k,\bm{\Lambda}^x_k ; \bm{m}^x_k, t^x_k, c^x_k, \boldsymbol{D}^x_k)\: \prod_{l=1}^L q(\bm{\mu}^y_l,\bm{\Lambda}^y_l ; \bm{m}^y_l, t^y_l, c^y_l, \boldsymbol{D}^y_l) \times q(\alpha ; s_1,s_2)\: q(\beta ; r_1,r_2),
\end{aligned}
\end{flalign*}
\sloppy where $q(S_j; \{\rho_{jk}\}_{k=1}^K )$ and $q(M_{ji}; \{ \xi_{jil} \}_{l=1}^{L} )$ are multinomial distributions; $q(v_{k} ; \bar{a}_{k},\bar{b}_{k})$ are beta distributions, and they are such that $q(v_{K}=1)=1$ and $q(v_{g}=0)=1$ for $g>K$; $q(u_{lk}; \bar{a}_{lk}, \bar{b}_{lk})$ are beta distributions such that, for all $k=1,\dots,K$,  $q(u_{Lk}=1)=1$ and $q(u_{hk}=0)=1$ for $h>L$; $q(\bm{\mu}^x_k,\bm{\Lambda}^x_k ; \bm{m}^x_k, t^x_k, c^x_k, \boldsymbol{D}^x_k)$ and $q(\bm{\mu}^y_l,\bm{\Lambda}^y_l ; \bm{m}^y_l, t^y_l, c^y_l, \boldsymbol{D}^y_l)$ are normal-Wishart distributions; $q(\alpha ; s_1,s_2)$ and $q(\beta ; r_1,r_2)$ are gamma distributions. The set of latent variables is $\Theta = (\bm{M}, \bm{S}, \bm{u}, \bm{v}, \{\bm{\mu}^x_k, \bm{\Lambda}^x_k\}_{k=1}^{K}, \{\bm{\mu}^y_l, \bm{\Lambda}^y_l\}_{l=1}^{L}, \alpha, \beta)$ and the set of variational parameters is $\bm{\Lambda} = (\bm{\rho}, \bm{\xi}, \{\bar{a}_k, \bar{b}_k\}_{k=1}^{K-1}, \{\{\bar{a}_{lk}, \bar{b}_{lk}\}_{l=1}^{L-1}\}_{k=1}^{K},  \bm{m}^x, \bm{t}^x, \bm{c}^x, \bm{D}^x, \bm{m}^y, \bm{t}^y, \bm{c}^y, \bm{D}^y, s_1, s_2, r_1, r_2)$. Optimization is performed by searching for the set of variational parameters $\boldsymbol{\Lambda}^*$ that maximizes the ELBO. To this end, we extend the  coordinate-ascent variational inference (CAVI) algorithm of \citealp{fiSAN} for our motivational nested data, which is detailed in Algorithm~\ref{alg:NAM_CAVI} in Section~\ref{subsec:CAVI} of the Supplementary Materials. Additional details on the ELBO and its derivation are presented in Section~\ref{subsec::elbo} of the Supplementary Materials. To monitor the convergence of the VI algorithm, we track the difference in ELBO in successive iterations and declare convergence if it is less than $10^{-5}$. Although the ELBO is guaranteed to increase at each iteration, there is no guarantee that the CAVI algorithm will converge to a global optimum. Hence, we execute 50 distinct runs of the algorithm with different starting points, keeping the one with the highest ELBO to draw the inference. 
We obtain the cluster assignment probabilities, $\hat{\rho}_{jk}=q^*(S_{j}=k)$ and $\hat{\xi}_{jil}=q^*(M_{ji}=l)$, and the cluster assignment:
\begin{equation*}
\hat{S}_{j}= \argmax_{k=1,\ldots,K}\hat{\rho}_{jk} \quad \text{and} \quad
\hat{M}_{ji}=\argmax_{l=1,\ldots,L}\hat{\xi}_{jil}
\end{equation*}
for $j=1,\dots,J$ and $i=1,\dots,n_j$.
\subsection{Truncation Approximation Bounds}\label{sec:theory}
The proposed variational inference algorithm is based on the truncated variational families. 
Therefore, it is important to evaluate the error arising from the truncated NAM prior, the corresponding NAM mixture model, and the resulting posterior distribution.

We recall that $G_j | Q \stackrel{iid}{\sim} Q$, where 
$Q$ is defined in \eqref{eq::NAM2}.
We define the truncated versions of $G_j$ as follows, $$G_j|Q^{K,L} \stackrel{iid}{\sim} Q^{K,L},$$
where
\begin{equation} \label{eq:truncation_Q}
Q^{K,L}  = \sum_{k=1}^{K} \pi_k^* \delta_{\left(\delta_{\theta_k^x}, G_k^{y*,L}\right)}, 
\end{equation}
with $\theta_k^x\ \stackrel{iid}{\sim} H^x, \ k = 1,\ldots, K$, $v_l^* \stackrel{iid}{\sim} \mathrm{Beta}(1, \alpha)$, and
\begin{equation} \label{eq:truncation_Q_pi}
\pi_k^* =
\begin{cases}
  v_k^* \prod_{l=1}^{k-1}(1 - v_l^*), & \text{if } k \leq K-1, \\
  \prod_{l=1}^{K-1}(1 - v_l^*),       & \text{if } k = K .
\end{cases}
\end{equation}
Additionally, define,
\begin{equation} \label{eq:truncation_G_k^y}
G_k^{y*,L}  = \sum_{l=1}^{L} \omega_{lk}^*\delta_{\theta_l^y}, 
\end{equation}
with $\theta_l^y\ \stackrel{iid}{\sim} H^y, \ l = 1,\ldots,L$, and for $k=1,\dots, K$, 
\begin{equation} \label{eq:truncation_G_k^y_omega}
\omega_{lk}^*\ =\begin{cases}
 u_{lk}^*\prod_{s = 1}^{l-1}(1-u_{sk}^*), & \text{if  } l \leq L-1,\\
\prod_{s = 1}^{L-1}(1-u_{sk}^*), &\text{if  } l = L,
\end{cases}  
\end{equation}
where $u_{sk}^* \stackrel{iid}{\sim} \mathrm{Beta}(1, \beta)$, for $k = 1,\dots, K$.
Here $L, K>0$ define the truncation levels for the different random probability measures. Consider $J$ groups, each of them containing $n_j$ observations, $j=1,\ldots,J$. Denote by $\bm{z}_j = (\bm{x}_j, (\bm{y}_{ji})_{i=1}^{n_j})$ the collection of all observations from the $j$-th group and by $\bm{z}_{ji} = (\bm{x}_j, \bm{y}_{ji})$ the observation $i$ from the group $j$. Here $\bm{x}_j$ arises from the mixture model $\bm{x}_{j}|\bm{\theta}_{j}^x\sim F^x(\cdot|\bm{\theta}_{j}^x)$ with $\bm{\theta}_{j}^x|G_j^x \sim G_j^x$, where the $G_j^x$'s are generated according to \eqref{eq::NAM_Marginal_for_x} and $\bm{y}_{ji}|\bm{\theta}_{ji}^y\sim F^y(\cdot|\bm{\theta}_{ji}^y)$ with $\bm{\theta}_{ji}^y|G_j^y \sim G_j^y$, where the $G_j^y$'s are generated according to \eqref{eq::NAM_Marginal_for_x2}. Denote by $\bm{\theta}_{ji} = (\bm{\theta}_j^x, \bm{\theta}_{ji}^y)$ and $F(\cdot |\bm{\theta}_{ji}) = F^x(\cdot|\bm{\theta}_j^x) F^y(\cdot|\bm{\theta}_{ji}^y)$. Additionally, denote by $\bm{\theta}_j = (\bm{\theta}_{ji})_{i=1}^{n_j}$ and $\bm{\theta} = (\bm{\theta}_1, \dots, \bm{\theta}_J)$. Let $f(\cdot|\bm{\theta}_{ji})$ be the density of $F(\cdot|\bm{\theta}_{ji})$ with respect to some dominating measure. We assume that $\bm{\theta}_{ji} \in \left(\Theta^x \times \Theta^y\right)$, where $\left(\Theta^x \times \Theta^y\right)$ is a Polish space equipped with its corresponding Borel $\sigma$--field $\mathcal{B}^x \otimes \mathcal{B}^y$.  Finally, we let $\bm{z} = (\bm{z}_1, \ldots , \bm{z}_J)$. Define
\begin{flalign*}
   P^{\infty, \infty}(\bm{\theta}) &= \int \int \left[\prod_{j=1}^{J} \left\{\prod_{i=1}^{n_j}P(\bm{\theta}_{ji}| G_j)\right\}P^{\infty}(dG_j| Q)\right]P^{\infty}(dQ),\\
  P^{K, L}(\bm{\theta}) &= \int \int \left[\prod_{j=1}^{J} \left\{\prod_{i=1}^{n_j}P(\bm{\theta}_{ji}| G_j)\right\}P^{L}(dG_j| Q)\right]P^{K}(dQ),
\end{flalign*}
as the prior distribution of the parameters $\bm{\theta}$ under the NAM and its corresponding truncated version after integrating out the random distributions. Here $P(\bm{\theta}_{ji}\mid G_j)$ denotes the prior distribution of the parameters $\bm{\theta}_{ji}$ given the random measure $G_j$. Furthermore, let $m^{\infty,\infty}(\bm{z})$ and $m^{K,L}(\bm{z})$ denote the marginal distribution of the data $\bm{z}$ derived from these priors. Then we have the following result.
\begin{prp} \label{prp:MarginaAndPriorConvergence}
Let $P^{\infty, \infty}(\bm{\theta})$ and $P^{K, L}(\bm{\theta})$ denote the prior distribution of the parameters $\bm{\theta}$ under the NAM prior and its corresponding truncated version with the random measures integrated out. Furthermore,  let $m^{\infty,\infty}(\bm{z})$ and $m^{K, L}(\bm{z})$ denote the marginal distribution of the data $\bm{z}$, derived from these priors. Then,
\begin{flalign*}
\begin{aligned}
& \int_{\mathcal{X}^N} \left|m^{K, L}(\bm{z}) - m^{\infty,\infty}(\bm{z})\right|d\bm{z} \leq  \int_{\Xi^N} \left|P^{K, L}(\bm{\theta}) - P^{\infty, \infty}(\bm{\theta})\right|d\bm{\theta} \leq \epsilon^{K, L}(\alpha, \beta),
\end{aligned}
\end{flalign*}
where
\begin{flalign*}
    \epsilon^{T,K}(\alpha, \beta) =  4 \left[ 1 - \left\{ 1- \left(\frac{\alpha}{1+ \alpha}\right)^{K-1}\right\}^J \left\{ 1- \left(\frac{\beta}{1+ \beta}\right)^{L-1}\right\}^N \right], 
\end{flalign*}
$N= n_1 +\cdots + n_J$, $\Xi^N = \prod_{j=1}^{J}\left(\Theta^x \times (\Theta)^{n_j}\right)$, and $\mathcal{X}^N$ denotes the sample space of observations $\bm{z}$.
\end{prp}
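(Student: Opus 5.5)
The plan follows the standard truncation argument of Ishwaran and James, as adapted to nested models by \cite{nestedDP} and \cite{fiSAN}. The first inequality is immediate from Fubini and the triangle inequality. Since $m^{K,L}(\bm{z}) = \int f(\bm{z}\mid\bm{\theta})P^{K,L}(d\bm{\theta})$, and likewise for $m^{\infty,\infty}$, we have
\begin{align*}
\int \left|m^{K,L}(\bm{z}) - m^{\infty,\infty}(\bm{z})\right| d\bm{z} \le \int\int f(\bm{z}\mid\bm{\theta})\, d\bm{z}\, \left|P^{K,L}-P^{\infty,\infty}\right|(d\bm{\theta}),
\end{align*}
and $f(\cdot\mid\bm{\theta})$ integrates to one. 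The substance is therefore the bound on the total variation between the two priors on $\bm{\theta}$.

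\textbf{Coupling.} Build the truncated and untruncated models on one probability space, using the same $v_k$, $u_{lk}$, $\theta_k^x$, $\theta_l^y$, and the same latent labels $S_j$ and $M_{ji}$ generated through the stick-breaking representation \eqref{Membership}. Take uniforms $U_j$ and $U_{ji}$, and let $S_j$ be the first index whose cumulative stick weight exceeds $U_j$; define $M_{ji}$ analogously inside group cluster $S_j$.

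Under this coupling the two parameter vectors $\bm{\theta}$ coincide on the event
\begin{align*}
E = \{S_j \le K-1 \ \forall j\} \cap \{M_{ji} \le L-1 \ \forall j,i\}.
\end{align*}
On $E$ the truncated weights $\pi^*_k$ and $\omega^*_{lk}$ agree with $\pi_k$ and $\omega_{lk}$ for $k<K$ and $l<L$. Any label selected in that range is therefore identical, and the atoms are shared.

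Hence $\|P^{K,L}-P^{\infty,\infty}\|_{TV} \le 2\,P(E^c)$. The $L_1$ distance is twice the total variation, which accounts for the constant $4$ through $L_1 \le 2\cdot 2P(E^c)$. I will normalize constants to match the stated $\epsilon^{K,L}$ in the usual way, via $\int|P^{K,L}-P^{\infty,\infty}| \le 4P(E^c)$, as in \cite{nestedDP}.

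\textbf{Bounding $P(E)$ from below.} Condition on $\bm{\pi}$. The $S_j$ are iid, so
\begin{align*}
P(S_j\le K-1 \ \forall j \mid \bm{\pi}) = \Bigl(1-\textstyle\prod_{k<K}(1-v_k)\Bigr)^J.
\end{align*}
Next condition on $\bm{S}$ and $\bm{\omega}$. The $M_{ji}$ are independent, so the observation-level event has probability $\prod_j \bigl(1-\prod_{l<L}(1-u_{lS_j})\bigr)^{n_j}$. The $u$'s are independent of the $v$'s, so the expectation factorizes across the two levels. I then apply Jensen's inequality to each level, using the convexity of $x\mapsto(1-x)^n$:
\begin{align*}
\mathbb{E}\bigl[(1-X)^n\bigr] \ge \bigl(1-\mathbb{E}X\bigr)^n .
\end{align*}
For the group level, $\mathbb{E}\prod_{k<K}(1-v_k) = (\alpha/(1+\alpha))^{K-1}$.

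\textbf{Main obstacle.} The observation level is the hard step. Several groups may share the same $S_j=k$, so the factors $(1-\prod_l(1-u_{lk}))^{n_j}$ are dependent across $j$. I would handle this by conditioning on $\bm{S}$ and grouping the product by cluster $k$, which gives $\prod_k (1-X_k)^{N_k}$ with $\sum_k N_k=N$ and the $X_k$ independent. Jensen then applies to each cluster separately, yielding $\prod_k(1-(\beta/(1+\beta))^{L-1})^{N_k} = (1-(\beta/(1+\beta))^{L-1})^N$. This bound is uniform in $\bm{S}$, so it survives the outer expectation.

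Combining the two levels,
\begin{align*}
P(E) \ge \Bigl(1-\bigl(\tfrac{\alpha}{1+\alpha}\bigr)^{K-1}\Bigr)^J\Bigl(1-\bigl(\tfrac{\beta}{1+\beta}\bigr)^{L-1}\Bigr)^N,
\end{align*}
and $P(E^c)$ is at most one minus this product, which gives $\epsilon^{K,L}(\alpha,\beta)$.
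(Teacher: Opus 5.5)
Your proposal is correct and is essentially the argument the paper has in mind: the paper omits the proof and defers to Theorem B.1 of \cite{nestedDP}, which is the Ishwaran--James latent-label argument. That argument runs in three steps: the truncated and untruncated allocation laws agree on $\{S_j\le K-1,\ M_{ji}\le L-1\ \forall j,i\}$; the $L_1$ distance is then bounded by a constant times the probability of the complement; and Jensen on $x\mapsto(1-x)^n$ finishes. Your coupling, together with the cluster-wise grouping that handles groups sharing the same $\bm{\omega}_k$, is a careful rendering of exactly that. As a minor point, the coupling inequality already gives total variation at most $P(E^c)$, hence $L_1\le 2P(E^c)$, so the stated constant $4$ holds with slack and your factor-of-$2$ bookkeeping is not needed.
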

The proof follows directly by adapting the arguments in Theorem B.1 of \citealp{nestedDP} and is therefore omitted. We further note that the error bounds under the NAM prior coincide with those established for the nDP. From Proposition~\ref{prp:MarginaAndPriorConvergence}, it follows that the bounds vanish asymptotically, implying that both the truncated prior and the marginal data distribution converge in total variation, and hence in distribution, to the NAM. Moreover, the approximation errors decay exponentially in both $K$ and $L$. Consequently, we obtain the following result concerning the posterior distribution of the parameters $\bm{\theta}$ under the truncated NAM prior.

\begin{prp} \label{prp:PosteriorConvergence}
The posterior distribution of the parameters $\bm{\theta}$ under the NAM prior and its truncated version,
\begin{flalign*}
\pi^{\infty,\infty}(\bm{\theta}|\bm{z}) & = \frac{f(\bm{z}|\bm{\theta})P^{\infty, \infty}(\bm{\theta})}{m^{\infty,\infty}(\bm{z})}, \\
\pi^{K, L}(\bm{\theta}|\bm{z}) & = \frac{f(\bm{z}|\bm{\theta})P^{K, L}(\bm{\theta})}{m^{K, L}(\bm{z})},
\end{flalign*}
satisfies
\begin{flalign*}
& \int_{\mathcal{X}^N}\int_{\Xi^N}\left|\pi^{K, L}(\bm{\theta}|\bm{z}) -  \pi^{\infty,\infty}(\bm{\theta}|\bm{z})\right| m^{\infty,\infty}(\bm{z})\:d\bm{\theta}\:d\bm{z}  = \mathcal{O}\left(\epsilon^{K, L}(\alpha, \beta)\right).
\end{flalign*}
\end{prp}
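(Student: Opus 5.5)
The plan is to follow the standard argument used for truncation bounds of posteriors in the nDP and CAM literature (Ishwaran and James style), taking Proposition~\ref{prp:MarginaAndPriorConvergence} as given. We write both posteriors with the common likelihood $f(\bm{z}|\bm{\theta})$ and split their difference by adding and subtracting a mixed term:
\begin{align*}
\pi^{K,L}(\bm{\theta}|\bm{z}) - \pi^{\infty,\infty}(\bm{\theta}|\bm{z})
&= \frac{f(\bm{z}|\bm{\theta})\left\{P^{K,L}(\bm{\theta}) - P^{\infty,\infty}(\bm{\theta})\right\}}{m^{\infty,\infty}(\bm{z})} \\
&\quad + f(\bm{z}|\bm{\theta})P^{K,L}(\bm{\theta})\left\{\frac{1}{m^{K,L}(\bm{z})} - \frac{1}{m^{\infty,\infty}(\bm{z})}\right\}.
\end{align*}
We then multiply by $m^{\infty,\infty}(\bm{z})$, take absolute values, and integrate over $\bm{\theta}$ and $\bm{z}$.

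For the first term, the factor $m^{\infty,\infty}(\bm{z})$ cancels. Tonelli's theorem then lets us integrate over $\bm{z}$ first, using $\int f(\bm{z}|\bm{\theta})\,d\bm{z}=1$, since $f(\cdot|\bm{\theta})$ is a density for each $\bm{\theta}$. This leaves exactly $\int_{\Xi^N}|P^{K,L}(\bm{\theta})-P^{\infty,\infty}(\bm{\theta})|\,d\bm{\theta}$, which is bounded by $\epsilon^{K,L}(\alpha,\beta)$ by Proposition~\ref{prp:MarginaAndPriorConvergence}.

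For the second term, we first integrate over $\bm{\theta}$. Since $\int f(\bm{z}|\bm{\theta})P^{K,L}(\bm{\theta})\,d\bm{\theta} = m^{K,L}(\bm{z})$, the term reduces to
\[
\int_{\mathcal{X}^N} m^{K,L}(\bm{z})\left|\frac{1}{m^{K,L}(\bm{z})}-\frac{1}{m^{\infty,\infty}(\bm{z})}\right| m^{\infty,\infty}(\bm{z})\,d\bm{z} = \int_{\mathcal{X}^N}\left|m^{\infty,\infty}(\bm{z})-m^{K,L}(\bm{z})\right|d\bm{z}.
\]
By the first inequality of Proposition~\ref{prp:MarginaAndPriorConvergence}, this is also at most $\epsilon^{K,L}(\alpha,\beta)$. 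Adding the two bounds gives a total of at most $2\epsilon^{K,L}(\alpha,\beta)$, which is $\mathcal{O}(\epsilon^{K,L}(\alpha,\beta))$ as claimed.

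The only delicate step is a measure-theoretic one: the posteriors must be well defined, which requires $m^{K,L}(\bm{z})>0$ and $m^{\infty,\infty}(\bm{z})>0$. We would argue that the set where $m^{\infty,\infty}=0$ has $m^{\infty,\infty}$-measure zero, so it does not contribute to the weighted integral. On the complement we need positivity of $m^{K,L}$. This holds because the truncated prior assigns positive probability to every configuration in which all $S_j\le K$ and $M_{ji}\le L$, so $m^{K,L}$ is bounded below by a positive multiple of the corresponding part of $m^{\infty,\infty}$. Alternatively, if positivity fails on some set, the inequality holds trivially there, since we can adopt any convention for the undefined posterior and the second term is controlled directly by $|m^{\infty,\infty}-m^{K,L}|$ computed before dividing. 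Apart from this point and the use of Tonelli's theorem (all integrands are nonnegative), the argument is routine.
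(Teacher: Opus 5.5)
Your proof is correct and takes essentially the paper's route. The paper just defers to Proposition~3.2 of Chakrabarti et al.\ (2025), which is this standard Ishwaran--James add-and-subtract argument built on the two total-variation bounds of the preceding proposition; your version is self-contained and gives the explicit constant $2$.

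One caveat: your first justification of $m^{K,L}(\bm{z})>0$ fails for general kernels, since with compactly supported kernels the data may need more than $K$ GCs or $L$ OCs to have positive likelihood. It does hold for the Gaussian kernels used here. Your fallback argument covers the general case anyway, because on $\{m^{K,L}=0<m^{\infty,\infty}\}$ the contribution is at most $2\int |m^{\infty,\infty}(\bm{z})-m^{K,L}(\bm{z})|\,d\bm{z}$.
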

The proof follows as a consequence of Proposition 3.2 of \citealp{chakrabarti2025globallocal}. Thus, Proposition~\ref{prp:PosteriorConvergence} establishes that the posterior distribution of $\bm{\theta}$ under the truncated NAM prior is exponentially accurate when integrated with respect to the marginal data density $m^{\infty,\infty}(\bm{z})$ induced by the original NAM prior.

\section{Simulation Studies}\label{sec::simulations}
We assess the performance of NAM by comparing it with CAM and fiSAN in simulations. For CAM with multivariate data, we implemented posterior inference using a modification of our proposed VI algorithm, while for fiSAN we adapted the \texttt{sanba} R package \citep{fiSAN} to accommodate multivariate data. We did not include comparisons with the HHDP, as no publicly available software for posterior inference under this model currently exists. Throughout the simulations, we assume that there are $J = 100$ groups and $n_j = 100$ observations within each group $j=1,\dots,J$. 

In the first scenario, both the group-level variables ($\bm{x}_j$) and the observation-level variables ($\bm{y}_{ji}$) are two-dimensional (i.e., $p=q=2$). The true number of GCs is $K = 4$ and the true number of OCs is $L=3$. The true cluster-specific parameters and the true mixture weights corresponding to the group-level variables are drawn from,
 \begin{align*}
     &\left(\bm{\mu}^x_{k}, \bm{\Lambda}^x_{k}\right)  \sim \text{NW}(\bm{0}, 0.05, q + 5, \mathds{I}_{q}),\\
     &\boldsymbol{ {\pi}} \sim \mbox{Dirichlet}(\alpha/K, \dots, \alpha/K),\quad \alpha  \sim \mathrm{Gamma}(25, 1).
 \end{align*}
 The true group-level cluster indicator $S_{j}$ is drawn from a multinomial distribution with class probabilities $\boldsymbol{ {\pi}}$. We then generate $\bm{x}_j$ from
 \begin{align*}
 \label{eq:covariates_sim}
     \bm{x}_j | S_j=k\sim  \mathcal{N}_{q}\left(\bm{x}_{j}\mid \bm{\mu}^x_{k}, \Lambda_{k}^{x^{-1}}\right).
 \end{align*} 
  Similarly, the true cluster-specific parameters and mixture weights corresponding to the observation-level variables are drawn from,
 \begin{align*}
     &\left(\bm{\mu}^y_{l}, \bm{\Lambda}^y_{l}\right)  \sim \text{NW}(\bm{0}, 0.05, 5 + p, \mathds{I}_{p}),\\
     &\boldsymbol{\omega}_{k}   \sim \mbox{Dirichlet}(\beta/L, \dots, \beta/L), \quad \beta \sim \mathrm{Gamma}(25, 1).
 \end{align*}
The true observation-level cluster indicators $M_{ji}$, for $i=1,\dots n_j$, are conditionally drawn from a multinomial distribution with class probabilities $\boldsymbol{ {\omega}}_k$, given $S_j = k$. Lastly, we generate $\bm{y}_{ji}$ from
 \begin{align*}
     \bm{y}_{ji} \mid M_{ji} = l \sim \mathcal{N}_{p}\left(\bm{y}_{ji}\mid \bm{\mu}^y_{l}, \Lambda_{l}^{y^{-1}}\right).
 \end{align*} 
All simulations and data analysis were conducted on a machine equipped with a 2.45 GHz 64-core AMD EPYC 7763 processor and 24 GB of RAM. We assessed the accuracy of estimating GCs and OCs using the adjusted Rand index (ARI; \citealp{ARI}). 
We report the ARI for GCs and the average ARI for group-specific OCs along with their standard deviation in Table~\ref{tab:ARI_Comparison}. Furthermore, we report the overall OC accuracy, which measures the accuracy of OC while accounting for the sharing of clusters across groups. It is evident that the GC accuracy of NAM surpasses that of both CAM and fiSAN, highlighting the importance of incorporating informative group-level variables when available. All methods achieve similar performance in identifying OCs. 

\begin{table}
\centering
\begin{tabular}{c|c|c|c}
  \hline
        &   GC   &  Average OC            & Overall OC \\ 
  \hline
  CAM   & 0.4944 & 0.9611 (0.0386)  & 0.9627 \\ 
  fiSAN & 0.4944 & 0.9610 (0.0387)  & 0.9625 \\ 
  NAM  & 1.0000 & 0.9620 (0.0383) & 0.9632 \\ 
   \hline
\end{tabular}
\caption{Comparison of clustering performance at the group- and observation-level for the different methods. The accuracy of clustering was assessed using the adjusted Rand index (ARI) between the estimated and true cluster.}
\label{tab:ARI_Comparison}
\end{table}

In the second scenario, we investigate the effect of varying dimensionality of group-level variables on GC accuracy through a series of simulations. For simplicity, we set $p=q$ and vary it in $\{2,5,10\}$. All other simulation details are the same as before and are replicated 50 times. Figure~\ref{fig:NAM_CAM_fiSAN_DC} shows the ARI of GCs for NAM, CAM, and fiSAN, which demonstrates that NAM consistently outperforms the other two methods across all dimensional settings, especially for $p=q=10$ where NAM achieves near-perfect accuracy in identifying GCs, whereas the performance of the alternative methods is substantially worse. Table~\ref{tab:ARI_OC_Comparison_Summary_by_Dimension} presents the mean and standard deviation of the ARI of OCs aggregated over all $100$ groups and $50$ independent replications. The results indicate that NAM is comparable to CAM and fiSAN across all settings in terms of identifying OCs. Furthermore, Figures \ref{fig:NAM_CAM_fiSAN_OC_2D}-\ref{fig:NAM_CAM_fiSAN_OC_10D} in Section~\ref{sec::suppl_simulations} of the Supplementary Materials show the boxplots of OC accuracy for each group separately, computed across the 50 independent replications for all the methods and dimensions. They further corroborate the comparable performance in identifying OCs.

\begin{figure}
\centerline{\includegraphics[width=0.8\linewidth]{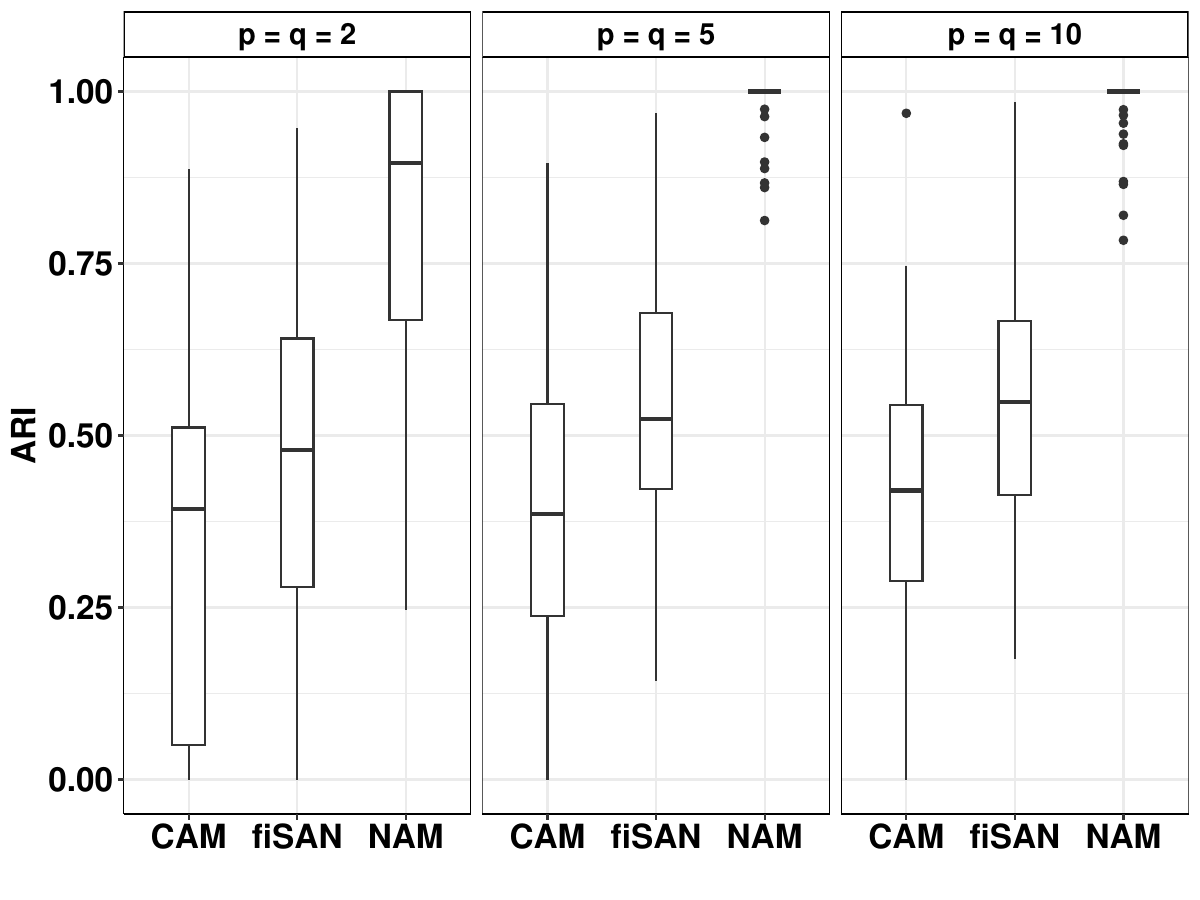}}
\caption{Comparison of group-clustering accuracy of NAM with CAM and fiSAN against varying dimensions of the group-level variables ($q$) and observation-level variables ($p$). The dimensions are reported at the top of the figure panels. Boxplots are reported over 50 independent replications.
}
\label{fig:NAM_CAM_fiSAN_DC}
\end{figure}

\begin{table}
\begin{center}
\begin{tabular}{c|c|c|c}
  \hline
                &   CAM         &   fiSAN        & NAM          \\ 
  \hline
   $p=2$     &  0.896 (0.142) & 0.903 (0.136)  & 0.902 (0.136) \\ 
   $p=5$     &  0.995 (0.021) & 0.997 (0.017)  & 0.997 (0.017) \\ 
   $p=10$    &  1.000 (0.001) & 1.000 (0.001)  & 1.000 (0.001) \\ 
   \hline
\end{tabular}
\end{center}
\caption{Comparison of clustering performance at the observation-level for the different methods for varying dimension of the response variable. The accuracy of clustering was assessed using the adjusted Rand index (ARI) between the estimated and true cluster. The table reports the mean (standard deviation) of the ARI across all individuals over the 50 independent replications.}
\label{tab:ARI_OC_Comparison_Summary_by_Dimension}
\end{table}

Next, we performed simulations to assess the scalability of the proposed VI algorithm. Particularly, we assessed the algorithmic scalability with respect to (i) dimensionality of group- and observation-level variables ($q$ and $p$ respectively), (ii) the number of groups ($J$), and (iii) the number of observations within each group ($n_j, \:j = 1,\dots, J$). First, as before, we assume that $J = 100$ with $n_j = 100$ and vary the dimensions of group- and observation-level variables. Additionally, as before, we set $p=q$ for simplicity and vary it in $\{2, 5, 10\}$. Next, we vary the number of groups, $J$ in $\{100, 200, 400\}$ while fixing the dimensions $p=q=5$ and $n_j = 100$ for $j = 1,\dots, J$. Finally, with the dimensions fixed at $p=q=5$ and the number of groups fixed at $J = 200$, we vary $n_j$ in $\{100, 200, 400\}$. All other simulation details are the same as before and are replicated 50 times. For each replication, we execute 50 distinct runs of the proposed algorithm with different starting points, and we report the median and maximum individual runtime obtained over the 50 runs. The boxplots of runtime in Figure~\ref{fig:ComputationalTime} summarize the results. The runtime scales approximately linearly with $J$ and $n_j$ (Figures~\ref{fig:MedianTime_WithJ}-\ref{fig:MaxTime_Withn}), with the maximum runtime remaining below 18 and 35 minutes, respectively. Interestingly, Figure~\ref{fig:MedianTime_WithDimension} indicates that the median runtime remains below two minutes across all examined values of  $p = q$. Somewhat unexpectedly, the runtime for higher-dimensional settings ($p = q = 5, 10$) is shorter than that observed in the two-dimensional case due to faster convergence of the VI algorithm. The median number of iterations to convergence is 562 for the two-dimensional case ($p = q = 2$), compared to 203.5 and 141.5 for the cases $p = q = 5$ and $p = q = 10$, respectively. Furthermore, the corresponding maximum runtimes in Figure~\ref{fig:MaxTime_WithDimension} are less than 6 minutes, although the variability increases with dimensionality. In summary, the results highlight that the proposed variational inference algorithm for our sophisticated Bayesian nonparametric hierarchical modeling is scalable with respect to both the number of variables and the sample size.

\begin{figure}[!htp]
\centering
\begin{subfigure}{0.48\linewidth}
  \centering
    \includegraphics[width= 1\linewidth]{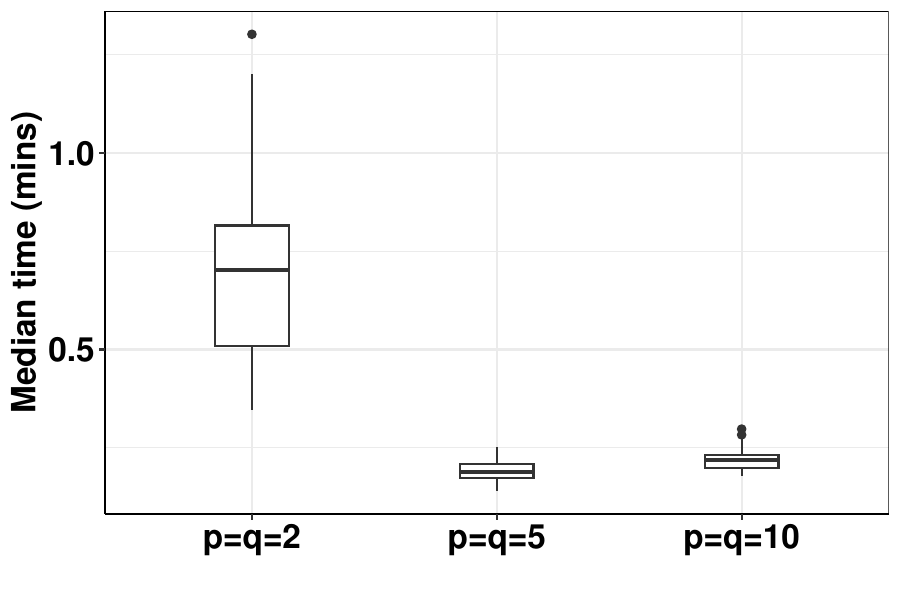}
    \caption{}
    \label{fig:MedianTime_WithDimension}
\end{subfigure}
\hspace{-0.2cm} 
\begin{subfigure}{0.48\linewidth}
    \centering
    \includegraphics[width= 1\linewidth]{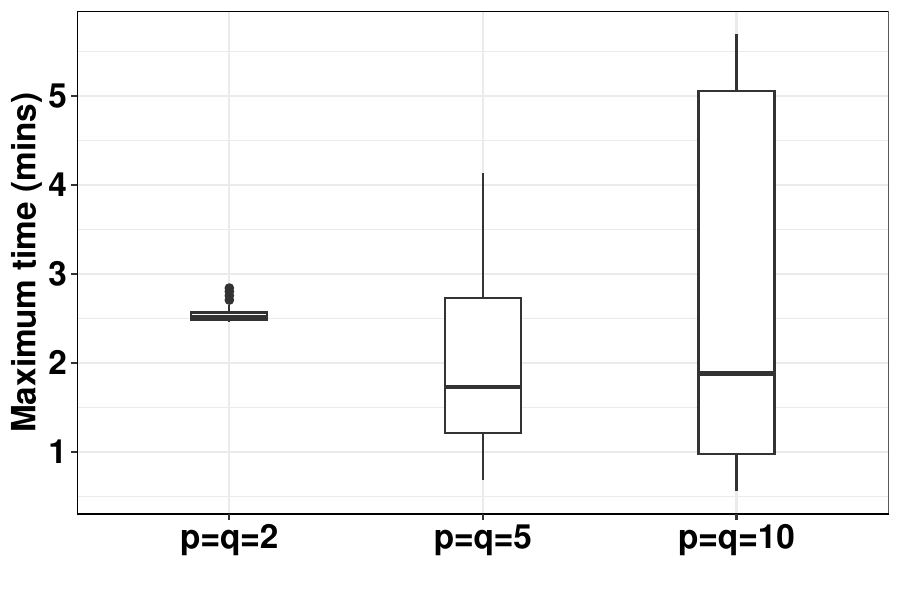}
    \caption{}
    \label{fig:MaxTime_WithDimension}
\end{subfigure}
\hspace{-0.2cm} 
\begin{subfigure}{0.48\linewidth}
  \centering
    \includegraphics[width= 1\linewidth]{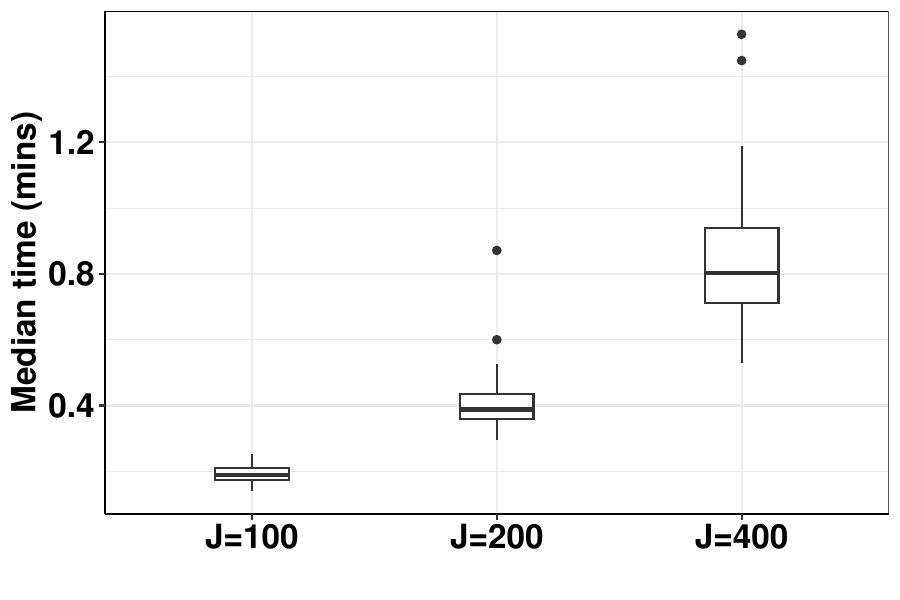}
    \caption{}
    \label{fig:MedianTime_WithJ}
\end{subfigure}
\hspace{-0.2cm} 
\begin{subfigure}{0.48\linewidth}
    \centering
    \includegraphics[width= 1\linewidth]{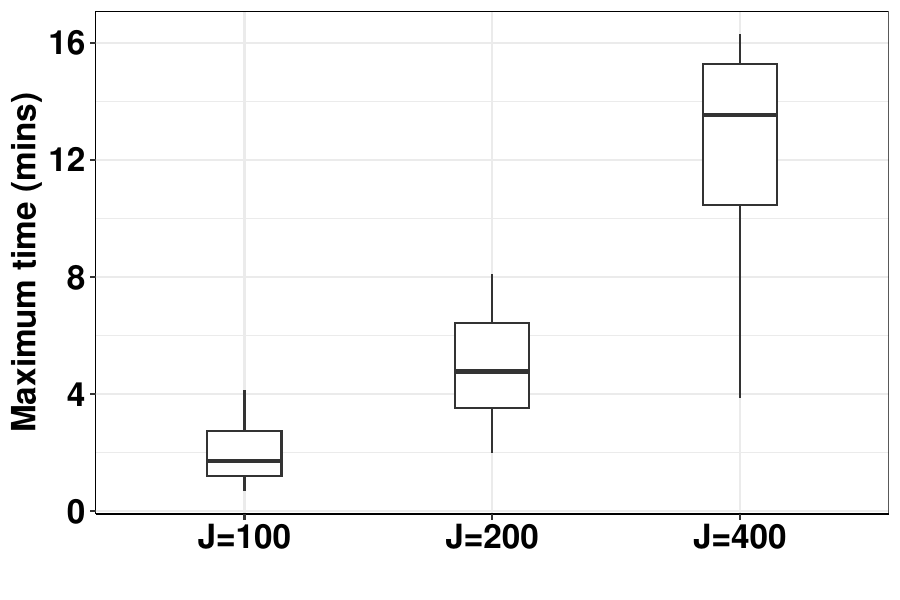}
    \caption{}
    \label{fig:MaxTime_WithJ}
\end{subfigure}
\hspace{-0.2cm} 
\begin{subfigure}{0.48\linewidth}
  \centering
    \includegraphics[width= 1\linewidth]{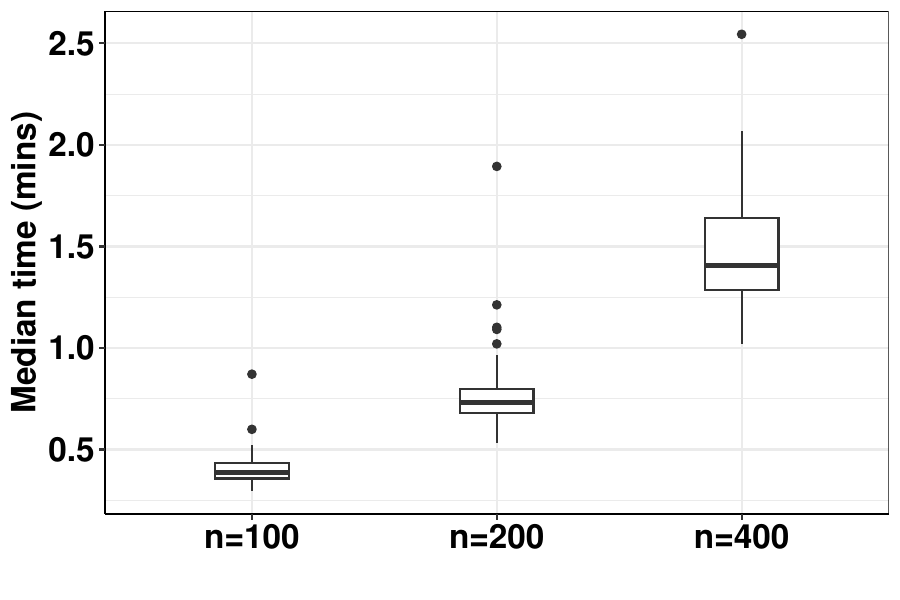}
    \caption{}
    \label{fig:MedianTime_Withn}
\end{subfigure}
\hspace{-0.2cm} 
\begin{subfigure}{0.48\linewidth}
    \centering
    \includegraphics[width= 1\linewidth]{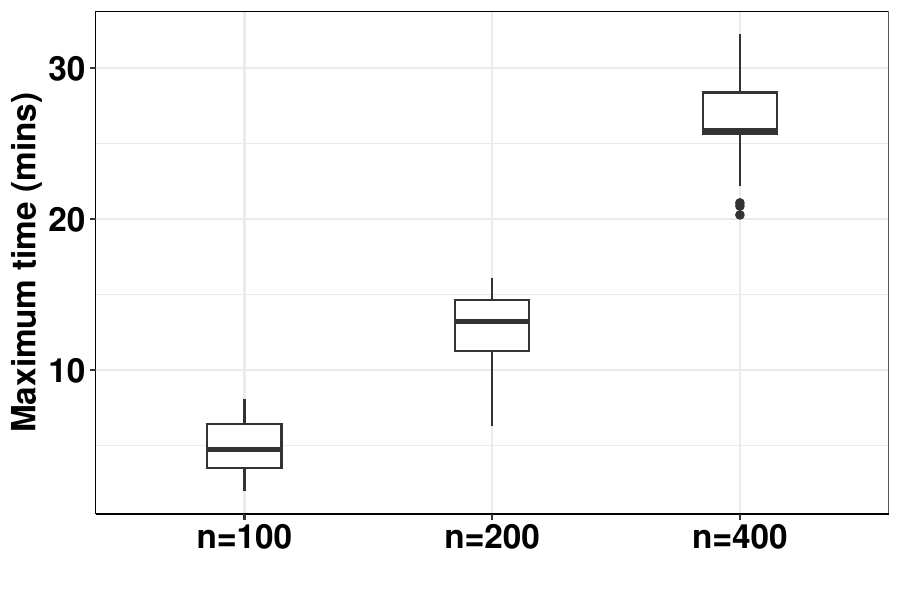}
    \caption{}
    \label{fig:MaxTime_Withn}
\end{subfigure}
\caption{Top row: distributions of the (a) median and (b) maximum computing time with varying dimensions of group- level variables ($q$) and observation-level variables ($p$) for fixed $J=100$ and $n_j=100$. Middle row: distributions of the (c) median and (d) maximum computing time with varying number of groups ($J$) for fixed dimensionality $p=q=5$ and  $n_j = 100$. Bottom row: distributions of the (e) median and (f) maximum computing time with varying number of observations within each group ($n_j$) for fixed $J = 200$ and dimensionality $p=q=5$. The median and maximum are computed over 50 distinct runs of the algorithm. The boxplots are reported over 50 independent replications.}
\label{fig:ComputationalTime}
\end{figure}

We further conducted simulation studies to assess the robustness of our model to deviations from Gaussianity in the group-level variables ($\bm{x}_j$) and/or the observation-level variables ($\bm{y}_{ji}$). Specifically, we generated data from mixture of (i) multivariate $t$ distributions with three degrees of freedom and (ii) multivariate skew-$t$ distributions with three degrees of freedom and skewness parameter set to four (see \citealp{Azzalini_2013} for details), and applied our method and competing methods with mixture of Gaussian kernels. As before, we assumed that the true number of GCs is $K = 4$, the true number of OCs is $L = 3$, and set $p = q = 5$.  All other simulation settings matched those considered before, and results were averaged over 50 replications. We compared the performance of the NAM with the CAM and the fiSAN. Figure~\ref{fig:All_GC_OC} in the Supplementary Materials summarizes the distribution of ARI values for estimating GCs and OCs across different combinations of mixture distributions used to generate the group-level and observation-level variables. Figure~\ref{fig:All_GC} shows that, across all scenarios, NAM is relatively robust to misspecification of the underlying mixture distribution when estimating group clusters. Moreover, when the observation-level variables are generated from Gaussian mixtures, the accuracy of OC estimation is high and comparable across all three methods (Figure~\ref{fig:All_OC}), regardless of the distribution governing the group-level variables. When the observation-level variables arise from non-Gaussian mixtures, clustering accuracy for OCs deteriorates for all methods, however, their performances remain broadly comparable.

We further examine the estimated numbers of GCs and OCs for the three methods; details are provided in Section~\ref{sec::suppl_simulations} of the Supplementary Materials. In summary, NAM demonstrates greater robustness in estimating the number of GCs and OCs across all settings compared to the competing methods.

Since NAM is similar to CAM and fiSAN when there are no group-level variables, we expect the group-level clustering of NAM to deteriorate but still outperform CAM and fiSAN if group-level variables are informative but partially omitted. We conducted simulations to assess the robustness of the proposed NAM framework to partial omission of group-level variables. To investigate the impact of incomplete group-level information, we randomly omitted $r$ group-level variables, with $r \in \{2, 5, 10, 20\}$. The case $r= 0$ corresponds to the full-information setting in which all group-level variables are retained. All other simulation settings matched those considered before. For each value of $r$, we fitted NAM and compared its performance with (i) NAM under the full-information setting ($r= 0$), (ii) CAM, and (iii) fiSAN. Results were averaged over 50 independent replications. Figure~\ref{fig:GC_OC_ARI_Missing} in the Supplementary Materials summarizes the distributions of ARI values for recovering both GCs and OCs. As expected, the accuracy of group-level clustering under NAM gradually declines as the number of omitted group-level variables increases. Nevertheless, even in the most extreme scenario considered ($r=20$, corresponding to omission of 20\% of the group-level variables), NAM continues to demonstrate higher accuracy in estimating GCs relative to both CAM and fiSAN. For observational clustering, NAM exhibits stable performance across all levels of omission, with ARI values comparable to those obtained under CAM and fiSAN.

\section{Real Data Analysis}\label{sec:real_data_analysis}
The motivating OneK1K scRNA-seq data contain gene expressions and genotypes for 1.27 million peripheral blood mononuclear cells from 982 individuals.
For quality control, we remove: (i) cells with cell-type prediction scores $<$ 0.6 or extreme total coverages (number of RNA reads $>$ 3 mean absolute deviation from median); (ii) donors with extreme cell numbers ($>$ 3 mean absolute deviation from median); (iii) genes with low overall expressions ($<0.5\%$ cell expression rate); and (iv) SNPs with low rare allele presence (fewer than five donors with the minor allele). Of the 10,000 genes with highest mean expression, we further filter out those with no SNPs within 200Kb from the genes' transcription start sites.

After quality control, our analysis includes data from $J=961$ individuals. For each individual, we retain 204,246 common genetic variants (SNPs) along with scRNA-seq gene expression measurements on a common set of 10,786 genes obtained from a total of 929,817 blood cells. As in standard practice, we apply principal component analysis (PCA) to the SNP data and retained the top five principal components (PCs). Similarly, we perform PCA on the gene expression data across all 961 individuals and 929,817 single cells, retaining the top five PCs. Given that the PCs derived from SNP data and gene expression data may be on very different scales, we standardized (centered and scaled) the PCs. Thus, the final dataset comprises the $q=5$ standardized PCs of the SNP data and $p=5$ standardized PCs of the gene expression data. The number $n_j$ of cells for each of the 961 individuals ranges from 366 to 1622. Additionally, prior to our analysis, we conducted basic diagnostic checks on these standardized PCs, the details of which are provided in Section~\ref{sec::suppl_rda:diagnostics} of the Supplementary Materials. In summary, these diagnostics suggest that a Gaussian mixture specification adequately captures the marginal distributional characteristics of both the SNP-derived PC data and the gene-expression PC data.

Applying the proposed method to this dataset, we simultaneously cluster the individuals and the cells within each individual while allowing the cell clusters to be comparable across individuals.
This allows us to have a better understanding of inter-individual as well as intra-individual genetic and cellular heterogeneity. The maximum runtime in 30 distinct runs of the VI algorithm was less than 22 hours. Furthermore, we examine whether the chosen truncation boundaries $(K = L = 50)$ for the proposed VI algorithm were active. The results suggest that the selected truncation levels are adequate for posterior inference; see Section~\ref{sec::suppl_rda:sensitivity} of the Supplementary Material for additional details.

NAM finds 6 clusters for individuals and 14 cell clusters. To visualize similarities and differences in the estimated OCs across the six estimated GCs, we randomly select two individuals from each GC and plot the uniform manifold approximation and projection (UMAP; \citealp{McInnes2018}) embeddings of the gene expression colored by the estimated OCs in Figure~\ref{fig:RealData_OC_UMAPPlot}. Clearly, individuals belonging to the same GC show similar cell clusters, while individuals belonging to different GCs show some differences in cell-cluster compositions. For example, the individuals in GC 40 (note that the label is arbitrary) show very similar OC patterns. However, these OCs show a marked difference from those corresponding to the individuals in GC 23. In other words, individuals with similar SNP profiles tend to share more similar cell-type compositions and gene expression profiles than those with dissimilar SNPs. Later, we performed additional analyses to further assess the validity and interpretability of the identified GCs and their corresponding OCs.

The OneK1K dataset provides annotated labels of each single cell from every individual into one of 14 distinct cell types, including B cells, T cells, natural killer cells, monocytes, etc. We merged all B cells subtypes and likewise, the CD4 T cells subtypes, and thereafter, compared our estimated clusters with the annotated cell types. We assess the accuracy of the estimated clusters using the ARI. The accuracy in clustering ranges from 0.049 to 0.918 across 961 individuals, with the median accuracy being 0.57. Furthermore, the overall OC accuracy, while accounting for the sharing of clusters across individuals, is 0.344. 

We compare the clustering performance of the NAM with that of other grouped clustering methods. Since the publicly available implementation of CAM for multivariate observations relies on MCMC, it is computationally infeasible for our high-dimensional dataset and was therefore not considered. For the real data analysis, we instead applied fiSAN with truncation levels fixed at $K = L = 50$, adapting the \texttt{sanba} R package to handle multivariate data. The VI algorithm was executed 30 times with different random initializations in parallel, and the solution corresponding to the highest ELBO was selected for inference. By design, fiSAN disregards the SNP data and, in this setting, produced 37 individual clusters (GCs) and 23 cell clusters (OCs). Both values greatly exceed the numbers of clusters estimated by NAM. It is less plausible that the true number of individual clusters is this large, as the OneK1K dataset consists of a relatively homogeneous cohort of individuals of Northern European ancestry.
Figure~\ref{fig:RealData_OC_UMAPPlot_fiSAN} presents the UMAP embeddings of the gene expression data, colored by the estimated OCs from fiSAN for the same set of individuals as plotted before. The results reveal some inconsistencies. For example, individuals 19 and 666 exhibit highly similar cellular compositions (as reflected by their comparable OCs), and fiSAN correspondingly assigns them to the same GC, labeled 32. In contrast, although individuals 950 and 898 also demonstrate strong similarity in their OCs, fiSAN assigns them to distinct GCs. This discrepancy is likely due in part to fiSAN not accounting for the similarity in the SNP profiles of these individuals. Regarding the group-specific OC accuracy, the ARIs under fiSAN are similar to those under NAM, albeit with slightly larger variation across the 961 individuals, ranging from –0.030 to 0.918, with a median of 0.57. The overall OC accuracy under fiSAN (accounting for the sharing of clusters across groups), however, is lower (0.277) than that achieved by NAM (0.344). This underscores the NAM's critical feature of incorporating genotype-level similarity in guiding clustering not only at the group level but at the observation level as well. 

\begin{figure}
\centering
\includegraphics[width=0.9\columnwidth]{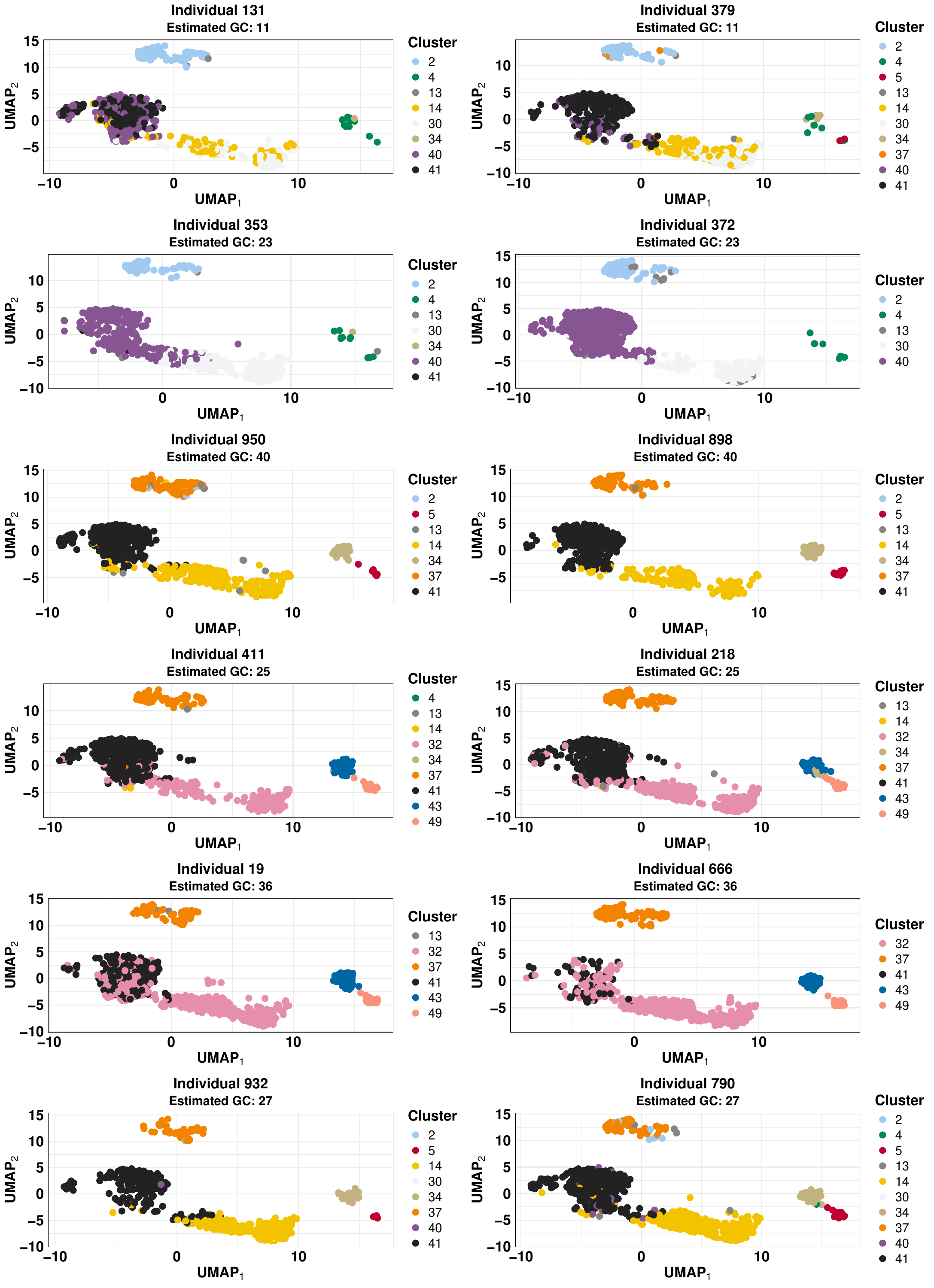}
\caption{Observational clustering for selected individuals by estimated group clusters (GCs). The colors indicate observational clusters (OCs) estimated from NAM.
}
\label{fig:RealData_OC_UMAPPlot}
\end{figure}

\begin{figure}
\centering
\includegraphics[width=0.9\columnwidth]{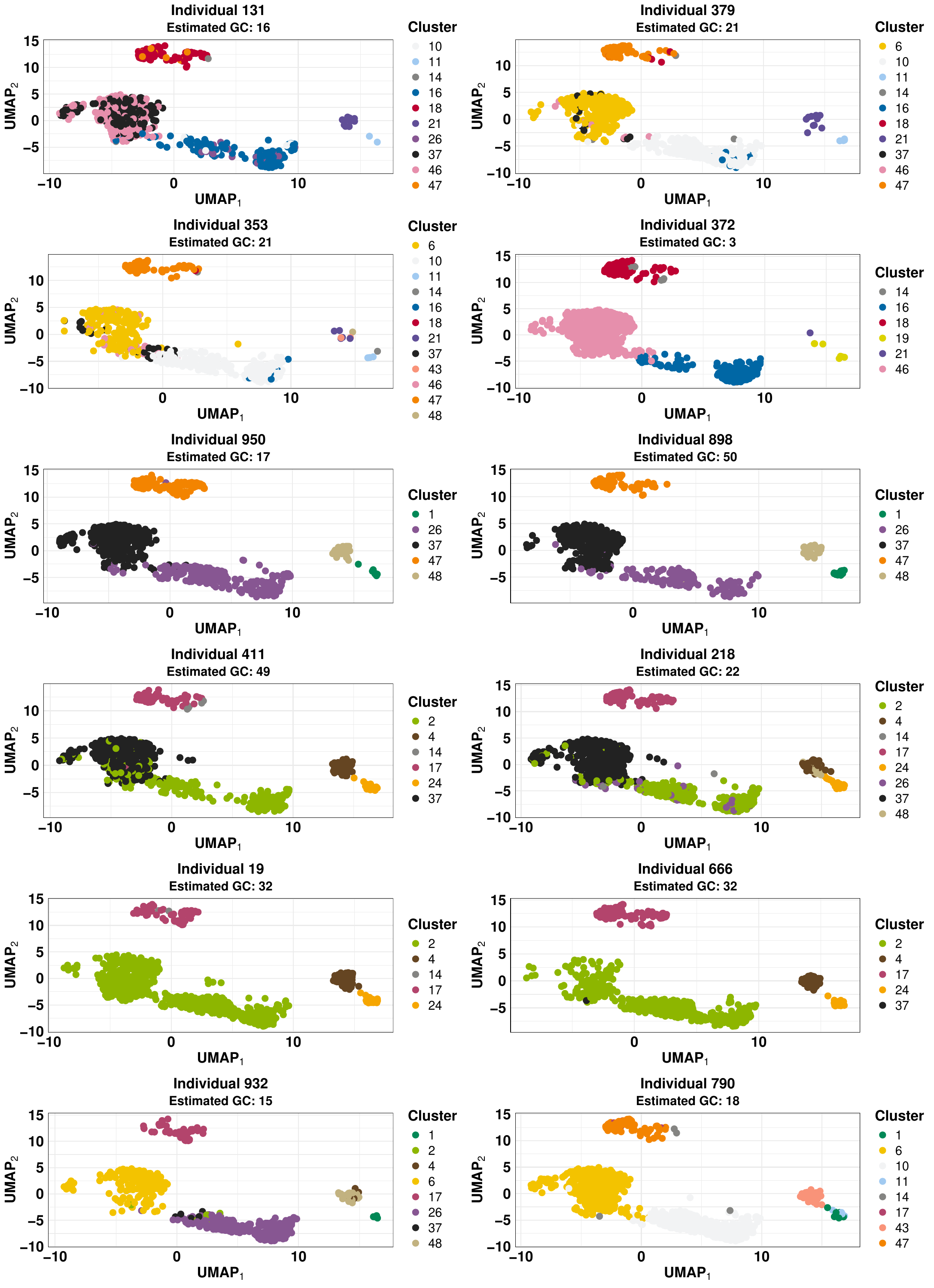}
\caption{Observational clustering for the same individuals by estimated group clusters (GCs). The colors indicate observational clusters (OCs) estimated from fiSAN.
}
\label{fig:RealData_OC_UMAPPlot_fiSAN}
\end{figure}

Next, to support our findings of cell clusters, we identify genes that best characterize the cell clusters via differential gene expression analyses. We use the function \texttt{findMarkers} from the Bioconductor R package \texttt{scran} \citep{scran} and report the 6 most differentially expressed genes for each of the same randomly selected two individuals from each GC as before. Figure \ref{fig:RealData_DEAnalysis} shows the distributions of those differentially expressed genes. The pairs of individuals from GCs 11 and 23 share a common differentially expressed gene \emph{MS4A1}. This gene is a member of the membrane-spanning 4A gene family and encodes a B-lymphocyte surface molecule, which plays a role in the activation, proliferation, and differentiation of B-cells into plasma cells \citep{tedder1985b, petrie2002colocalization}. Additionally, MS4A1 is also differentially expressed in individual 950 from GC 40 and individual 790 from GC 27. Across all individuals, this gene exhibits elevated expression levels within OCs 2 and 37 (recall that the OCs are comparable across individuals because of the sharing-of-information feature of NAM). Cells in these clusters are predominantly B-cells based on the cell markers, further supporting the biological relevance of the observed gene expression pattern.

The individuals from GC 23 share a common differentially expressed gene \emph{CSTA} encoding cystatin A. It is also differentially expressed in individual 131 from GC 11. It has a similar expression pattern across most of the OCs. However, it has a substantially higher expression in OC 4 in individuals 11, 353, and 372. \citet{WEI202212_CSTA_Osteoclast} shows that CSTA is expressed in osteoclasts, which are multinucleated cells originating from monocytes and play a crucial role in bone resorption \citep{Osteoclasts}. Although the blood sample in OneK1K should not contain osteoclasts, it contains monocyte, which derives osteoclasts. In fact, OC 4 appears to be dominated by monocytes according to the cell markers. Although this analysis involves identifying differentially expressed genes and manually comparing them with known markers from the literature, we later propose an automated pipeline for cell-type identification that performs this characterization in a systematic and scalable manner.

\begin{figure}
\centering
\includegraphics[width=0.9\columnwidth]{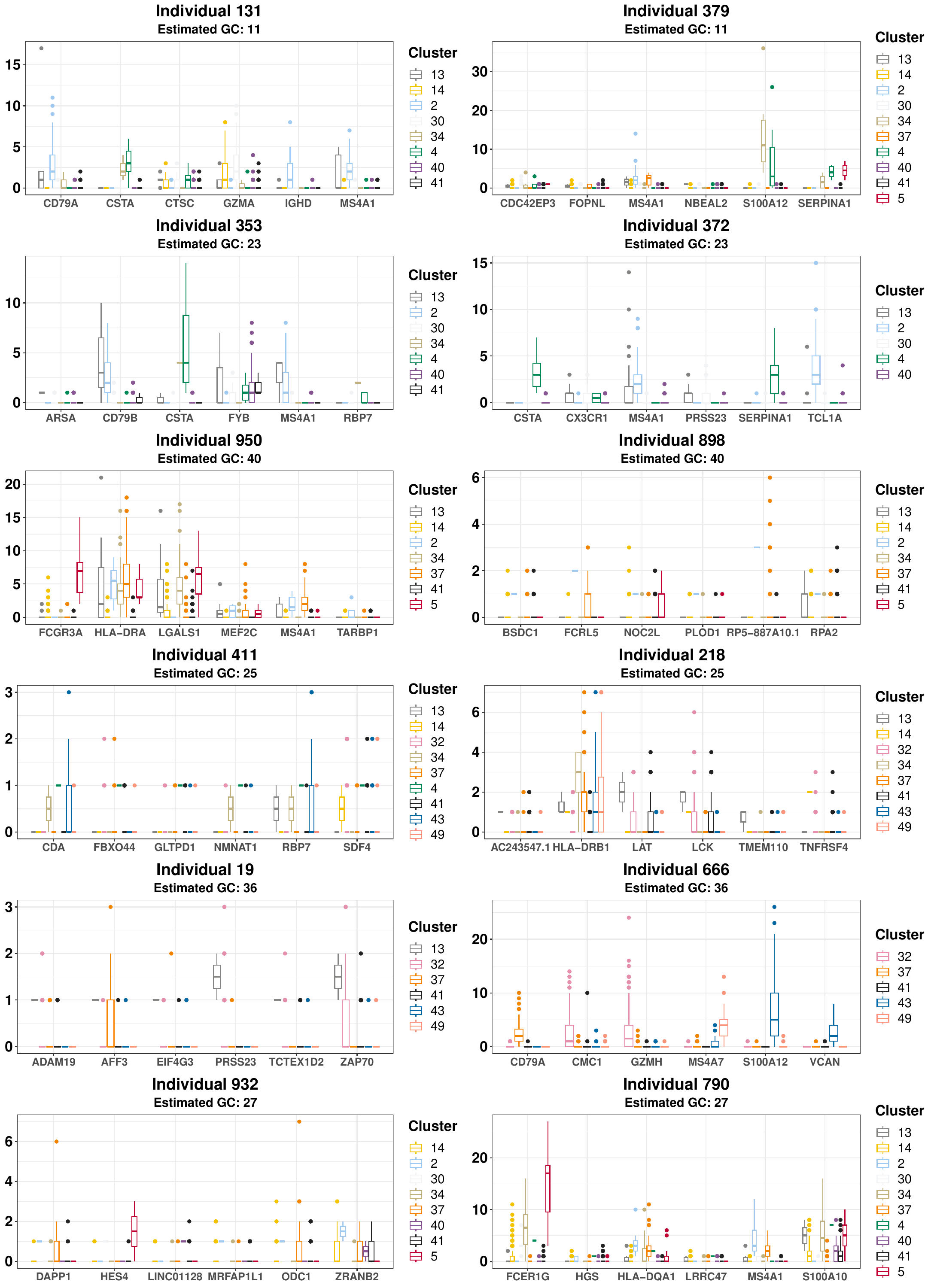}
\caption{Boxplot of gene expressions in the top six differentially expressed genes for selected individuals by estimated group clusters (GCs) in the different observational clusters (OCs).
}

\label{fig:RealData_DEAnalysis}
\end{figure}

Next, we conducted additional analyses to further examine and validate the estimated GCs and OCs. Specifically, we performed differential gene expression analysis on pseudo-bulk data derived from the estimated GCs within the same OC. For each individual and each estimated OC, we aggregated the gene expression counts across all cells belonging to that OC, thereby avoiding treating individual cells as independent replicates. Differential expression analysis was then conducted using \texttt{edgeR} \citep{edgeR}, treating individuals assigned to different estimated GCs as biological replicates. This approach enables testing whether gene expression differs between individuals belonging to two estimated GCs within the same OC (i.e., cell cluster).

Figure~\ref{fig:GC11_vs_GC40_heatmaps} presents heatmaps of the top twenty differentially expressed genes for OCs labeled 2 and 34 when comparing individuals assigned to GC 11 with those assigned to GC 40. The results indicate that different GCs exhibit markedly distinct gene expression patterns, even though they belong to the same OC. Additionally, the two sets of differentially expressed genes from the two different OCs are largely non-overlapping. These findings provide additional support for the biological distinctions between the estimated GCs, which are informed by the SNP-derived group-level data.

Furthermore, from Figure~\ref{fig:RealData_DEAnalysis}, we noted that the gene MS4A1 was differentially expressed between individuals from GCs 11 and 40 based on analyses conducted at the single-cell level. The pseudo-bulk analysis conducted here further corroborates this observation. In particular, MS4A1 exhibits distinct expression patterns across individuals assigned to different GCs within the same estimated OC (see Figure~\ref{fig:MS4A1_OC2_OC34_GC11_vs_GC40_Violin}), providing additional evidence that the GC assignments correspond to meaningful differences in cellular composition and gene expression profiles.

\begin{figure}[!htp]
        \centering
           \begin{subfigure}{0.49\textwidth}
            \centering
            \includegraphics[width= 1\linewidth]{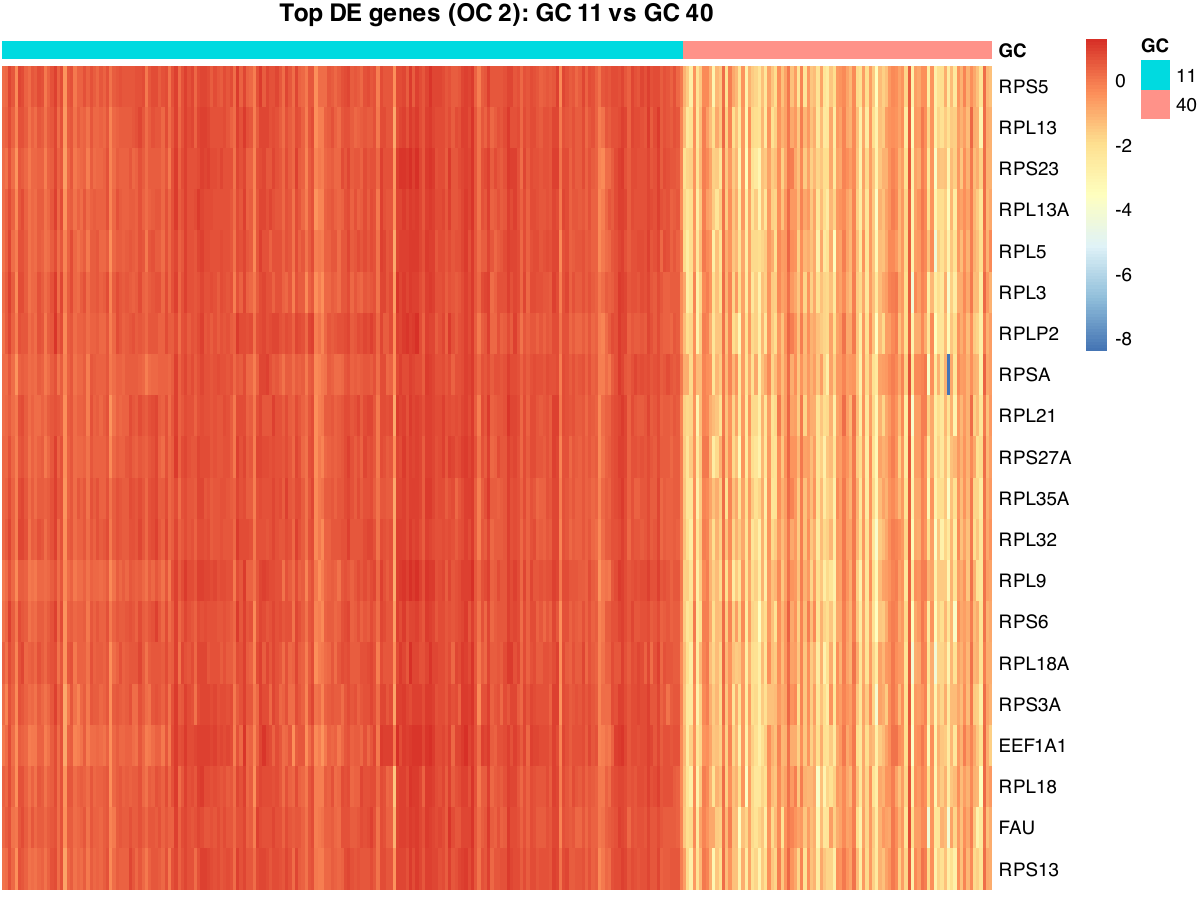}
            \caption{}
            \label{fig:OC2_GC11_vs_GC40_heatmap3}
        \end{subfigure}
            \begin{subfigure}{0.48\textwidth}
              \centering
                \includegraphics[width= 1\linewidth]{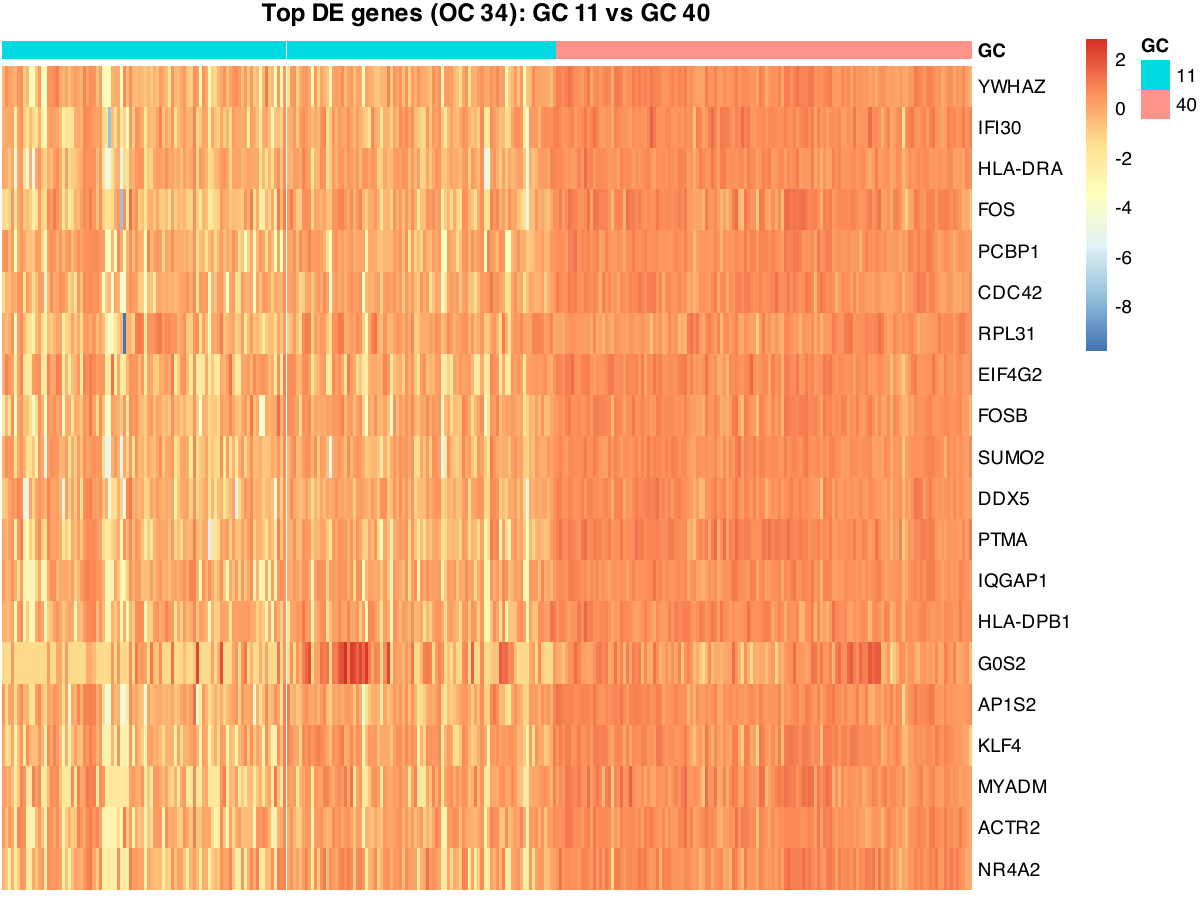}
                \caption{}
                \label{fig:OC34_GC11_vs_GC40_heatmap3}
            \end{subfigure}
        \caption{Heatmaps of the top 20 differentially expressed genes obtained from pseudobulk differential expression analysis comparing individuals assigned to GC 11 and GC 40 within (a) estimated OC 2 and (b) estimated OC 34.}
        \label{fig:GC11_vs_GC40_heatmaps}
        \end{figure}

\begin{figure}[!htp]
                \centerline{\includegraphics[width=0.65\linewidth]{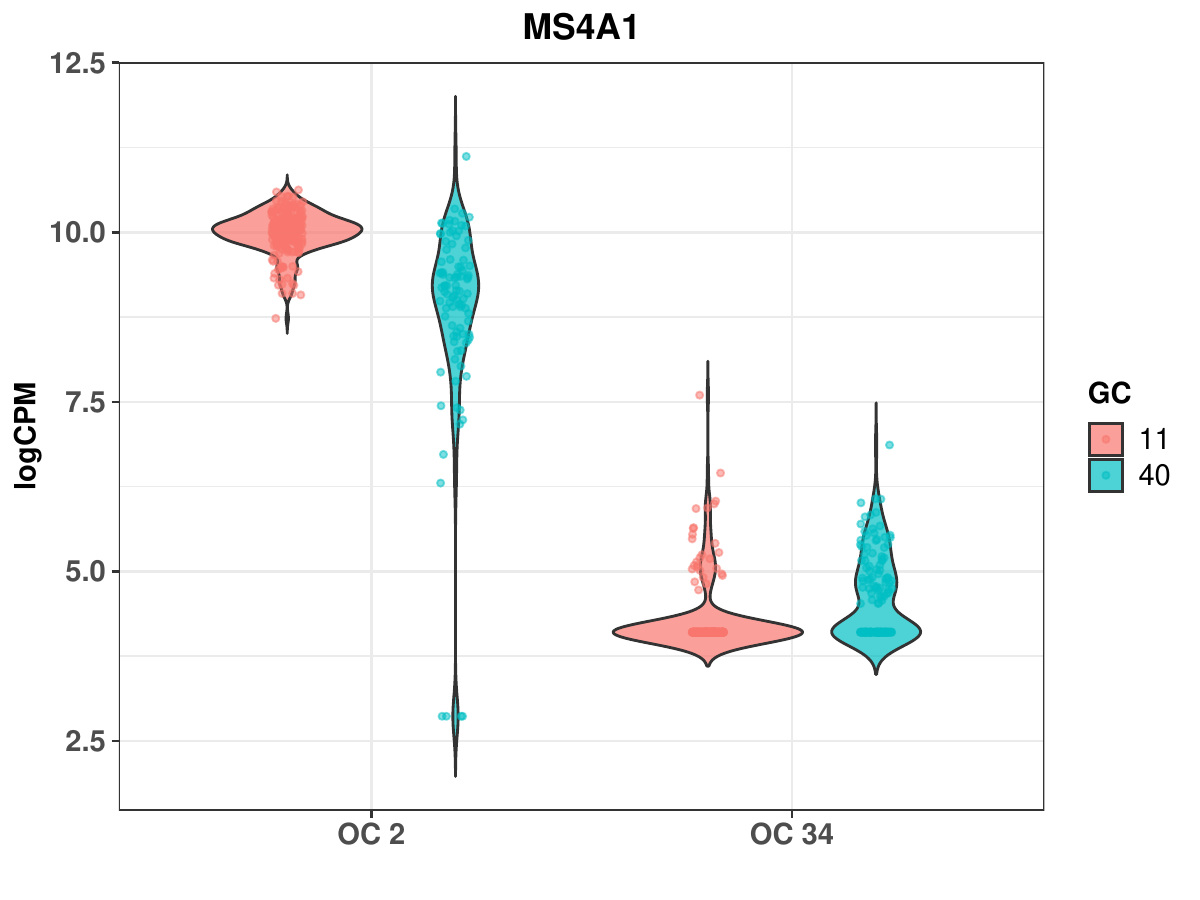
                }}
                \caption{Violin plots of the log counts per million (CPM) for the gene MS4A1, comparing expression levels across two estimated observational clusters (OCs) for individuals assigned to GC 11 and GC 40. 
                }
                \label{fig:MS4A1_OC2_OC34_GC11_vs_GC40_Violin}
                \end{figure}

Finally, we performed Gene Ontology (GO) enrichment analysis to further characterize the biological identity of the estimated OCs. As an illustrative example, Figure~\ref{fig:Indiv131_True_Est_OCs} in the Supplementary Materials presents the UMAP embeddings of the gene expression data for the individual numbered 131, with cells colored according to the estimated OCs alongside the annotated cell-type labels by OneK1K. This visualization allows for a direct qualitative comparison between the clusters inferred by our model and the reference annotations. To investigate the functional characteristics of the specific clusters, we focused on the OCs labeled 2 and 40 for this individual. For each OC, we first identified differentially expressed genes relative to the remaining cells and then selected the top 100 genes after filtering for statistical significance and effect size, retaining only those genes with statistically significant p-values and an average log fold change exceeding 0.25. These genes were subsequently used to perform GO enrichment analysis in the biological process category.

The resulting top five enriched GO terms for OC 2 and OC 40 are shown in Figures~\ref{fig:Individual_131_OC2_DE_Barplot_Dotplot} and~\ref{fig:Individual_131_OC40_DE_Barplot_Dotplot}, respectively. For OC 2, the enriched biological processes include pathways related to B-cell receptor signaling, B-cell activation, and antigen processing and presentation via MHC class II molecules. These processes are well-established functional signatures of B lymphocytes and are consistent with the antigen-recognition and antigen-presenting roles of B cells in adaptive immunity. In contrast, the enriched GO terms for OC 40 predominantly involve T-cell-related biological processes, including T-cell receptor signaling, regulation of T-cell activation, and T-cell differentiation. These pathways are characteristic of T lymphocytes and reflect their role in antigen recognition and immune response coordination through T-cell receptor-mediated signaling. Taken together, the enrichment results provide functional evidence supporting the cellular identities of these clusters. Specifically, OC 2 appears to correspond to a population of B cells, whereas OC 40 corresponds to a population of T cells. Importantly, these functional annotations are consistent with the manually curated cellular subtypes shown in Supplementary Figure~\ref{fig:Indiv131_TrueOCs}, thereby demonstrating that the estimated clusters capture biologically meaningful cellular subtypes present in the data.

In summary, the enrichment analyses demonstrate that GO enrichment analysis applied to the gene signatures of the estimated OCs obtained from the NAM can provide meaningful biological interpretation of the inferred clusters.  In particular, the enriched functional categories offer insight into the underlying cellular identities associated with each OC. Consequently, this approach can serve as a practical strategy for annotating cell types in datasets beyond the OneK1K cohort, particularly in settings where reference cell-type annotations are unavailable. Beyond providing biological validation of the inferred clusters, these analyses also demonstrate how the proposed framework can serve as an automated pipeline for identifying cellular subtypes across individuals, particularly in large-scale datasets where manual annotation is labor-intensive and potentially impractical.


\begin{figure}[!htp]
        \centering
           \begin{subfigure}{0.48\textwidth}
            \centering
            \includegraphics[width= 1\linewidth]{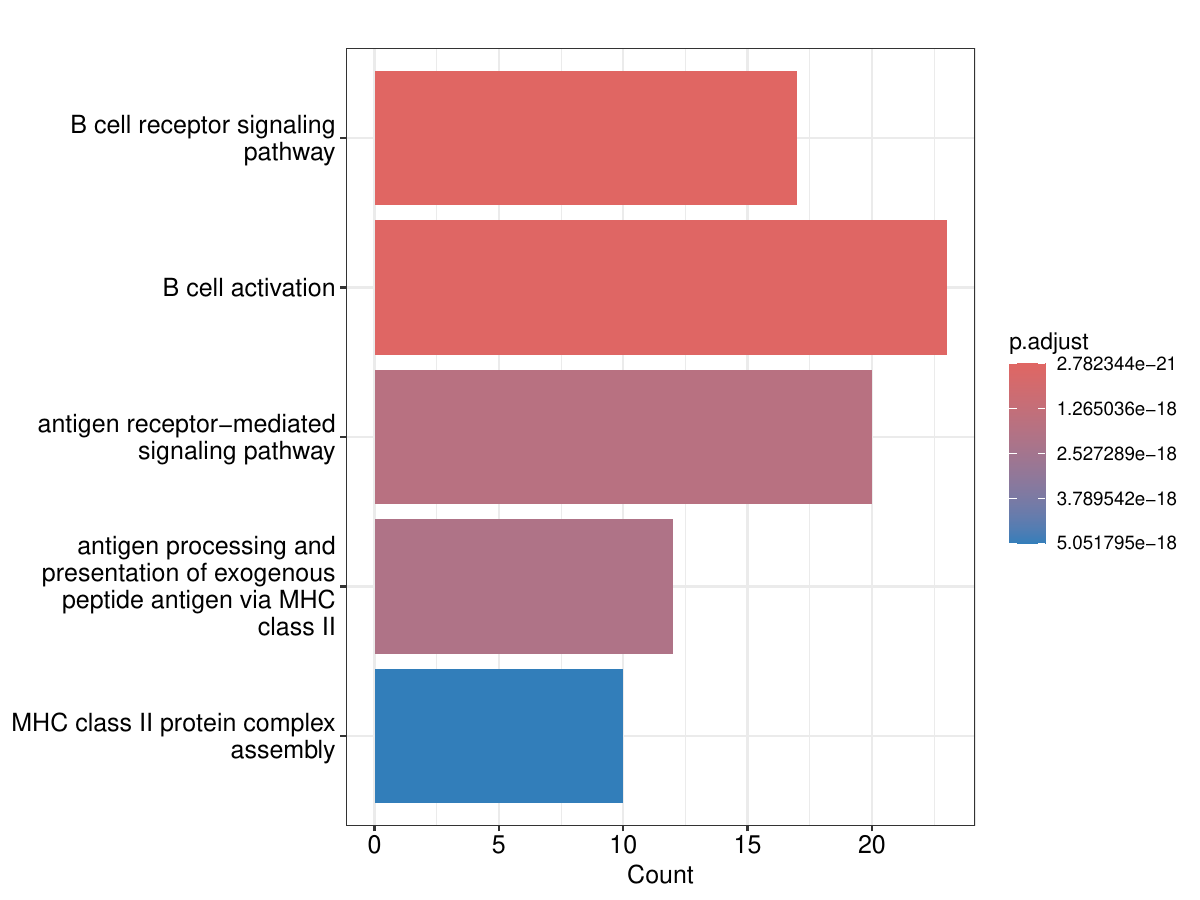}
            \caption{OC 2}
            \label{fig:Individual_131_OC2_DE_Barplot_Dotplot}
        \end{subfigure}
            \begin{subfigure}{0.48\textwidth}
              \centering
            \includegraphics[width= 1\linewidth]{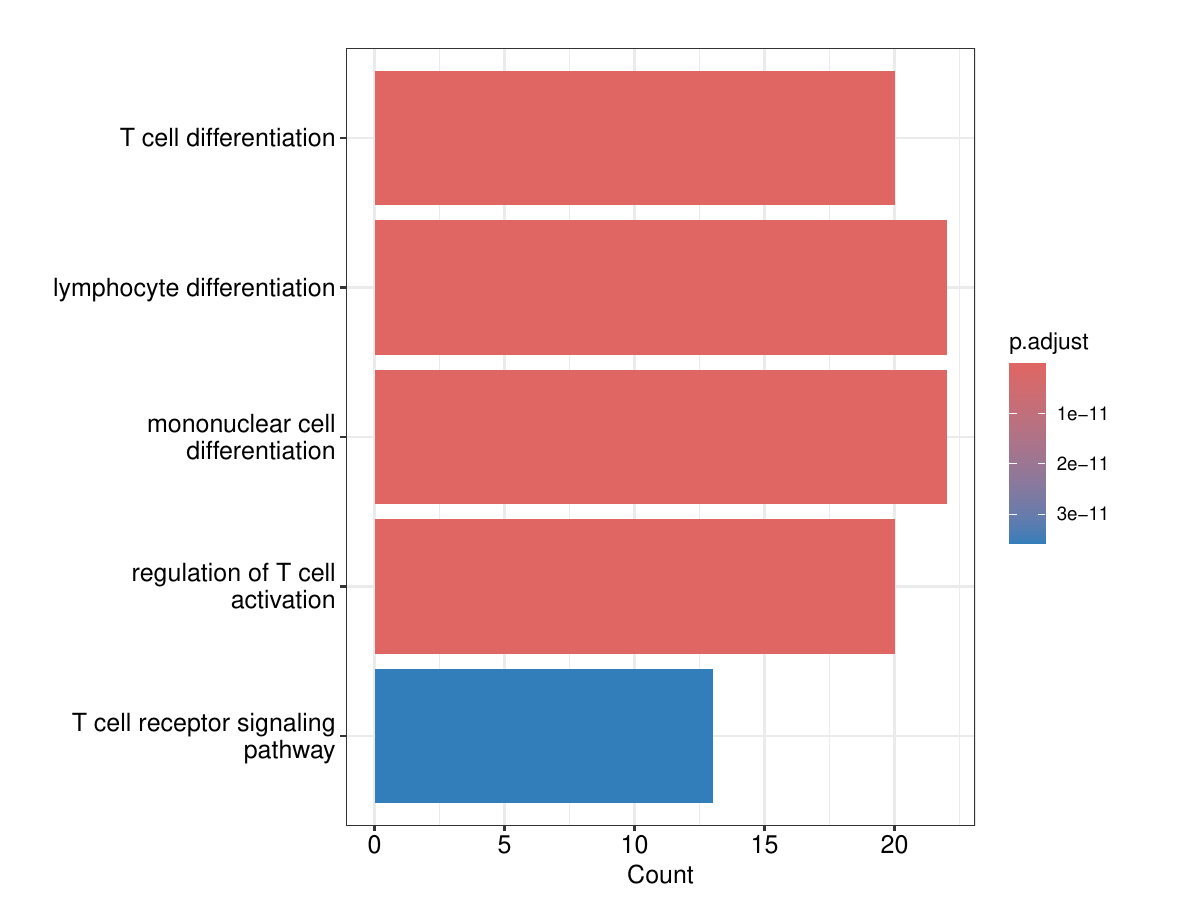}
                \caption{OC 40}
                \label{fig:Individual_131_OC40_DE_Barplot_Dotplot}
            \end{subfigure}
        \caption{Barplot summarizing the Gene Ontology enrichment analysis for the genes differentially expressed in estimated (a) OC 2 and (b) OC 40 corresponding to individual 131. The plot displays the most significantly enriched biological process terms identified from the selected differentially expressed genes associated with this observational cluster.}
        \label{fig:Individual_131_OC2_OC40_DE_Barplot_Dotplot}
        \end{figure}

\section{Conclusion}\label{sec::conclusion}
We have introduced the nested atoms model (NAM) for the simultaneous two-layered clustering of nested datasets. NAM allows for the joint clustering of groups and observations within each group by utilizing both group-level and observation-level data. The observation-level clusters are comparable across groups via the sharing-of-information feature of NAM. 
Posterior inference is carried out through a scalable variational inference algorithm. Simulation studies demonstrate the advantages of the proposed NAM over existing models for clustering nested datasets that do not account for group-level data. The application to the OneK1K data demonstrates the practical utility of the proposed NAM. For example, differential gene expression analyses reveal some biologically meaningful patterns.

For future directions, we may investigate the theoretical properties of the proposed method, such as large-sample theories and its induced partition function. We may also apply the proposed method to other population-scale single-cell data or other nested data to further validate its practical utility. Additionally, for other population-scale single-cell or nested data, if the batch labels of individuals are available, then the batch effects can be modeled within our proposed framework by incorporating them into the group-level variable $\bm{x}$. 

\bibliographystyle{apalike}
\bibliography{references} 

\clearpage
\appendix
\section*{Supplementary Materials for ``Nested Atoms Model with Application to Clustering Big Population-Scale Single-Cell Data''}
\phantomsection
\addcontentsline{toc}{section}{Supplementary Materials}

\beginsupplement
\setcounter{prp}{0}

\section{Proofs of the Properties of the Nested Atoms Model}\label{sec::suppl_proofs_properties}
In this section, we derive the prior mean, variance, co-clustering probabilities, and correlation structure of the nested atoms model (NAM). In the following propositions, we assume that the concentration parameters, $\alpha$ and $\beta$ are fixed.
\begin{prp}\label{suppl:prop1}
Consider the random distribution $G_j$ defined on $(\mathbb{X}\times \mathbb{Y}, \mathcal{X}\otimes\mathcal{Y})$, with $G_j | Q \sim  Q$, where $Q$ is defined in \eqref{eq::NAM2}. Then, for any Borel set $A \in \mathcal{X}\otimes\mathcal{Y}$,
\begin{align*}
    \mathbb{E}\left[ G_j(A)\right] &= H^x(A)H^y(A),\\
    Var\left[ G_j(A)\right] & = H^x(A)H^y(A)\left[q_2 + H^y(A) - q_2H^y(A) - H^x(A) H^y(A) \right],
\end{align*}
where $q_2 = 1/{(1+ \beta)}$.
\end{prp}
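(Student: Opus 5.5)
The plan is to compute both moments by conditioning on $Q$ and using the tower property. Throughout, I read $G_j(A)$ for $A\in\mathcal{X}\otimes\mathcal{Y}$ as in the statement: for a measurable rectangle $A=A^x\times A^y$, $G_j(A)=G_j^x(A^x)\,G_j^y(A^y)$. I abbreviate $H^x(A):=H^x(A^x)$ and $H^y(A):=H^y(A^y)$, which is what the formula implicitly does.

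By \eqref{eq::NAM2}, conditionally on $Q$, the pair $G_j$ equals $(\delta_{\theta_k^x},G_k^{y*})$ with probability $\pi_k$. Hence, given $Q$, $G_j(A)$ takes the value
\[
X_k:=\mathbbm{1}(\theta_k^x\in A^x)\,G_k^{y*}(A^y)
\]
with probability $\pi_k$. This gives
\[
\mathbb{E}[G_j(A)^m\mid Q]=\sum_{k\ge1}\pi_k X_k^m, \qquad m=1,2.
\]
The single-atom structure of the draw $G_j\mid Q$ is what makes the second moment simple: there are no cross terms between different $k$.

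Next I take expectations over $Q$. I use three facts about the construction:
\begin{itemize}
\item $\bm{\pi}$, $\{\theta_k^x\}$ and $\{\bm{\omega}_k\}$, $\{\theta_l^y\}$ are mutually independent.
\item $\mathbbm{1}(\theta_k^x\in A^x)$ is Bernoulli$(H^x(A))$, and its square equals itself.
\item Each $G_k^{y*}=\sum_l\omega_{lk}\delta_{\theta^y_l}$ has $\bm{\omega}_k\sim\mathrm{GEM}(\beta)$ and iid atoms from a non-atomic $H^y$, so by Sethuraman's construction it is marginally $\mathrm{DP}(\beta,H^y)$. Therefore $\mathbb{E}[G_k^{y*}(A^y)]=H^y(A)$ and $\mathbb{E}[G_k^{y*}(A^y)^2]=H^y(A)(1-H^y(A))q_2+H^y(A)^2$.
\end{itemize}
The common atoms $\theta_l^y$ correlate different $G_k^{y*}$, but this is irrelevant here: only single-$k$ moments enter.

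Since all terms are nonnegative, monotone convergence (Tonelli) justifies exchanging expectation and sum. Together with $\sum_k\mathbb{E}[\pi_k]=1$, this gives
\[
\mathbb{E}[G_j(A)]=H^x(A)H^y(A),
\]
and
\[
\mathbb{E}[G_j(A)^2]=H^x(A)\bigl[q_2H^y(A)(1-H^y(A))+H^y(A)^2\bigr]=H^x(A)H^y(A)\bigl[q_2+H^y(A)-q_2H^y(A)\bigr].
\]
Subtracting the squared mean $H^x(A)^2H^y(A)^2$ yields exactly the stated variance.

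The only delicate point is conceptual rather than computational. One must fix the meaning of $G_j(A)$ and $H^x(A)H^y(A)$ for a general product-$\sigma$-field set. I would state the result for rectangles, where it is exact, and note that this is the sense intended in the proposition. The remaining steps, namely the DP second moment and the interchange of sum and expectation, are routine.
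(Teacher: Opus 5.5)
Your proposal is correct and follows essentially the same route as the paper. Both arguments condition on $Q$ so that $\mathbb{E}[G_j(A)^m\mid Q]=\sum_k\pi_k\,\delta_{\theta_k^x}(A)\,G_k^{y*}(A)^m$ for $m=1,2$, factor the expectation by independence together with $\sum_k\mathbb{E}[\pi_k]=1$, and then insert the first two moments of $G_k^{y*}(A)$. The differences are minor: you quote the $\mathrm{DP}(\beta,H^y)$ second moment where the paper derives it from $\sum_l\mathbb{E}[\omega_{lk}^2]=1/(1+\beta)$, and your restriction to rectangles $A=A^x\times A^y$ and the Tonelli justification make explicit what the paper's notation $H^x(A)H^y(A)$ leaves implicit.
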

\begin{proof}
First, note that from simple calculations using the stick-breaking construction of $\bm{\pi}$ and $\bm{\omega}_k$, for any $k\geq 1$, 
\begin{align*}
    \sum_{k=1}^{\infty}\mathbb{E}\left[\pi_k\right] & = 1, &  \sum_{k=1}^{\infty}\mathbb{E}\left[\pi_k^2\right] & = \frac{1}{1+\alpha}, \\
    \sum_{l=1}^{\infty}\mathbb{E}\left[\omega_{lk}\right] & = 1, &  \sum_{l=1}^{\infty}\mathbb{E}\left[\omega_{lk}^2\right] & = \frac{1}{1+\beta}.
\end{align*}
Then, it follows that \begin{align}
    \nonumber \mathbb{E}\left[G_j(A)\right] & = \mathbb{E}\ \mathbb{E}\left[G_j(A) \mid Q\right]\\
    \nonumber & = \mathbb{E}\left[ \sum_{k=1}^{\infty}\pi_k \delta_{\theta_k^x}(A) G^{y*}_k(A)\right]\\
    \nonumber & = \sum_{k=1}^{\infty} \mathbb{E}(\pi_k) H^x(A)\, \mathbb{E}\left[G^{y*}_k(A)\right]\\
    & = H^x(A)H^y(A),
\end{align}
as $\mathbb{E}\left[G^{y*}_k(A)\right] = \mathbb{E}\left[\sum_{l=1}^{\infty}\omega_{lk} \delta_{\theta_l^y} \right] = H^y(A)$.
Next, we see that 
 \begin{align*}
    \mathbb{E}\left[G^{y*2}_k(A)\right] & = \mathbb{E}\left[\sum_{l=1}^\infty \omega_{lk}^2 \delta^2_{\theta_l^y}(A) + \sum_{l\neq l'}\omega_{lk}\omega_{l'k}\delta_{\theta_l^y}(A)\delta_{{\theta_{l'}^y}}(A)\right]\\
    & = H^y(A) \sum_{l=1}^\infty\mathbb{E}\left[ \omega_{lk}^2\right] + (H^y(A))^2\mathbb{E}\left[\sum_{l\neq l'}\omega_{lk}\omega_{l'k}\right]\\
    & = H^y(A) \sum_{l=1}^\infty\mathbb{E}\left[ \omega_{lk}^2\right] + (H^y(A))^2\left[1 - \sum_{l=1}^\infty\mathbb{E}\left[ \omega_{lk}^2\right]\right]\\
    & = H^y(A)\frac{1}{1+\beta} + \frac{\beta}{1 + \beta} (H^y(A))^2.
\end{align*}
Hence, \begin{align*}
    \mathbb{E}\left[(G_j(A))^2\right] & = \mathbb{E}\ \mathbb{E}\left[(G_j(A))^2\mid Q\right]\\
    & = \mathbb{E}\left[ \sum_{k=1}^{\infty}\pi_k \delta^2_{\theta_k^x}(A) G^{y*2}_k(A)\right]\\
    & = \sum_{k=1}^{\infty} \mathbb{E}(\pi_k) H^x(A)\, \mathbb{E}\left[G^{y*2}_k(A)\right]\\
    & = H^x(A)\left[H^y(A)\frac{1}{1+\beta} + \frac{\beta}{1 + \beta} (H^y(A))^2\right].
\end{align*}

Therefore, 
 \begin{align}
    \nonumber Var\left[G_j(A)\right] & =  \mathbb{E}\left[(G_j(A))^2\right]  - \mathbb{E}^2\left[G_j(A)\right]\\
    \nonumber & = H^x(A)\left[H^y(A)\frac{1}{1+\beta} + \frac{\beta}{1 + \beta} (H^y(A))^2\right] - \left[H^x(A)H^y(A)\right]^2\\
    \nonumber & = H^x(A)H^y(A)\left[\frac{1}{1+\beta} + \frac{\beta H^y(A)}{1+\beta} - H^x(A)H^y(A)\right]\\
    & = H^x(A)H^y(A)\left[\frac{1}{1+\beta} + H^y(A) -\frac{H^y(A)}{1+\beta} - H^x(A)H^y(A)\right],
\end{align}
which concludes the proof.
\end{proof}

\clearpage\newpage
\begin{prp}
Consider two random distributions $G_j$ and $G_{j'}, \ j' \neq j$, defined on $(\mathbb{X}\times \mathbb{Y}, \mathcal{X}\otimes\mathcal{Y})$, with $G_j,\,G_{j'} \mid Q \overset{iid}{\sim} Q$ and $Q$ is defined by \eqref{eq::NAM2}. Additionally, consider two observations $\bm{z}_{ji} \mid G_j \sim G_j$ and $\bm{z}_{j'i'} \mid G_{j'} \sim G_{j'}$, where $i' \neq i$, $\bm{z}_{ji}= (\bm{x}_{j}, \bm{y}_{ji})$, and $\bm{z}_{j'i'}= (\bm{x}_{j'}, \bm{y}_{j'i'})$. Then, the prior co-clustering probabilities under the NAM are given by,
\begin{align*}
    P\left[G_j = G_{j'}\right] & = \frac{1}{1+ \alpha},\\
    P\left[\bm{z}_{ji} = \bm{z}_{j'i'}\right] & = \frac{1}{1+\alpha}\left[\frac{1}{1+\beta} + \frac{\alpha}{2\beta + 1}\right].
\end{align*}
\end{prp}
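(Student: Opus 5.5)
The plan is to condition on $Q$, split according to whether the two groups land on the same atom of $Q$, and use the stick-breaking moments already recorded in the proof of Proposition~\ref{suppl:prop1}. Write $S_j,S_{j'}$ for the group labels and $M_{ji},M_{j'i'}$ for the observation labels, as in \eqref{Membership}. Since $G_j=G_{j'}$ iff $S_j=S_{j'}$ (distinct atoms of $Q$ differ a.s. because $H^x$ is non-atomic), the first identity is
\[
P[G_j=G_{j'}]=\mathbb{E}\Big[\sum_{k\ge1}\pi_k^2\Big]=\frac{1}{1+\alpha}=q_1 ,
\]
using $\sum_k\mathbb{E}[\pi_k^2]=1/(1+\alpha)$ for $\bm\pi\sim\mathrm{GEM}(\alpha)$.

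For the second identity I split on the same event:
\[
P[\bm z_{ji}=\bm z_{j'i'}]=q_1\,P[\,\cdot\mid S_j=S_{j'}]+(1-q_1)\,P[\,\cdot\mid S_j\neq S_{j'}].
\]
On $\{S_j=S_{j'}=k\}$, both $\bm y$-draws come iid from $G_k^{y*}$, and the $\bm x$-components coincide. The $\bm y$-coincidence probability is therefore $\mathbb{E}\sum_l\omega_{lk}^2=1/(1+\beta)=q_2$, because the $\theta^y_l$ are a.s. distinct under non-atomic $H^y$. This contributes $q_1q_2=\frac{1}{1+\alpha}\cdot\frac{1}{1+\beta}$.

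On $\{S_j=k\neq k'=S_{j'}\}$, the $\bm y$-draws come from $G_k^{y*}$ and $G_{k'}^{y*}$, which share the common atoms $\theta^y_l$ but have independent $\mathrm{GEM}(\beta)$ weights. The coincidence probability of the $\bm y$-parts is
\[
\sum_l\mathbb{E}[\omega_{lk}]\,\mathbb{E}[\omega_{lk'}]=\sum_{l\ge1}\Big(\tfrac{1}{1+\beta}\Big)^2\Big(\tfrac{\beta}{1+\beta}\Big)^{2(l-1)}=\frac{1}{1+2\beta}=q_3 ,
\]
a geometric sum. Multiplying by $1-q_1=\alpha/(1+\alpha)$ gives the second summand $\frac{1}{1+\alpha}\cdot\frac{\alpha}{2\beta+1}$. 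Adding the two pieces yields exactly the displayed expression.

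The step I expect to be the real obstacle is justifying the second summand for the event exactly as worded, $\bm z_{ji}=\bm z_{j'i'}$ with $\bm z=(\bm x,\bm y)$. On $\{S_j\neq S_{j'}\}$ the $\bm x$-components are $\theta^x_k\neq\theta^x_{k'}$ a.s. Under a strict reading of joint equality, that branch contributes $0$, and the total is only $1/\{(1+\alpha)(1+\beta)\}$. So the displayed formula holds precisely when "$\bm z_{ji}=\bm z_{j'i'}$" is read as the observation-level co-clustering of the two observations through the shared atoms $\theta^y_l$. This is the CAM notion the paper's remark after Proposition~\ref{prop2} invokes ("equivalent to those of the CAM"), with the group-level component accounted for separately by $P[G_j=G_{j'}]$. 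I will adopt that reading explicitly in the proof and state it. I will also note in one line that the literal joint-equality event yields only the first term, so the reader sees exactly which event the formula describes.
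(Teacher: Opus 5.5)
Your argument is correct and is the CAM computation the paper's one-line proof defers to: split on $S_j=S_{j'}$, then use $\sum_k\mathbb{E}[\pi_k^2]=q_1$, $\sum_l\mathbb{E}[\omega_{lk}^2]=q_2$ and $\sum_l(\mathbb{E}[\omega_{lk}])^2=q_3$. Your caveat is also right: since $\bm{x}_j=\theta^x_{S_j}$ and $H^x$ is non-atomic, the literal joint event $\bm{z}_{ji}=\bm{z}_{j'i'}$ has probability only $q_1q_2=1/\{(1+\alpha)(1+\beta)\}$, so the displayed formula is really the probability of the observation-level event $\bm{y}_{ji}=\bm{y}_{j'i'}$ (equivalently $M_{ji}=M_{j'i'}$), which is the CAM quantity the paper's remark invokes, and stating that reading explicitly is the correct fix.
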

\begin{proof}
    The proof follows by mimicking the derivation of prior co-clustering probabilities for the common atoms model (CAM; \citealp{denti2023common}).
\end{proof}

\begin{prp}
Consider two random distributions $G_j$ and $G_{j'}, \ j' \neq j$, defined on $(\mathbb{X}\times \mathbb{Y}, \mathcal{X}\otimes\mathcal{Y})$, with $G_j,\,G_{j'} \mid Q \overset{iid}{\sim} Q$ and $Q$ is defined by \eqref{eq::NAM2}. Then, the prior correlation between the two random measures evaluated on a same Borel set $A \in \mathcal{X}\otimes\mathcal{Y}$ is given by,
\begin{align*}
        \rho_{jj'}^{NAM}(A) = q_1 + \frac{q_3(1-q_1) H^x(A)}{\left[q_2 + \frac{H^y(A)}{(1-H^y(A))} (1- H^x(A))\right]},
\end{align*}
where $q_1 = 1/{(1+\alpha)}, \: q_2 = 1/{(1+\beta)},$ and $q_3 = 1/{(1+ 2\beta)}$.
\end{prp}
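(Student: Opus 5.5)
The plan is to compute $\mathrm{Cov}[G_j(A),G_{j'}(A)]$ directly by conditioning on $Q$. I then divide it by the common variance from Proposition~\ref{suppl:prop1}; since $G_j$ and $G_{j'}$ are exchangeable, both marginal variances coincide. Throughout I use the same convention as in the proof of Proposition~\ref{suppl:prop1}: $G_j(A)$ evaluates as $\delta_{\theta_k^x}(A)\,G_k^{y*}(A)$ when $G_j$ is the $k$-th atom of $Q$. Consequently $\mathbb{E}[\delta_{\theta_k^x}(A)]=H^x(A)$, and the $\theta_k^x$ are independent across $k$.

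\textbf{Step 1 (cross moment).} Given $Q$, $G_j$ and $G_{j'}$ are iid draws from $Q$, so
\[
\mathbb{E}[G_j(A)G_{j'}(A)\mid Q]=\sum_k \pi_k^2\,\delta_{\theta_k^x}(A)\,G_k^{y*}(A)^2+\sum_{k\neq k'}\pi_k\pi_{k'}\,\delta_{\theta_k^x}(A)\,\delta_{\theta_{k'}^x}(A)\,G_k^{y*}(A)\,G_{k'}^{y*}(A).
\]
Taking expectations, I will use that $\sum_k\mathbb{E}\pi_k^2=q_1$ and that $\sum_{k\ne k'}\mathbb{E}\pi_k\pi_{k'}=1-q_1$. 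I will also use the independence of $\bm\pi$, $\bm\theta^x$ and $(\bm\omega,\bm\theta^y)$.

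\begin{itemize}
\item The diagonal term is $q_1H^x(A)\,[q_2H^y(A)+(1-q_2)H^y(A)^2]$. This reuses $\mathbb{E}[G_k^{y*}(A)^2]$ from the proof of Proposition~\ref{suppl:prop1}.
\item The off-diagonal term needs $\mathbb{E}[G_k^{y*}(A)G_{k'}^{y*}(A)]$ for $k\neq k'$. These measures share the atoms $\theta^y_l$ but have independent weights. Hence this expectation equals $s\,H^y(A)+(1-s)H^y(A)^2$, where
\[
s=\sum_l \mathbb{E}[\omega_{lk}]\,\mathbb{E}[\omega_{lk'}]=\sum_l \frac{1}{(1+\beta)^2}\Big(\frac{\beta}{1+\beta}\Big)^{2(l-1)}=\frac{1}{(1+\beta)^2-\beta^2}=q_3 .
\]
So the off-diagonal term is $(1-q_1)H^x(A)^2\,[q_3H^y(A)+(1-q_3)H^y(A)^2]$.
\end{itemize}

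\textbf{Step 2 (covariance).} Subtract $\mathbb{E}[G_j(A)]\,\mathbb{E}[G_{j'}(A)]=H^x(A)^2H^y(A)^2$ from the cross moment to obtain the covariance.

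\textbf{Step 3 (algebra).} This is the only place requiring care. Rather than expanding the ratio, I will compute $\mathrm{Cov}-q_1\,\mathrm{Var}$ with $\mathrm{Var}$ taken from Proposition~\ref{suppl:prop1}. The diagonal piece cancels against $q_1\mathrm{Var}$ except for $q_1H^x(A)^2H^y(A)^2$. This leftover combines with the off-diagonal piece and the subtracted $H^x(A)^2H^y(A)^2$ to give
\[
\mathrm{Cov}-q_1\mathrm{Var}=(1-q_1)\,q_3\,H^x(A)^2H^y(A)\bigl(1-H^y(A)\bigr).
\]
Dividing by $\mathrm{Var}=H^x(A)H^y(A)\,[q_2(1-H^y(A))+H^y(A)(1-H^x(A))]$ yields
\[
\rho=q_1+\frac{(1-q_1)\,q_3\,H^x(A)\,(1-H^y(A))}{q_2(1-H^y(A))+H^y(A)(1-H^x(A))}.
\]
Dividing numerator and denominator by $1-H^y(A)$ then produces exactly the stated form.

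I expect the main obstacle to be bookkeeping rather than conceptual. The point is to recognise that the cross-group term involves $q_3$ through the shared common atoms with independent GEM weights. One must also implicitly assume $H^x(A),H^y(A)\in(0,1)$ so that the division is legitimate.
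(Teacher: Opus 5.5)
Your proposal is correct and follows essentially the same route as the paper. Both condition on $Q$, split the cross moment into $k=k'$ and $k\neq k'$ terms, evaluate the shared-atom cross term via $\sum_l(\mathbb{E}[\omega_{lk}])^2=q_3$, subtract the product of means, and divide by the variance of Proposition~\ref{suppl:prop1}. The differences are cosmetic: the paper first treats two sets $A,B$ before setting $B=A$, and your identity $\mathrm{Cov}-q_1\mathrm{Var}=(1-q_1)q_3H^x(A)^2H^y(A)(1-H^y(A))$ is exactly its rearranged covariance, though your nondegeneracy condition can be weakened to $H^x(A)>0$ and $0<H^y(A)<1$, since the case $H^x(A)=1$ (CAM) is also covered.
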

\begin{proof}
For generality, first, we consider two Borel sets $A, B \in \mathcal{X}\otimes\mathcal{Y}$. Then, by definition,
\begin{equation*}
    cov\left[ G_j(A), G_{j'}(B) \right] = \mathbb{E}\left[ G_j(A) G_{j'}(B)\right] - \mathbb{E}\left[ G_j(A)\right] \mathbb{E}\left[ G_{j'}(B)\right]. 
\end{equation*}
From Proposition \ref{prop1}, $\mathbb{E}\left[ G_j(A)\right] = H^x(A)H^y(A)$ and similarly, $\mathbb{E}\left[ G_{j'}(B)\right] = H^x(B)H^y(B)$.
Now,
\begin{align}
    \nonumber \mathbb{E}\left[ G_j(A) G_{j'}(B)\right] & = \mathbb{E}\, \mathbb{E}\left[ G_j(A) G_{j'}(B)\mid Q\right]\\
    \nonumber & = \mathbb{E}\left[ \sum_{k=1}^{\infty} \pi_k^2 \delta_{\theta_k^x}(A)\delta_{\theta_k^x}(B) G_k^{y*}(A)G_k^{y*}(B) \right.\\
    \nonumber & \qquad \qquad \left. + \sum_{k_1\neq k_2} \pi_{k_1}\pi_{k_2} \delta_{\theta_{k_1}^x}(A)\delta_{\theta_{k_2}^x}(B) G_{k_1}^{y*}(A)G_{k_2}^{y*}(B)\right]\\
    \nonumber & = H^x(A\cap B) \sum_{k=1}^{\infty} \mathbb{E}\left[\pi_k^2\right]  \mathbb{E}\left[G_k^{y*}(A)G_k^{y*}(B)\right]  \\
    \label{eq:EGjGj'}& \qquad \qquad  + H^x(A)H^x(B)\mathbb{E}\left[\sum_{k_1\neq k_2} \pi_{k_1}\pi_{k_2}\right] \mathbb{E}\left[G_{k_1}^{y*}(A)G_{k_2}^{y*}(B)\right]
\end{align}
First, note that, 
\begin{align}
    \nonumber \mathbb{E}\left[G_k^{y*}(A)G_k^{y*}(B)\right]  & = H^y(A\cap B)\sum_{k=1}^{\infty}\mathbb{E}\left[\omega_{lk}^2\right] +  H^y(A)H^y(B)\:\mathbb{E}\left[ \sum_{l\neq l'}\omega_{lk}\omega_{l'k}\right]\\
    \nonumber & = \frac{1}{1+\beta} H^y(A\cap B) + H^y(A)H^y(B) \left[ 1- \sum_{k=1}^{\infty}\mathbb{E}\left[\omega_{lk}^2\right] \right]\\
    \label{eq:EGkGk}& = \frac{1}{1+\beta} H^y(A\cap B) + \frac{\beta}{1 + \beta}H^y(A)H^y(B). 
\end{align}
Next, 
\begin{align}
    \nonumber \mathbb{E}\left[G_{k_1}^{y*}(A)G_{k_2}^{y*}(B)\right]  & = \mathbb{E}\left[ \sum_{l=1}^{\infty}\omega_{lk_1}\delta_{\theta_l^y}(A) \sum_{l'=1}^{\infty}\omega_{l'k_2}\delta_{\theta_{l'}^y}(B)\right]\\
    \nonumber & = \mathbb{E}\left[ \sum_{l=1}^{\infty}\omega_{lk_1} \omega_{lk_2}\delta_{\theta_l^y}(A)\delta_{\theta_l^y}(B) +  \sum_{l\neq l'}\omega_{lk_1}\omega_{l'k_2}\delta_{\theta_{l}^y}(A)\delta_{\theta_{l'}^y}(B)\right]\\\
    \nonumber  & = H^y(A\cap B)\left[\sum_{l=1}^{\infty} \mathbb{E}[\omega_{lk_1}] \mathbb{E}[\omega_{lk_2}]\right] +  H^y(A) H^y(B)\left[\mathbb{E}\left\{\sum_{l\neq l'}\omega_{lk_1}\omega_{l'k_2}\right\}\right]\\
     \nonumber & \overset{iid}{=} H^y(A\cap B)\left[\sum_{l=1}^{\infty} \left(\mathbb{E}[\omega_{lk}]\right)^2\right] +  H^y(A) H^y(B)\left[1 - \sum_{l=1}^{\infty} \left(\mathbb{E}[\omega_{lk}]\right)^2 \right]\\
     \label{eq:EGk1Gk2}& = \frac{1}{1+ 2\beta} H^y(A \cap B) + \frac{2\beta}{1+ 2\beta} H^y(A)H^y(B),
\end{align}
where the last line follows again from straightforward calculations using the stick-breaking representation of $\bm{\omega}_k$. Thus, plugging in equations \eqref{eq:EGkGk} and \eqref{eq:EGk1Gk2} in \eqref{eq:EGjGj'}, we get,
\begin{align}
    \nonumber \mathbb{E}\left[ G_j(A) G_{j'}(B)\right] & = H^x(A\cap B)\left[ \frac{1}{1+\beta} H^y(A\cap B) + \frac{\beta}{1 + \beta}H^y(A)H^y(B)\right] \sum_{k=1}^{\infty} \mathbb{E}\left[\pi_k^2\right]  \\
    \nonumber & \  + H^x(A)H^x(B) \left[\frac{1}{1+ 2\beta} H^y(A \cap B) + \frac{2\beta}{1+ 2\beta} H^y(A)H^y(B) \right]\mathbb{E}\left[\sum_{k_1\neq k_2} \pi_{k_1}\pi_{k_2}\right]\\
    \nonumber & = H^x(A\cap B)\left[ q_2 H^y(A\cap B) + (1-q_2)H^y(A)H^y(B)\right]  q_1  \\
    & \  + H^x(A)H^x(B) \left[q_3 H^y(A \cap B) + (1-q_3) H^y(A)H^y(B) \right](1-q_1)
\end{align}
Finally, we get,
\begin{align*}
    cov\left[ G_j(A), G_{j'}(B) \right] & = q_1H^x(A\cap B)\left[ q_2 H^y(A\cap B) + (1-q_2)H^y(A)H^y(B)\right] \\
    & \  + (1-q_1)H^x(A)H^x(B) \left[q_3 H^y(A \cap B) + (1-q_3) H^y(A)H^y(B) \right] \\
    &  \ - H^x(A)H^x(B)H^y(A)H^y(B)
\end{align*}
and a bit of algebra gives,
\begin{align*}
    cov\left[ G_j(A), G_{j'}(B) \right] & = q_1\bigg[ H^x(A\cap B) \left\{q_2H^y(A\cap B) + (1-q_2) H^y(A)H^y(B)\right\} \\
    & \quad \qquad - H^x(A)H^x(B)\left\{ q_3 H^y(A\cap B) + (1-q_3) H^y(A)H^y(B)\right\}\bigg]\\
    & \quad + q_3 H^x(A)H^x(B) \left\{H^y(A\cap B) - H^y(A)H^y(B)\right\}.
\end{align*}
For the same Borel set $A$, we get,
\begin{align}
    \nonumber cov\left[ G_j(A), G_{j'}(A) \right] & = q_1\bigg[ H^x(A) \left\{q_2H^y(A) + (1-q_2) (H^y(A))^2\right\}  \\
    \nonumber & \quad\qquad - (H^x(A))^2\left\{ q_3 H^y(A) + (1-q_3) (H^y(A))^2\right\}\bigg]\\
    \nonumber & \quad + q_3 (H^x(A))^2) \left\{H^y(A) - (H^y(A))^2\right\}\\
    \nonumber & = H^x(A)H^y(A)\bigg[ q_1 \bigg(q_2 + H^y(A) - q_2 H^y(A) - H^x(A)H^y(A)\bigg) \\
    \label{eq:CovGjAGj'A}& \qquad \qquad \qquad \qquad + q_3(1-q_1)H^x(A)(1-H^y(A)) \bigg]
\end{align}
Finally,  as
\begin{align*}
    \rho_{jj'}^{NAM}(A) & = corr\left[ G_j(A), G_{j'}(A) \right]\\
    & = \frac{cov\left[ G_j(A), G_{j'}(A) \right]}{\sqrt{Var\left[G_j(A)\right]}\: \sqrt{Var\left[G_{j'}(A)\right]}}\\
    & = q_1 + \frac{q_3(1-q_1)H^x(A)(1- H^y(A))}{\left(q_2 + H^y(A) - q_2 H^y(A) - H^x(A)H^y(A)\right)},
\end{align*}
where the last line follows from \eqref{eq:CovGjAGj'A} and the expression of prior variance in Proposition \ref{prop1}. Finally, the proof follows from some algebra.
\end{proof}

\clearpage
\newpage
\section{Graphical Representation of the NAM Mixture Model}\label{supp::subsec:NAMM_Graphical}

In this section, we present the graphical model representation of the NAM mixture model (Figure~\ref{fig:NAM_Graphical_Rep}).

\begin{figure}[ht]
\centering
    \begin{tikzpicture}[
    scale=0.8,
    transform shape,
    >=Latex,
    every node/.style={font=\Large},
    circ/.style={draw,circle,minimum size=12mm,inner sep=0pt},
    plate/.style={draw,rectangle,inner sep=10pt},
    latent/.style={draw, rectangle, minimum width=1.1cm, minimum height=1.1cm, rounded corners}
    ]
    
    \node[plate, minimum width=15cm, minimum height=9.2cm] (plateJ) {};
    \node[anchor=west] at ($(plateJ.south east)+(-0.75cm, 0.55cm)$) {$J$};
    
    \node[circ] (Sj) at ($(plateJ.north west)+(2.6cm,-1.85cm)$) {$S_j$};
    
    \node[
    draw, rectangle,
    minimum width=1cm,
    minimum height=1cm,
    fill=gray!15
    ] (xj) at ($(Sj)+(0cm,-2.8cm)$) {$\bm{x}_j$};
    
    \node[plate, minimum width=6.5cm, minimum height=7.2cm, anchor=north west] (plateI)
    at ($(plateJ.north west)+(6.6cm,-1.15cm)$) {};
    \node[anchor=west] at ($(plateI.south east)+(-0.85cm,0.55cm)$) {$n_j$};
    
    \node[circ] (Mji) at ($(plateI.north)+(0cm,-2.45cm)$) {$M_{ji}$};
    
    \node[
    draw, rectangle,
    minimum width=1cm,
    minimum height=1cm,
    fill=gray!15
    ] (yji) at ($(Mji)+(0cm,-2.8cm)$) {$\bm{y}_{ji}$};
    
    \node[circ] (pi)      at ($(Sj)+(0cm, 3.45cm)$) {$\bm\pi$};
    \node[circ] (theta_x) at ($(pi)+(-2cm, 0cm)$) {$\theta_k^x$};
    \node[circ] (beta)    at ($(pi)+(0cm, 2cm)$) {$\beta$};
    
    \node[latent] (H_x)   at ($(theta_x)+(0cm, 2cm)$) {\Large $H^x$};
    
    \node[circ] (omega)   at ($(pi)+(7.25cm, 0cm)$) {$\bm{\omega}_k$};
    \node[circ] (theta_y) at ($(omega)+(-2cm, 0cm)$) {$\theta_l^y$};
    \node[circ] (alpha)   at ($(omega)+(0cm, 2cm)$) {$\alpha$};
    
    \node[latent] (H_y)   at ($(theta_y)+(0cm, 2cm)$) {\Large $H^y$};
    
    \draw[->, line width=1.5pt, -{Latex[length=2mm, width=2mm]}] (H_x) -- (theta_x);
    \draw[->, line width=1.5pt, -{Latex[length=2mm, width=2mm]}] (beta) -- (pi);
    \draw[->, line width=1.5pt, -{Latex[length=3mm, width=3mm]}] (pi) -- (Sj.north);
    \draw[->, dashed, line width=2pt, -{Latex[length=3mm, width=3mm]}] (theta_x)
    to[out=270, in=180] (xj);
    
    \draw[->, line width=1.5pt, -{Latex[length=2mm, width=2mm]}] (H_y) -- (theta_y);
    \draw[->, line width=1.5pt, -{Latex[length=2mm, width=2mm]}] (alpha) -- (omega);
    \draw[->, line width=1.5pt, -{Latex[length=3mm, width=3mm]}] (omega) -- (Mji.north);
    \draw[->, dashed, line width=2pt, -{Latex[length=3mm, width=3mm]}] (theta_y)
    to[out=270, in=180] (yji);
    
    \draw[->, line width=2pt, -{Latex[length=3mm, width=3mm]}, dashed] (Sj) -- (xj);
    \draw[->, line width=1.5pt, -{Latex[length=3mm, width=3mm]}] (Sj) to[out=360, in=140] (Mji);
    \draw[->, line width=2pt, -{Latex[length=3mm, width=3mm]}, dashed] (Mji) -- (yji);
    
    \node[
    draw,
    rectangle,
    fit=(theta_x),
    inner sep=8.5pt
    ] (hyperplate1) {};
    \node[anchor=south east, font=\small]
    at ($(hyperplate1.south east)+(0.1pt,2pt)$)
    {$\infty$};
    
    \node[
    draw,
    rectangle,
    fit=(theta_y),
    inner sep=8.5pt
    ] (hyperplate2) {};
    \node[anchor=south east, font=\small]
    at ($(hyperplate2.south east)+(0.1pt,2pt)$)
    {$\infty$};
    
    \node[plate = $\infty$
    draw,
    rectangle,
    fit=(omega),
    inner sep=8.5pt
    ] (hyperplate3) {};
    \node[anchor=south east, font=\small]
    at ($(hyperplate3.south east)+(0.1pt,2pt)$)
    {$\infty$};
    
    \end{tikzpicture}
\caption{Graphical representation of the NAM mixture model. Each node in the graph is associated with a random variable, where shaded rectangle denotes an observed variable. Rectangular plates denote replication of the model within the rectangle. The dashed lines represent the random variables that ultimately impact the observed variable. }
\label{fig:NAM_Graphical_Rep}
\end{figure}
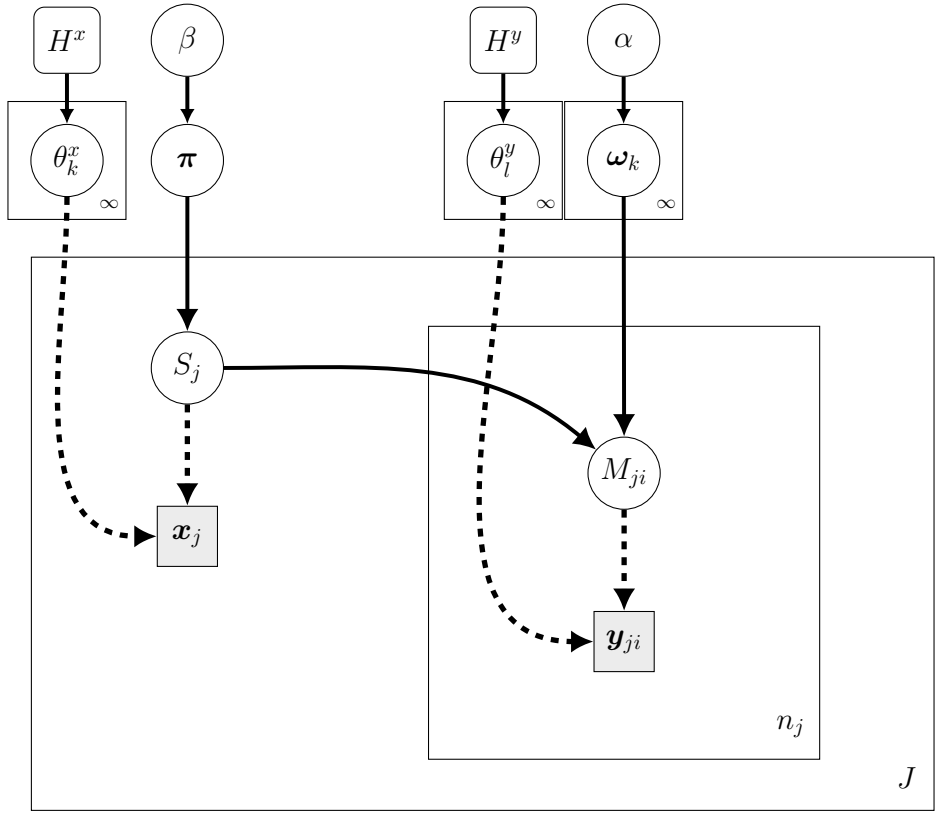

\clearpage\newpage
\section{Coordinate-Ascent Variational Inference Algorithm for Nested Atoms Model}\label{subsec:CAVI}
In this section, we detail the proposed coordinate-ascent variational inference (CAVI) algorithm for the NAM (given in Algorithm~\ref{alg:NAM_CAVI}). Particularly, we extend the grouped-data variational inference (VI) framework of \citealp{fiSAN} to our nested data setting. Additionally, note that since the CAM mixture model is a special case of the proposed NAM framework, Algorithm~\ref{alg:NAM_CAVI} can be readily adapted to facilitate posterior inference in the CAM without requiring substantial modifications. Specifically, in Step 1 of Algorithm~\ref{alg:NAM_CAVI}, removing the term $\frac{1}{2} \left(\ell^{(x,1)}_{k} +\ell^{(x,2)}_{jk}\right)$ and omitting the update of the variational parameter in Step 5 leads to a CAVI procedure tailored for the CAM.
\begin{breakablealgorithm}
 \caption{CAVI updates for the NAM}
 \label{alg:NAM_CAVI}
{\footnotesize
\hspace*{-6.5em}\textbf{Input:} $t \gets 0$. Randomly initialize $\boldsymbol{\Lambda}^{(0)}$. Define the threshold $\epsilon$ and randomly set $\Delta>\epsilon$.
\begin{algorithmic}
    \WHILE {$\Delta(t-1,t) > \varepsilon$}
    \STATE \hspace*{-1em}Set $t = t+1$; Let $\boldsymbol{\Lambda}^{(t-1)} = \boldsymbol{\Lambda}^{(t)};$ \\
    Update the variational parameters according to the following CAVI steps:
    \begin{enumerate}
    \item For $j=1,\dots,J$, $q^\star(S_j)$ is a $K$-dimensional multinomial, with $q^\star(S_j=k)=\rho_{jk}$ for $k=1,\dots,K$,
      \begin{align*}
       \log\rho_{jk} = 
            g(\bar{a}_k,\bar{b}_k) + \sum_{r=1}^{k-1}g(\bar{b}_r,\bar{a}_r) & +  \sum_{l=1}^{L} \left[ \left(\sum_{i=1}^{n_j} \xi_{jil}\right)\left\{ g(\bar{a}_{lk}, \bar{b}_{lk}) + \sum_{t=1}^{l-1} g(\bar{b}_{tk}, \bar{a}_{tk})\right\}\right] + \\
       &  +  \frac{1}{2} \left(\ell^{(x,1)}_{k} +\ell^{(x,2)}_{jk}\right) ,
    \end{align*}
    where 
    \begin{align*}
    g(x,y) & = \psi(x) - \psi(x+y), \ \ \text{with $\psi$ denoting the digamma function}, \\
    \ell^{(x,1)}_{k} & =  \sum_{i=1}^q\psi\left( (c^x_k-i+1)/2\right) +q\log2 + \log|\boldsymbol{D}^x_k|, \ \ \text{and}\\
    \ell^{(x,2)}_{jk} & = - q/t^x_k - c^x_k  (\bm{x}_j-\bm{m}^x_k)^T \boldsymbol{D}^x_k(\bm{x}_j-\bm{m}^x_k)
    \end{align*}
    \item For $j=1,\dots,J$ and $i=1,\dots,n_j$, $q^\star(M_{ji})$ is a $L$-dimensional multinomial, with $q^\star(M_{ji}=l)=\xi_{jil}$ for $l=1,\dots,L$,
    \begin{align*}
       \log\xi_{jil} = \frac{1}{2} \left(\ell^{(y,1)}_{l} +\ell^{(y,2)}_{jil}\right) + \sum_{k=1}^{K}\rho_{jk}\left\{g(\bar{a}_{lk}, \bar{b}_{lk}) + \sum_{t=1}^{l-1}g(\bar{b}_{tk}, \bar{a}_{tk})\right\},
    \end{align*}
    where 
    \begin{align*}
    \ell^{(y,1)}_{l} & =  \sum_{i=1}^p\psi\left( (c^y_l-i+1)/2\right) +p\log2 + \log|\boldsymbol{D}^y_l| \ \ \text{and}\\
    \ell^{(y,2)}_{jil} & = - p/t^y_l - c^y_l  (\bm{y}_{ji}-\bm{m}^y_l)^T \boldsymbol{D}^y_l(\bm{y}_{ji}-\bm{m}^y_l)
    \end{align*}
    
    \item For $k=1,\dots,K$ and $l=1,\ldots,L-1$, $q^\star(u_{lk})$ is a $Beta(\bar{a}_{lk},\bar{b}_{lk})$ distribution with $$
    \bar{a}_{lk} = 1 + \sum_{j=1}^{J}\sum_{i=1}^{n_j}\xi_{jit}, \quad \bar{b}_{lk} = r_1/r_2 + \sum_{j=1}^{J}\left\{\rho_{jk} \left( \sum_{i=1}^{n_j}\sum_{t=l+1}^{L}\xi_{jit} \right) \right\}.$$
    
    \item For $k=1,\dots,K-1$, $q^\star(v_{k})$ is a $Beta(\bar{a}_k,\bar{b}_k)$ distribution with $$
        \bar{a}_{k} = 1 + \sum_{j=1}^J\rho_{jk}, \quad \bar{b}_{k} = s_1/s_2 + \sum_{j=1}^J \sum_{l=k+1}^{K}\rho_{jl}.$$
 
    \item   For $k=1,\dots,K$, {$q^\star(\bm{\mu}^x_k, \bm{\Lambda}^x_k)$} is a $\mathrm{NW}(\bm{m}^x_k,t^x_k,c^x_k,\boldsymbol{D}^x_k)$ distribution with parameters
    \begin{align*}
        \bm{m}^x_k &= t_k^{x^{-1}}(\lambda^x_0\:\boldsymbol{\mu}^x_0+ N^x_{k}\bar{\bm{x}}_k),  \quad 
        t^x_k = \lambda^x_0 + N^x_{k}, \quad 
        c^x_k = \nu^x_0 + N^x_{k},\\
        \boldsymbol{D}_k^{x^{-1}} &= \boldsymbol{\Psi}_0^{x^{-1}} +\frac{\lambda^x_0 N^x_{k}}{\lambda^x_0 + N^x_{k}} \left\{\left(\bm{\bar{x}}_k-\bm{\mu}^x_0\right) \left(\bm{\bar{x}}_k-\bm{\mu}^x_0\right)^T\right\} + \bm{\mathcal{S}}_k^x,
    \end{align*}
    where 
    $$\begin{gathered}
        N^x_{k} = \sum_{j=1}^J\rho_{jk}, \qquad \bm{\bar{x}}_k = N_{k}^{x^{-1}} \left(\sum_{j=1}^J\rho_{jk}\:\bm{x}_j\right), \\
        \bm{\mathcal{S}}_{k}^x = \sum_{j=1}^J\rho_{jk}\: \left\{\left(\bm{x}_j - \bm{\bar{x}}_k\right)  \left(\bm{x}_j - \bm{\bar{x}}_k\right)^T\right\}.
    \end{gathered}$$

    \item   For $l=1,\dots,L$, {$q^\star(\bm{\mu}^y_l, \bm{\Lambda}^y_l)$} is a $\mathrm{NW}(\bm{m}^y_l,t^y_l,c^y_l,\boldsymbol{D}^y_l)$ distribution with parameters
    \begin{align*}
        \bm{m}^y_l &= t_l^{y^{-1}}(\lambda^y_0\:\boldsymbol{\mu}^y_0+ N^y_{l}\bar{\bm{y}}_l),  \quad 
        t^y_l = \lambda^y_0 + N^y_{l}, \quad 
        c^y_l = \nu^y_0 + N^y_{l},\\
        \boldsymbol{D}_l^{y^{-1}} &= \boldsymbol{\Psi}_0^{y^{-1}} +\frac{\lambda^y_0 N^y_{l}}{\lambda^y_0 + N^y_{l}} \left\{\left(\bm{\bar{y}}_l-\bm{\mu}^y_0\right)\left(\bm{\bar{y}}_l-\bm{\mu}^y_0\right)^T\right\} + \bm{\mathcal{S}}_l^y,
    \end{align*}
    where 
    $$\begin{gathered}
        N^y_l = \sum_{j=1}^{J}\sum_{i=1}^{n_j}\xi_{jil}, \qquad \bm{\bar{y}}_l = N_l^{y^{-1}} \sum_{j=1}^{J}\sum_{i=1}^{n_j}\xi_{jil}\:\bm{y}_{ji}, \\
        \bm{\mathcal{S}}^y_l = \sum_{j=1}^{J}\sum_{i=1}^{n_j}\xi_{jil}\:\left\{ \left(\bm{y}_{ji}-\bm{\bar{y}}_{l}\right)  \left(\bm{y}_{ji}-\bm{\bar{y}}_{l}\right)^T\right\}.
    \end{gathered}$$

    \item  $q^\star(\alpha)$ is a $\mathrm{Gamma}(s_1,s_2)$ distribution with parameters
    $$s_1 = a_\alpha + (K-1), \quad s_2 = b_\alpha - \sum_{k=1}^{K-1}g(\bar{b}_{k},\bar{a}_{k}).$$

     \item  $q^\star(\beta)$ is a $\mathrm{Gamma}(r_1,r_2)$ distribution with parameters
    $$r_1 = a_\beta + K(L-1), \quad r_2 = b_\beta - \sum_{l=1}^{L-1}\left\{\sum_{k=1}^{K} g(\bar{b}_{lk},\bar{a}_{lk})\right\}.$$
   \end{enumerate}
    Store the updated parameters in $\boldsymbol{\Lambda}$ and let $\boldsymbol{\Lambda}^{(t)}=\boldsymbol{\Lambda}$;
    \\
    Compute $\Delta(t-1,t) = \mathrm{ELBO}(\boldsymbol{\Lambda}^{(t)})- \mathrm{ELBO}(\boldsymbol{\Lambda}^{(t-1)})$.\\ 
 \ENDWHILE
 \STATE \textbf{Return} $\boldsymbol{\Lambda}^\star$, containing the optimized variational parameters.
\end{algorithmic}
}
\end{breakablealgorithm}

\clearpage\newpage
\section{Computation of the Evidence Lower Bound in the variational inference approach}\label{subsec::elbo}
Here we outline the  evidence lower bound (ELBO) evaluation for the NAM.
Recall that we use the notation $g(x,y) = \psi(x) - \psi(x+y)$. The minimization of the Kullback-Leibler divergence between the posterior and the variational distributions is equivalent to the maximization of the ELBO, expressed as
$$ELBO(q)=\mathbb{E}_q\left[\log p(\bm{y}, \bm{x},\boldsymbol{\Theta})\right] - \mathbb{E}_q\left[\log q_{\boldsymbol{\Lambda}}(\boldsymbol{\Theta})\right].$$

The first term, $\mathbb{E}_q\left[\log p(\bm{y}, \bm{x},\boldsymbol{\Theta})\right] $, can be decomposed into the following components:

\begin{enumerate}
    \item 
    $$\mathbb{E}_q[\log p(\bm{y}\mid \bm{x},\{\bm{\mu}^y_{l},\bm{\Lambda}^y_{l}\}_{l=1}^L)] = \frac{1}{2}\left\{\sum_{j=1}^{J}\sum_{i=1}^{n_j}\sum_{l=1}^{L}\xi_{jil}\left(\ell_{l}^{(y,1)} + \ell_{jil}^{(y,2)} - p \log2\pi \right)\right\}.$$
    \item $$\mathbb{E}_q[\log p(\bm{x}\mid \bm{S}, \{\bm{\mu}^x_{k},\bm{\Lambda}^x_{k}\}_{k=1}^K)] = \frac{1}{2}\left\{\sum_{j=1}^{J}\sum_{k=1}^{K}\rho_{jk}\left(\ell_{k}^{(x,1)} + \ell_{jk}^{(x,2)} - q \log2\pi\right) \right\}.$$
    \item $$
    \mathbb{E}_q\left[\log p(\bm{M} \mid \bm{S}, \boldsymbol{u})\right] = \sum_{j=1}^J \sum_{i=1}^{n_j} \sum_{k=1}^{K} \sum_{l=1}^{L} \rho_{jk}\:\xi_{jil}\left\{
    g(\bar{a}_{lk},\bar{b}_{lk}) + \sum_{r<l} g(\bar{b}_{rk},\bar{a}_{rk})
    \right\}.$$
    \item $$
    \mathbb{E}_q\left[\log p(\bm{S} \mid \boldsymbol{v})\right] = 
    \sum_{j=1}^J \sum_{k=1}^{K} \rho_{jk}\left\{
    g(\bar{a}_{k},\bar{b}_{k}) + \sum_{l<k} g(\bar{b}_{l},\bar{a}_{l}))
    \right\}.$$
    \item $$\mathbb{E}_q\left[\log p(\bm{v}\mid \alpha)\right] = 
    (K-1)\left(\psi(s_1)-\log(s_2)\right) + \left\{\left(\frac{s_1}{s_2}-1\right)
    \sum_{k=1}^{K-1} g(\bar{b}_{k},\bar{a}_{k})\right\}.$$ 
    \item $$\mathbb{E}_q\left[\log p(\bm{u}\mid \beta)\right] = 
    K(L-1)\left(\psi(r_1)-\log(r_2)\right) + \left\{\left(\frac{r_1}{r_2}-1\right)
    \sum_{k=1}^{K}\sum_{l=1}^{L-1} g(\bar{b}_{lk},\bar{a}_{lk})\right\}.$$ 
    \item $$\mathbb{E}_q\left[\log p(\alpha)\right] =  \log(\mathcal{C}_{\alpha}(a_\alpha,b_\alpha)) + (a_\alpha-1)(\psi(s_1)-\log(s_2)) - b_\alpha \frac{s_1}{s_2},$$ where $\mathcal{C}_{\alpha}(\cdot)$ is the normalizing constant of a Gamma distribution.
    \item $$\mathbb{E}_q\left[\log p(\beta)\right] =  \log(\mathcal{C}_{\beta}(a_\beta,b_\beta)) + (a_\beta-1)(\psi(r_1)-\log(r_2)) - b_\beta \frac{r_1}{r_2}.$$
    \item 
    \begin{align*}
\mathbb{E}_q\!\left[\log p\!\left(\prod_{l=1}^L\{\bm{\mu}^y_l,\bm{\Lambda}^y_l\}\right)\right] 
={}& L \log \bm{\mathcal{B}}\!\left(\boldsymbol{\Psi}^y_0,\nu^y_0\right) \nonumber\\
&\quad + \tfrac{1}{2}\Big\{(\nu^y_0 - p - 1)\sum_{l=1}^L \ell^{(y,1)}_{l}\Big\} \nonumber\\
&\quad - \tfrac{1}{2}\Big\{\sum_{l=1}^L c^y_l \,\bm{\mathcal{T}}\!\big(\boldsymbol{\Psi}_0^{y^{-1}}\boldsymbol{D}^y_l\big)\Big\} \nonumber\\
&\quad + \tfrac{1}{2}\Bigg[\sum_{l=1}^L \Big\{ 
     p\log\!\Big(\tfrac{\lambda^y_0}{2\pi}\Big)
     + \ell^{(y,1)}_{l}
     - \tfrac{p\lambda^y_0}{t^y_l} \nonumber\\
&\qquad\qquad\qquad 
     - \lambda^y_0 c^y_l (\bm{m}^y_l - \boldsymbol{\mu}^y_0)^T 
       \boldsymbol{D}^y_l (\bm{m}^y_l - \boldsymbol{\mu}^y_0)
   \Big\}\Bigg],
\end{align*}
    where $\bm{\mathcal{T}}(\cdot)$ is the trace operator and $\bm{\mathcal{B}}\left(\boldsymbol{\Psi}^y_0,\nu^y_0\right)$ is the inverse of the normalizing constant of a Wishart distribution~\citep[see, for more details, Appendix B of][]{Bishop2006}.
    %
    \item \begin{align*}
\mathbb{E}_q\!\left[\log p\!\left(\prod_{k=1}^K \{\bm{\mu}^x_k,\bm{\Lambda}^x_k\}\right)\right] 
={}& K \log \bm{\mathcal{B}}\!\left(\boldsymbol{\Psi}^x_0,\nu^x_0\right) \nonumber\\
&\quad + \tfrac{1}{2}\Big\{(\nu^x_0 - q - 1)\sum_{k=1}^K \ell^{(x,1)}_{k}\Big\} \nonumber\\
&\quad - \tfrac{1}{2}\Big\{\sum_{k=1}^K c^x_k \,\bm{\mathcal{T}}\!\big(\boldsymbol{\Psi}_0^{x^{-1}}\boldsymbol{D}^x_k\big)\Big\} \nonumber\\
&\quad + \tfrac{1}{2}\Bigg[\sum_{k=1}^K \Big\{
     q \log\!\Big(\tfrac{\lambda^x_0}{2\pi}\Big)
     + \ell^{(x,1)}_{k}
     - \tfrac{q \lambda^x_0}{t^x_k} \nonumber\\
&\qquad\qquad\qquad 
     - \lambda^x_0 c^x_k (\bm{m}^x_k - \boldsymbol{\mu}^x_0)^T
       \boldsymbol{D}^x_k (\bm{m}^x_k - \boldsymbol{\mu}^x_0)
   \Big\}\Bigg].
\end{align*}
\end{enumerate}

The second term is decomposed into the following six components:

\begin{enumerate}
    \item $$\mathbb{E}_q\left[\log q(\bm{S})\right] = \sum_{j=1}^J\sum_{k=1}^K  \rho_{jk}\log(\rho_{jk}).$$    
    \item $$\mathbb{E}_q\left[\log q(\bm{M})\right] = \sum_{j=1}^J \sum_{i=1}^{n_j} \sum_{l=1}^L \xi_{jil}\log(\xi_{jil}).$$
    \item $$\mathbb{E}_q\left[\log q(\bm{v})\right] =\sum_{k=1}^{K-1}  \{\log(\mathcal{C}_{\bm{v}}(\bar{a}_{k},\bar{b}_{k})) + (\bar{a}_{k}-1)g(\bar{a}_{k},\bar{b}_{k}) + (\bar{b}_{k}-1)g(\bar{b}_{k},\bar{a}_{k})\},$$ where $\mathcal{C}_{\bm{v}}(\cdot)$ is the normalizing constant of a Beta distribution.
    \item $$\mathbb{E}_q\left[\log q(\bm{u})\right] =\sum_{k=1}^K \sum_{l=1}^{L-1}  \{\log(\mathcal{C}_{\bm{u}}(\bar{a}_{lk},\bar{b}_{lk}))+
    (\bar{a}_{lk}-1)g(\bar{a}_{lk},\bar{b}_{lk})+ (\bar{b}_{lk}-1)g(\bar{b}_{lk},\bar{a}_{lk})\}.$$
    \item $$\mathbb{E}_q\left[\log q(\prod_{k=1}^K\{\bm{\mu}^x_k,\bm{\Lambda}^x_k\})\right] = \sum_{k=1}^K \left[\frac12\:\ell^{(x,1)}_{k} + \frac12\:q\left\{\log(t^x_k/2\pi)-1\right\} - \bm{\mathcal{H}}\left(q(\bm{\Lambda}^x_k)\right)\right],$$
    where $\bm{\mathcal{H}}\left(q(\bm{\Lambda}^x_k)\right) = -\log(\bm{\mathcal{B}}(D^x_k, c^x_k)) - 0.5\:(c^x_k - q - 1)\:\ell_k^{(x, 1)} + 0.5 \: c^x_k\:q$ is the entropy of a Wishart distribution.
    \item $$\mathbb{E}_q\left[\log q(\prod_{l=1}^L\{\bm{\mu}^y_l,\bm{\Lambda}^y_l\})\right] = \sum_{l=1}^L \left[\frac12\:\ell^{(y,1)}_{l} + \frac12\:p\left\{\log(t^y_l/2\pi)-1\right\} - \bm{\mathcal{H}}\left(q(\bm{\Lambda}^y_l)\right)\right],$$
    where $\bm{\mathcal{H}}\left(q(\bm{\Lambda}^y_l)\right) = -\log(\bm{\mathcal{B}}(D^y_l, c^y_l)) - 0.5\:(c^y_l - p - 1)\:\ell_l^{(y, 1)} + 0.5 \: c^y_l\:p.$
    \item $$\mathbb{E}_q\left[\log(q(\alpha))\right] = \log(\mathcal{C}_{\alpha}(s_1,s_2)) + (s_1-1)(\psi(s_1)-\log(s_2)) - s_1,$$ where $\mathcal{C}_{\alpha}(\cdot)$ is the normalizing constant of a Gamma distribution.
    \item $$\mathbb{E}_q\left[\log(q(\beta))\right] = \log(\mathcal{C}_{\beta}(r_1,r_2)) + (r_1-1)(\psi(r_1)-\log(r_2)) - r_1.$$
\end{enumerate}

\section{Simulations}\label{sec::suppl_simulations}
In this section, we provide the plots from the simulation studies, reported in the main manuscript. Particularly, Figures \ref{fig:NAM_CAM_fiSAN_OC_2D} - \ref{fig:NAM_CAM_fiSAN_OC_10D} display the boxplots of observational clustering (OC) accuracy for each group separately, computed across 50 independent replications for all the methods and dimensions of group- and observation-level variables.

\begin{figure}[!htp]
\centering
\includegraphics[width=1\columnwidth]{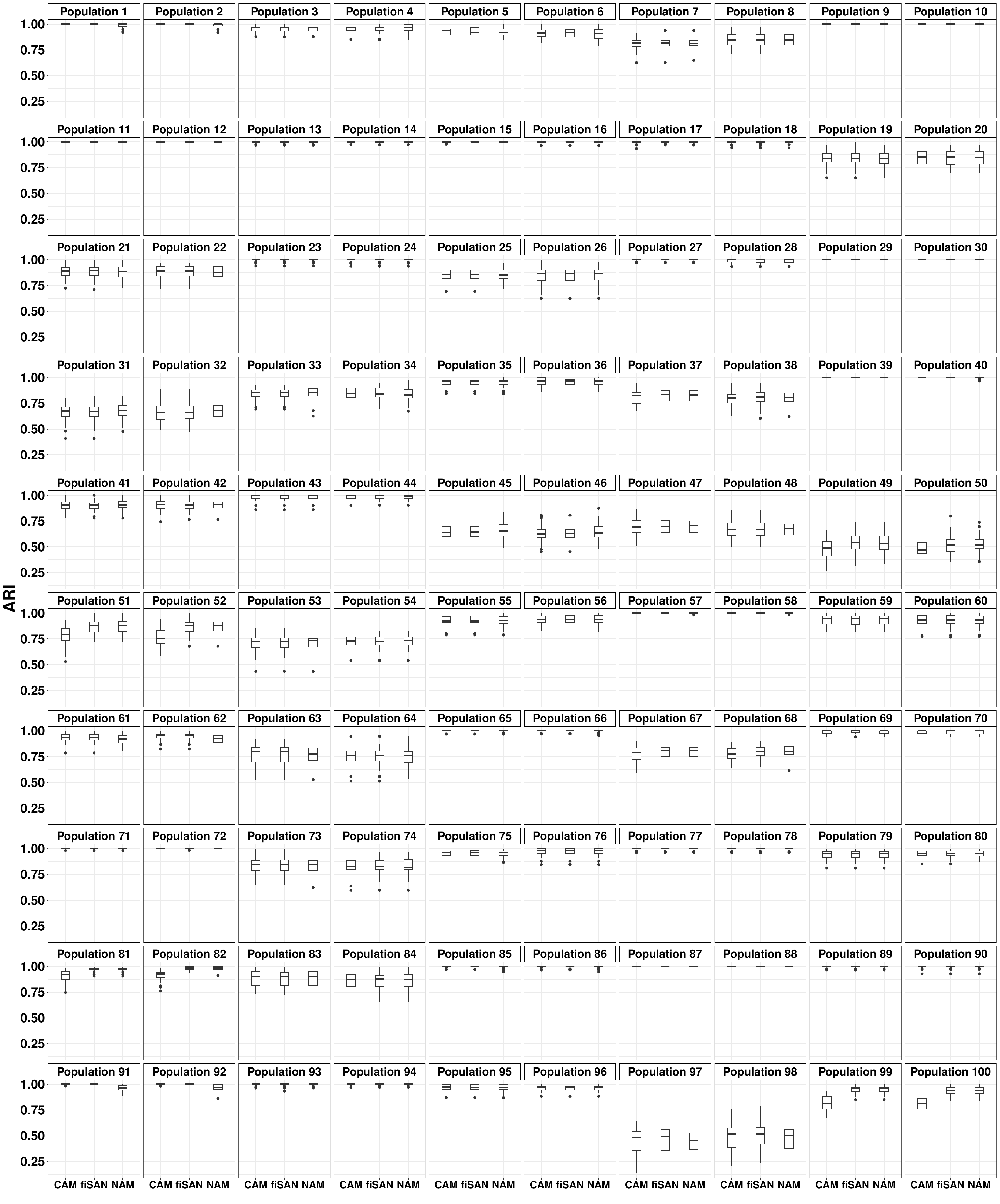}
\caption{Comparison of observational clustering accuracy of NAM with CAM and fiSAN when both the group- and observation-level variables are two-dimensional. Boxplots are reported over 50 independent replications.
}
\label{fig:NAM_CAM_fiSAN_OC_2D}
\end{figure}

\begin{figure}[!htp]
\centering
\includegraphics[width=1\columnwidth]{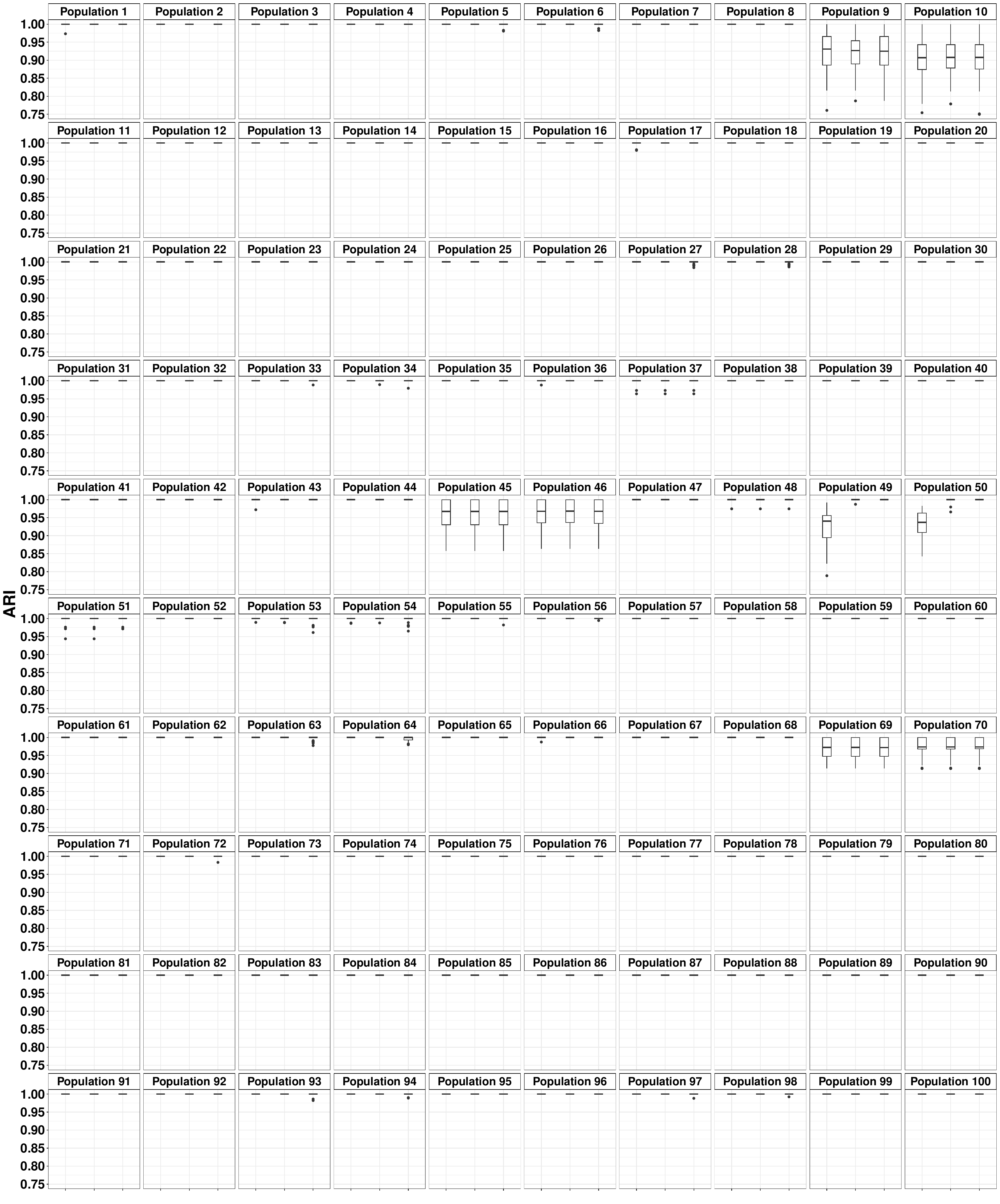}
\caption{Comparison of observational clustering accuracy of NAM with CAM and fiSAN when both the group- and observation-level variables are five-dimensional. Boxplots are reported over 50 independent replications.
}
\label{fig:NAM_CAM_fiSAN_OC_5D}
\end{figure}

\begin{figure}[!htp]
\centering
\includegraphics[width=1\columnwidth]{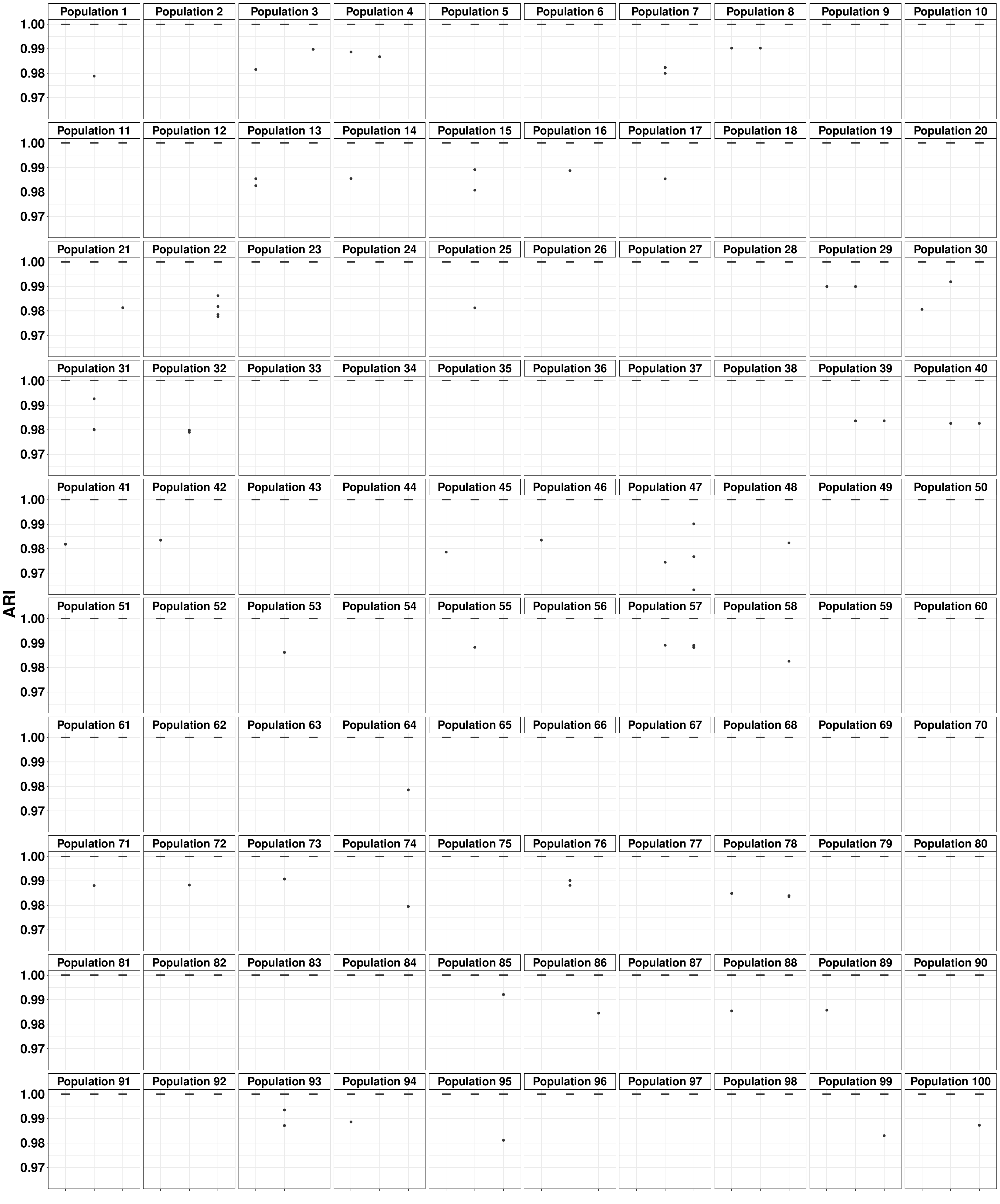}
\caption{Comparison of observational clustering accuracy of NAM with CAM and fiSAN when both the group- and observation-level variables are ten-dimensional. Boxplots are reported over 50 independent replications.
}
\label{fig:NAM_CAM_fiSAN_OC_10D}
\end{figure}
\clearpage
\newpage 
Figure~\ref{fig:All_GC_OC} provides a summary of the distribution of adjusted Rand index (ARI; \citealp{ARI}) values for estimating group clusters (GCs) and OCs across different combinations of mixture distributions used to generate the group-level and observation-level variables.
\begin{figure}[!htp]
        \centering
        \begin{subfigure}{0.99\textwidth}
          \centering
            \includegraphics[width= 1\linewidth]{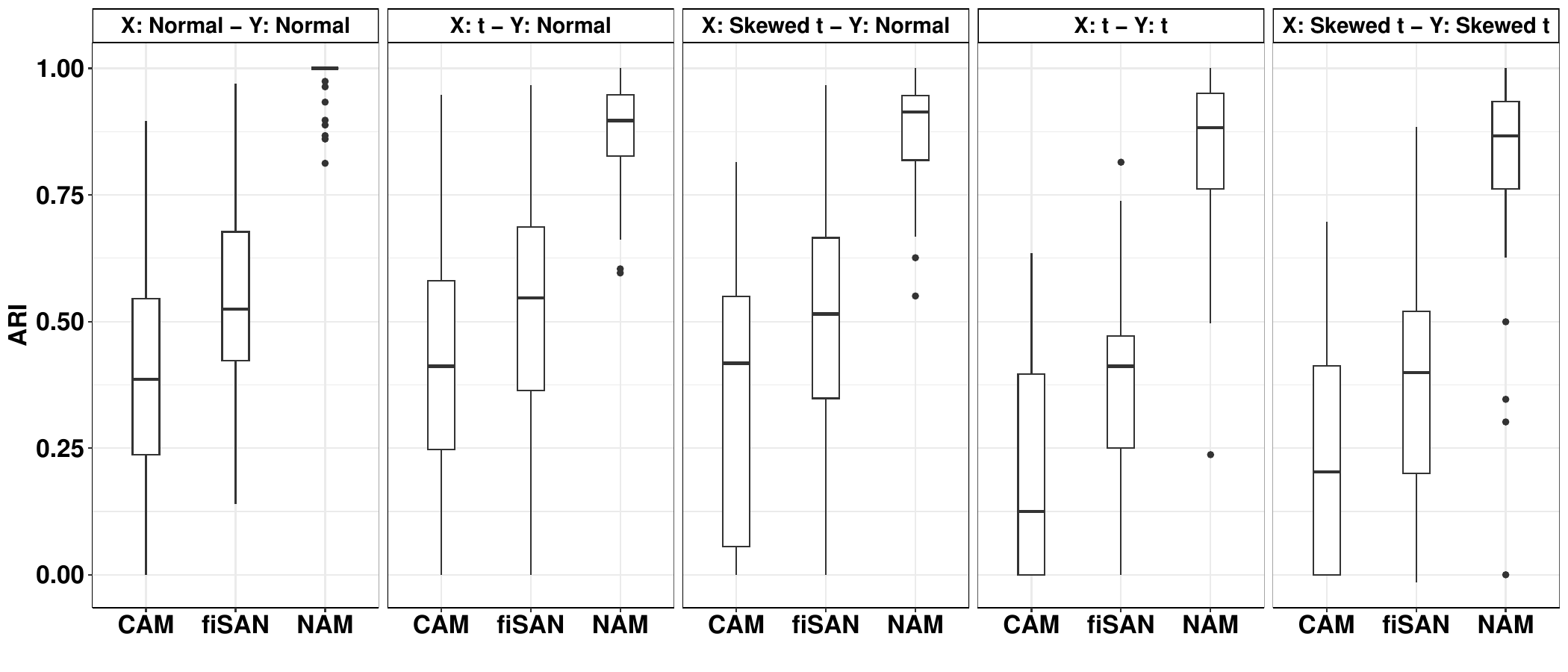}
            \caption{Group clustering}
            \label{fig:All_GC}
        \end{subfigure}
        \begin{subfigure}{0.99\textwidth}
            \centering
            \includegraphics[width= 1\linewidth]{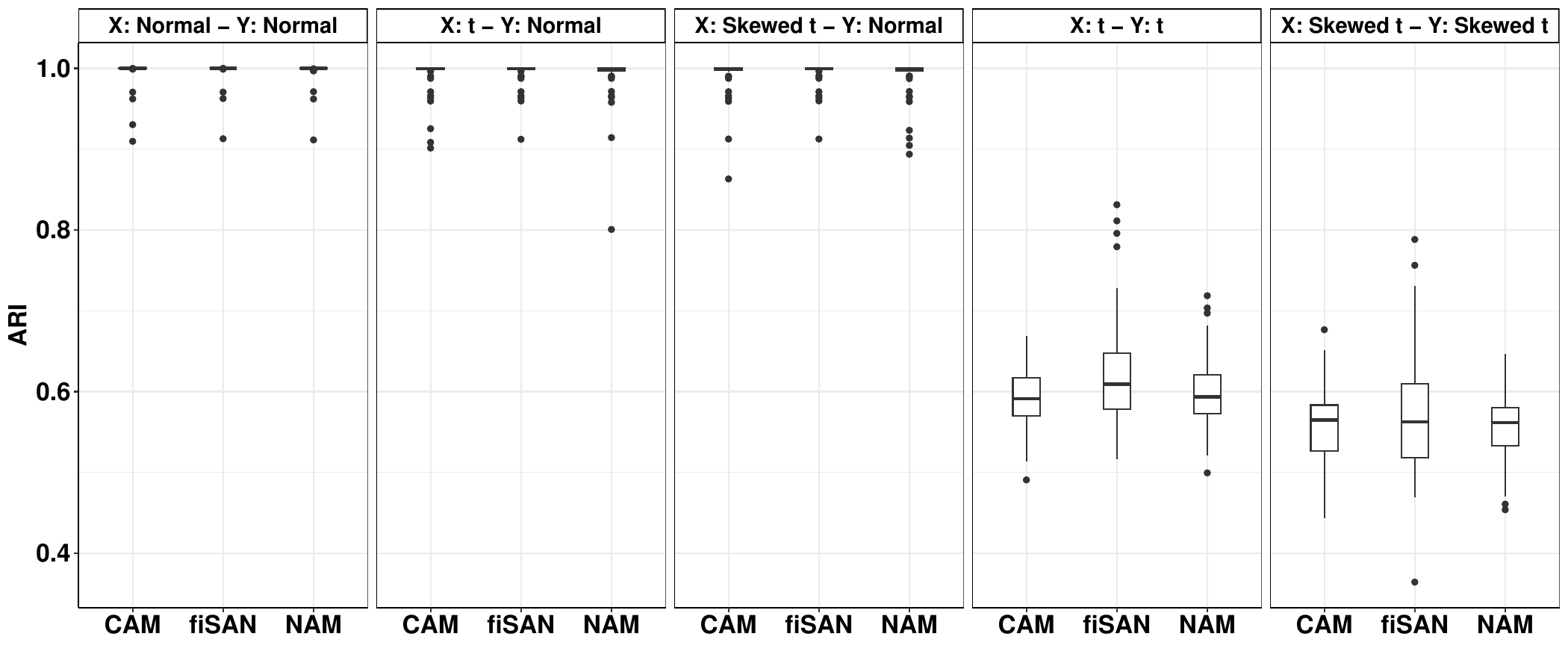}
            \caption{Observational clustering}
            \label{fig:All_OC}
        \end{subfigure}
        \caption{Comparison of clustering accuracy of NAM with CAM and fiSAN for (a) Group clusters and (b) Observational clusters. The top panel in the figures denote the distribution used to generate the data. For example, $X$: t - $Y$: Normal refers to the case when the group-level variables $\bm{x}_j$ was drawn from a mixture of multivariate t distribution while the observation-level variables $\bm{y}_{ji}$ were drawn from a mixture of multivariate Gaussian distribution, etc.}
        \label{fig:All_GC_OC}
        \end{figure}
        
In the main manuscript, we conducted simulation studies to assess the robustness of NAM to deviations from Gaussianity in the group-level variables ($\bm{x}_j$) and/or the observation-level variables ($\bm{y}_{ji}$). Specifically, we compared the performance of NAM with CAM and finite-infinite shared atoms nested model (fiSAN; \citealp{fiSAN}) by reporting the distributions of ARI values for estimating GCs and OCs across different combinations of mixture distributions used to generate the group- and observation-level variables. Here, we further examine the estimated numbers of GCs and OCs, as shown in Figure~\ref{fig:Num_All_GC_OC}. Figure~\ref{fig:Num_All_GC} indicates that NAM accurately recovers the true number of group clusters when both group- and observation-level variables are generated from Gaussian mixtures, while slightly overestimating the number of clusters when one or both levels follow non-Gaussian mixture distributions. In contrast, the competing methods more frequently fail to recover the true number of GCs. Figure~\ref{fig:Num_All_OC} shows that all three methods correctly estimate the true number of observational clusters when the observation-level variables are Gaussian, irrespective of the group-level distribution. When the observation-level variables are generated from non-Gaussian mixtures, all methods tend to overestimate the number of OCs, although NAM yields relatively fewer spurious clusters than the competing approaches.

\begin{figure}[!htp]
        \centering
        \begin{subfigure}{0.99\textwidth}
          \centering
            \includegraphics[width= 1\linewidth]{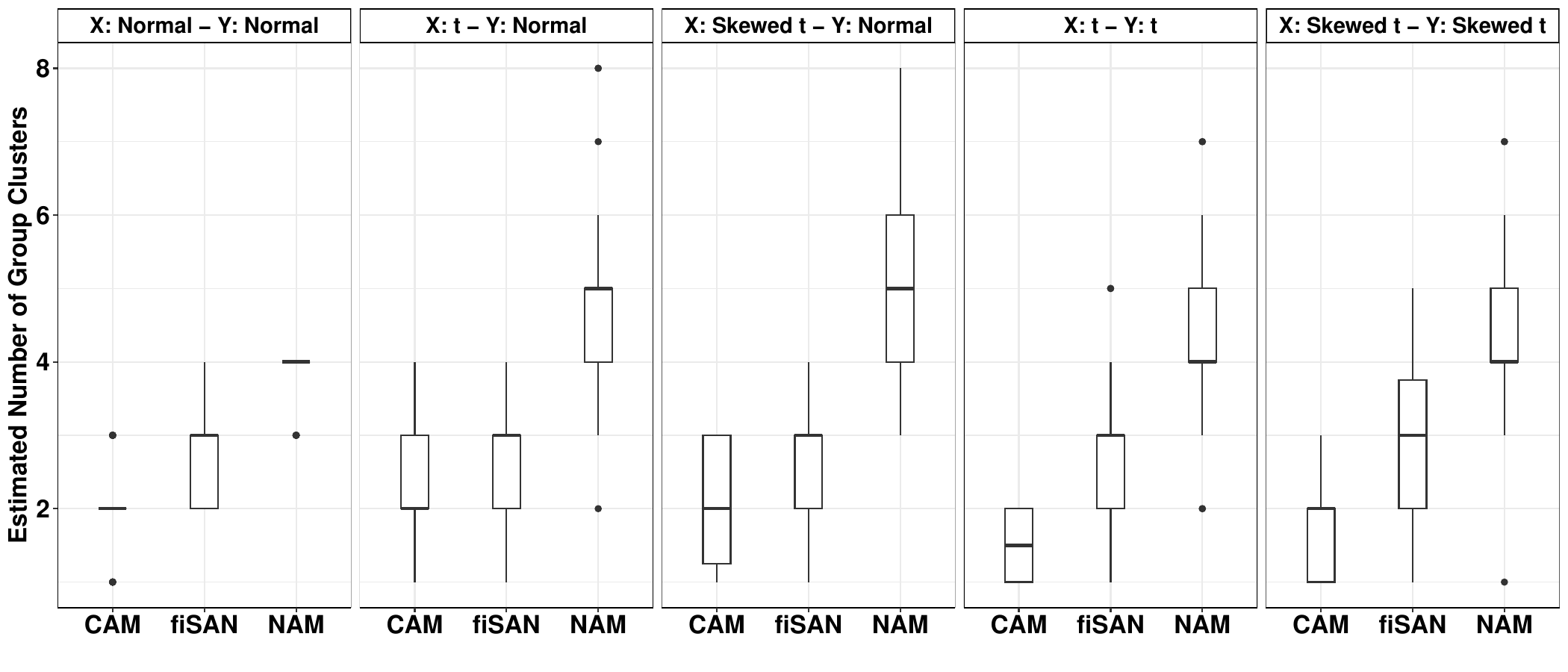}
            \caption{Group clustering}
            \label{fig:Num_All_GC}
        \end{subfigure}
        \begin{subfigure}{0.99\textwidth}
            \centering
            \includegraphics[width= 1\linewidth]{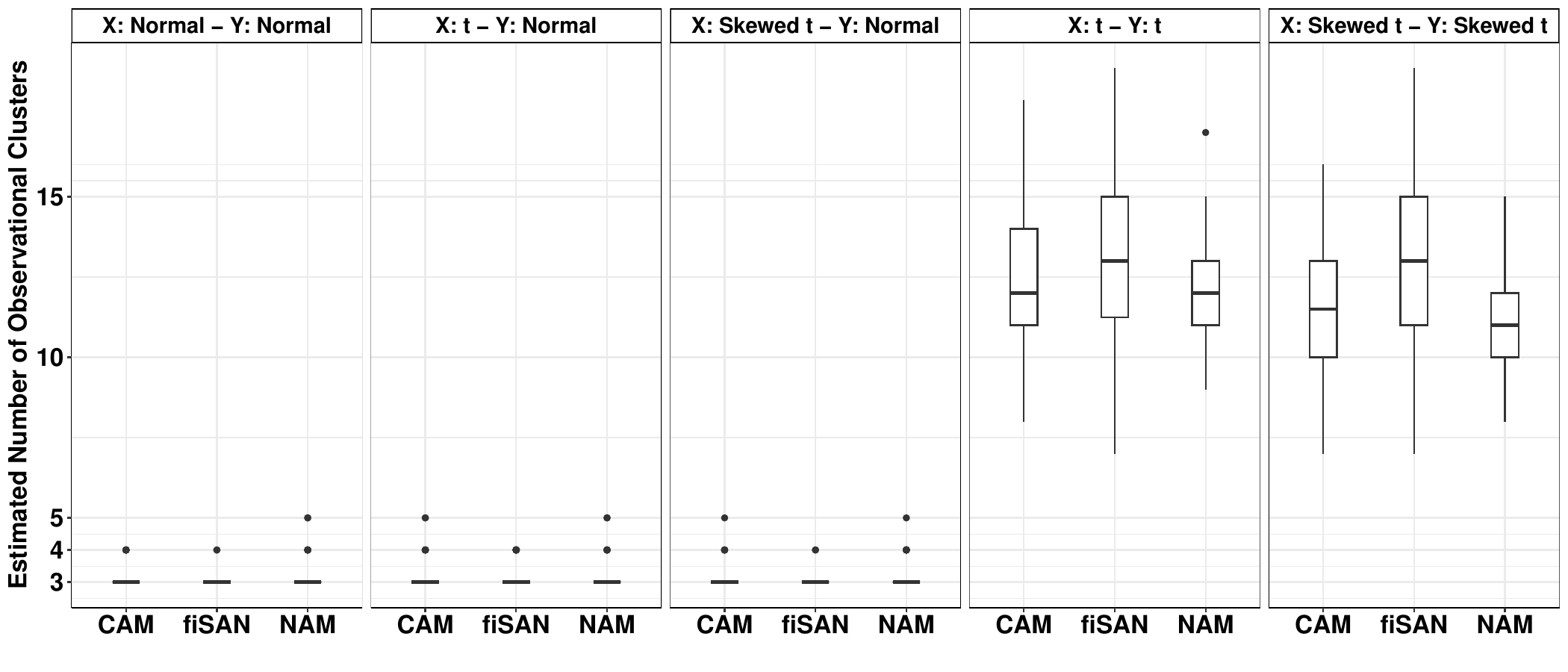}
            \caption{Observational clustering}
            \label{fig:Num_All_OC}
        \end{subfigure}
        \caption{Comparison of number of estimated clusters for NAM, CAM, and fiSAN for (a) Group clusters and (b) Observational clusters. The top panel in the figures denote the distribution used to generate the data. For example, $X$: t - $Y$: Normal refers to the case when the group-level variables $\bm{x}_j$ was drawn from a mixture of multivariate t distribution while the observation-level variables $\bm{y}_{ji}$ were drawn from a mixture of multivariate Gaussian distribution, etc.}
        \label{fig:Num_All_GC_OC}
        \end{figure}

\clearpage
\newpage
Figure~\ref{fig:GC_OC_ARI_Missing} summarizes the distributions of ARI values for recovering both GCs and OCs under varying levels of randomly omitted group-level variables ($r$).
        \begin{figure}[!htp]
        \centering
        \begin{subfigure}{0.99\textwidth}
          \centering
            \includegraphics[width= 1\linewidth]{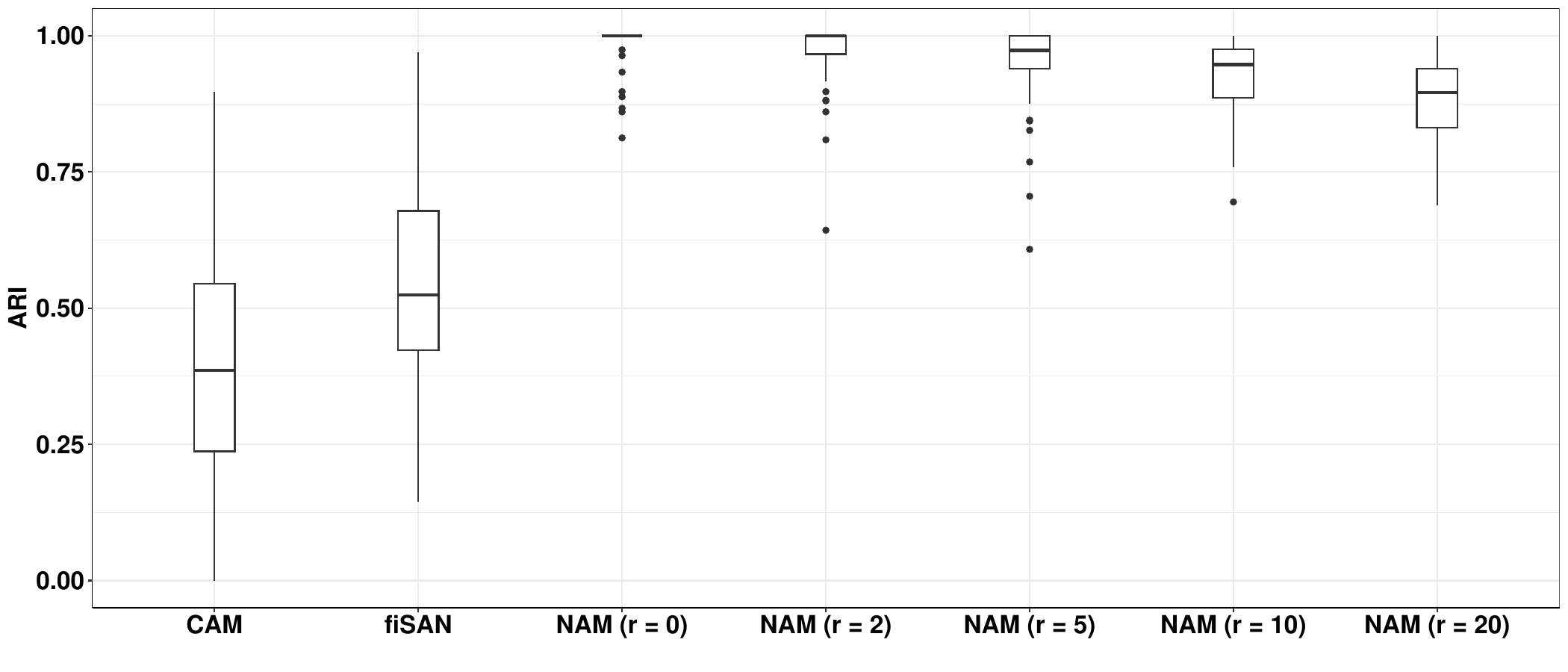}
            \caption{Group clustering}
            \label{fig:GC_ARI_Missing}
        \end{subfigure}
        \begin{subfigure}{0.99\textwidth}
            \centering
            \includegraphics[width= 1\linewidth]{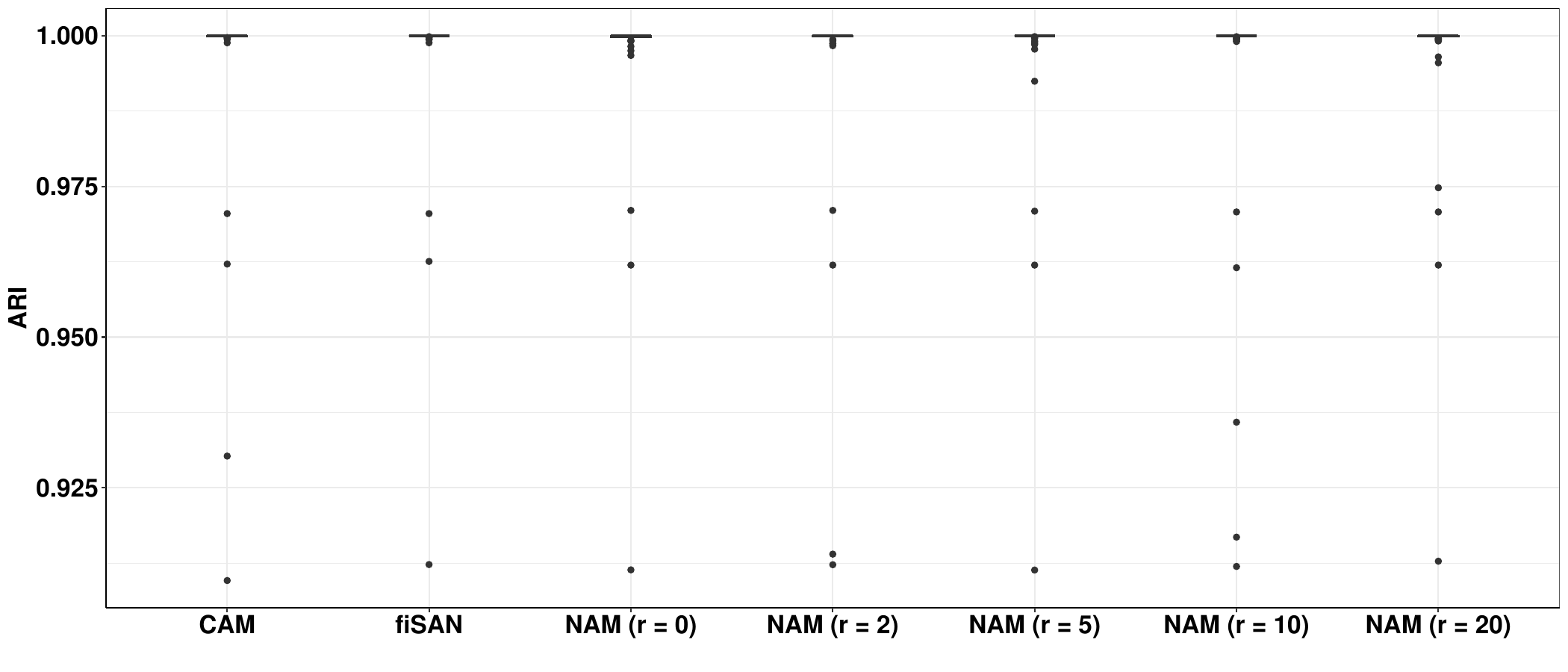}
            \caption{Observational clustering}
            \label{fig:OC_ARI_Missing}
        \end{subfigure}
        \caption{Comparison of clustering accuracy of NAM with CAM and fiSAN for (a) Group clusters and (b) Observational clusters under varying levels of randomly omitted group-level variables ($r$). The case $r = 0$ corresponds to the full-information setting.}
        \label{fig:GC_OC_ARI_Missing}
        \end{figure}        
\section{Real Data Analysis}\label{sec::suppl_rda}
\subsection{Diagnostics}\label{sec::suppl_rda:diagnostics}
In the main manuscript, for the real data analysis, we first applied principal component analysis (PCA) to the SNP data and retained the top five principal components (PCs). Similarly, we performed PCA on the gene expression data across all 961 individuals and 929,817 single cells, retaining the top five PCs. The resulting PCs were subsequently standardized by centering and scaling.

Prior to applying our proposed model, we conducted basic diagnostic checks on these standardized PCs. Figure~\ref{fig:X_Hist} presents the histograms for the SNP-derived PCs, which clearly suggest the presence of multiple distributional components. This observation supports the suitability of a mixture model with possible Gaussian components. Additionally, we fitted a Gaussian mixture model using the \texttt{Mclust} function from the R package \texttt{mclust} \citep{GMM_R} to the SNP-derived PC scores. Figure~\ref{fig:X_QQ} presents dimension-wise quantile–quantile plots comparing the empirical quantiles of the observed PC scores with the corresponding theoretical quantiles implied by the fitted mixture model. Overall, the empirical quantiles align reasonably well with the theoretical quantiles across dimensions, suggesting that a Gaussian mixture specification provides an adequate representation of the marginal distributional features of the SNP-derived PC data.

        \begin{figure}[!htp]
        \centering
           \begin{subfigure}{0.95\textwidth}
            \centering
            \includegraphics[width= 0.9\linewidth]{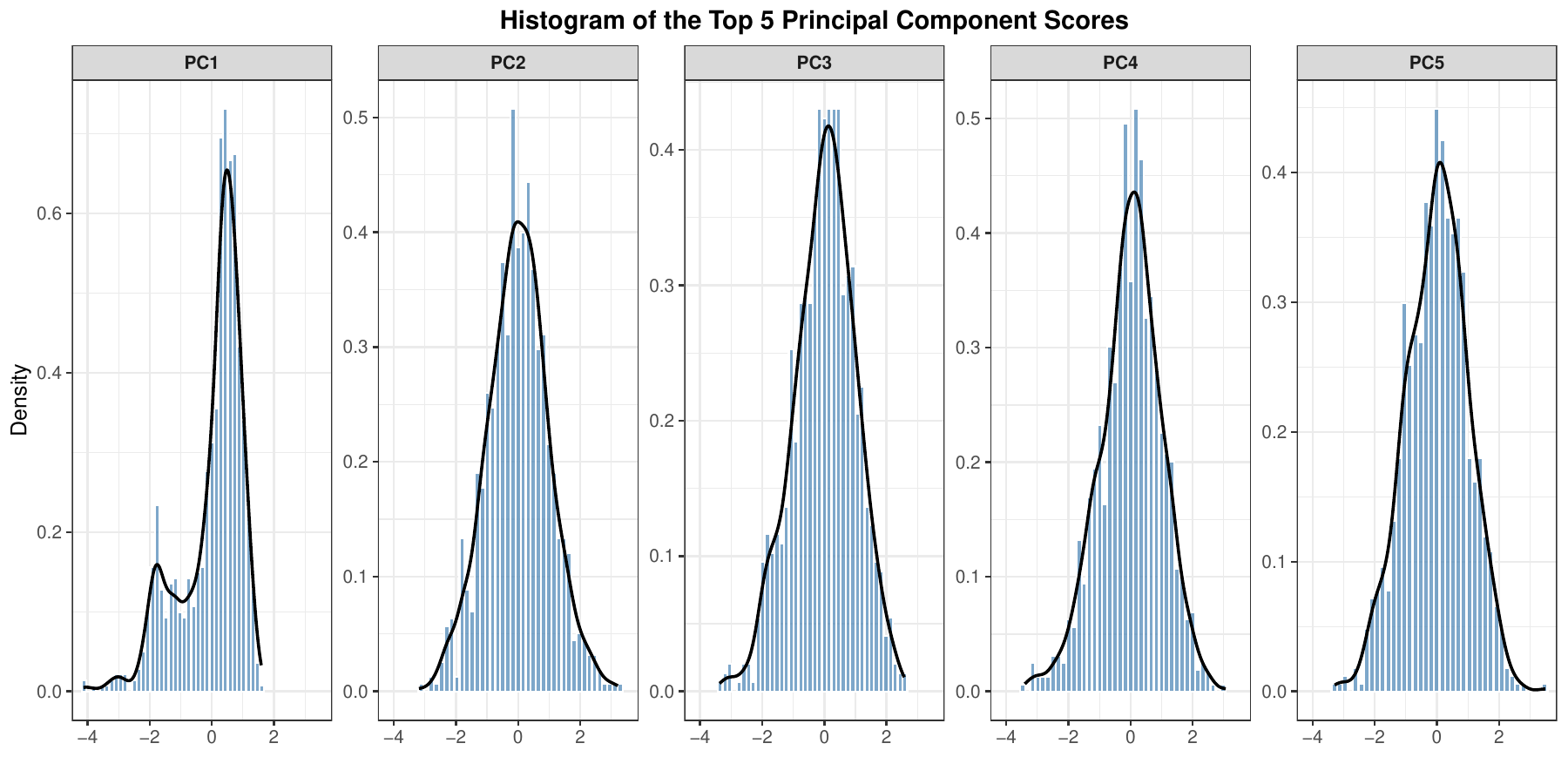}
            \caption{}
            \label{fig:X_Hist}
        \end{subfigure}
            \begin{subfigure}{0.95\textwidth}
              \centering
                \includegraphics[width= 0.9\linewidth]{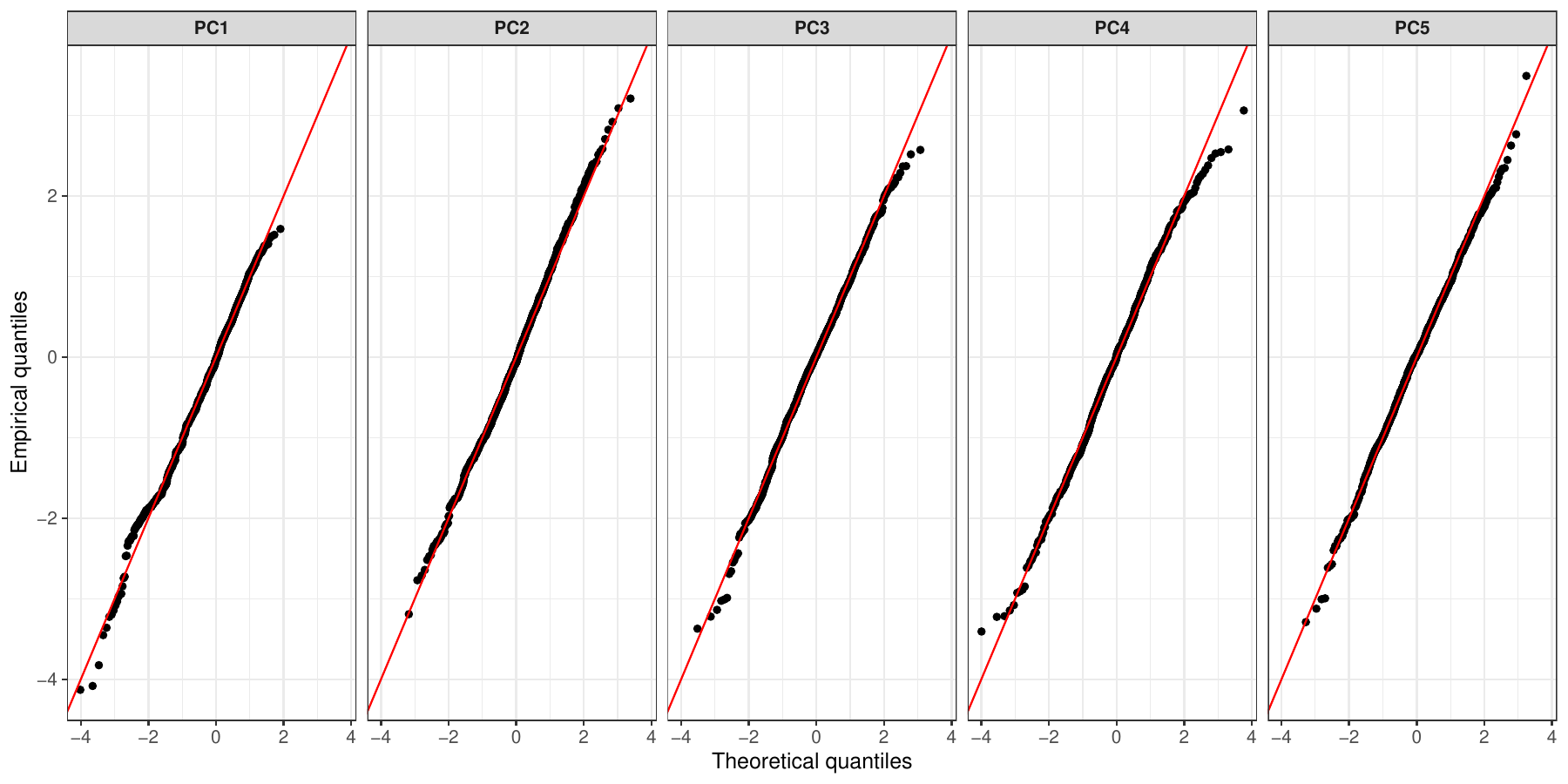}
                \caption{}
                \label{fig:X_QQ}
            \end{subfigure}
        \caption{Dimension-wise (a) histogram overlayed with density corresponding to the PCs obtained from the SNP data and (b) quantile-quantile plot obatained from fitted Gaussian mixture model. }
        \label{fig:X_QQ_Hist}
        \end{figure}

For the gene expression data, we randomly selected three individuals and performed analogous diagnostic checks. Figure~\ref{fig:Y_Hist} displays the histograms for the PCs obtained from each of these individuals. As before, Figure~\ref{fig:Y_Hist} reveals multimodality, again motivating the use of mixture models with possible Gaussian components. Similarly, Figure~\ref{fig:Y_QQ} displays the dimension-wise quantile–quantile plots for each of the three individuals, comparing the empirical quantiles with those implied by the corresponding fitted Gaussian mixture models. Across individuals and dimensions, the empirical quantiles exhibit overall agreement with the theoretical quantiles. While minor deviations are observable, mostly in PC dimension 5, the plots do not indicate substantial lack of fit, thereby supporting the suitability of the Gaussian mixture modeling assumption for the gene expression data as well.

        \begin{figure}[!htp]
        \centering
        \begin{subfigure}{0.95\textwidth}
            \centering
            \includegraphics[width= 0.9\linewidth]{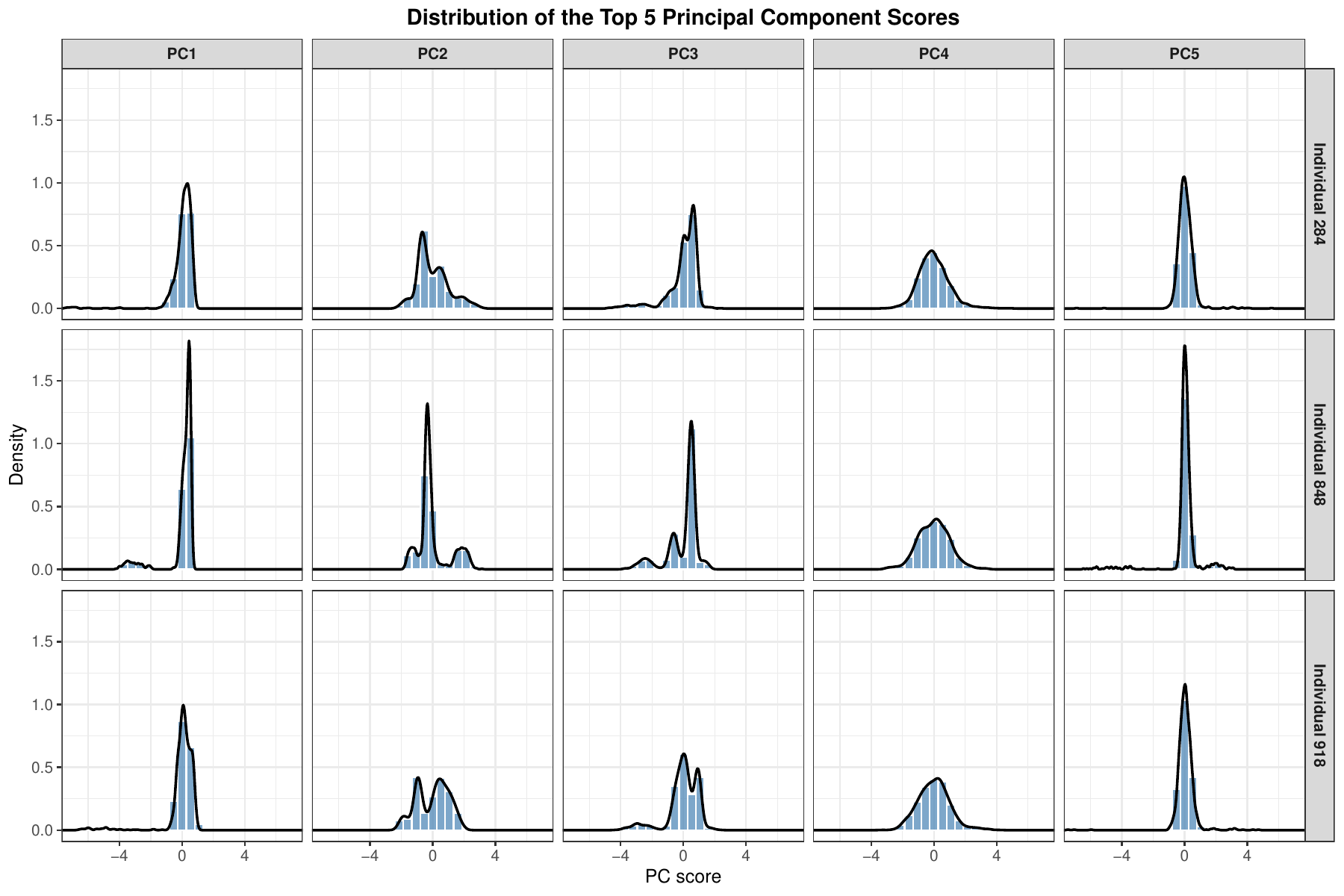}
            \caption{}
            \label{fig:Y_Hist}
        \end{subfigure}
        \begin{subfigure}{0.95\textwidth}
          \centering
            \includegraphics[width= 0.9\linewidth]{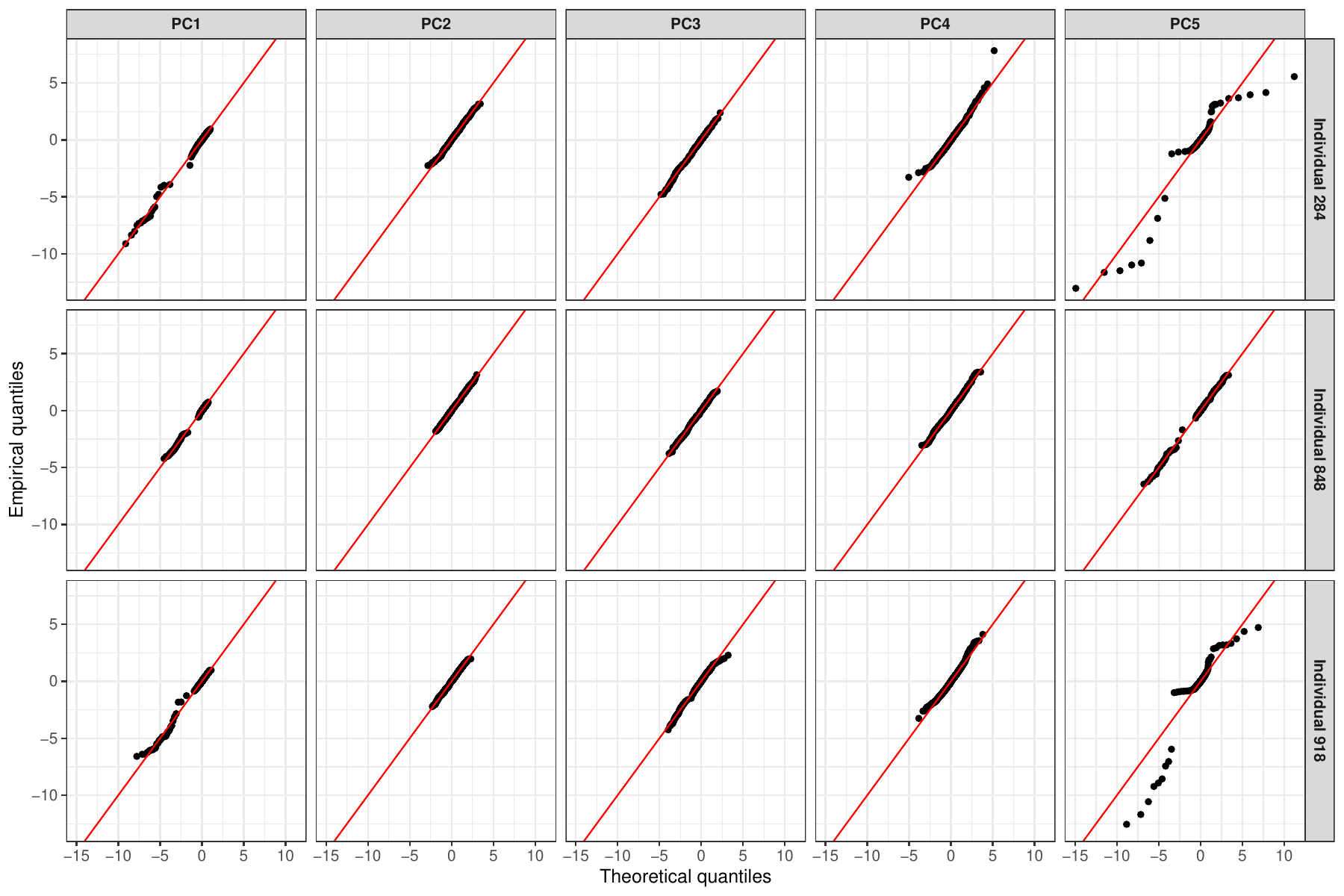}
            \caption{}
            \label{fig:Y_QQ}
        \end{subfigure}
        \caption{Dimension-wise (a) histogram overlayed with density corresponding to the PCs obtained from the gene expression data and (b) quantile-quantile plot obatained from fitted Gaussian mixture model..}
        \label{fig:Y_QQ_Hist}
        \end{figure}

\clearpage\newpage        
\subsection{Sensitivity}\label{sec::suppl_rda:sensitivity}
In the main manuscript, for the real data analysis, we set the truncation levels to $K = L = 50$ and performed 30 independent runs of the proposed VI algorithm using different initializations. Among these runs, the solution corresponding to the highest ELBO value at convergence was used for inference. In this section, we examined whether the truncation boundaries were active by tracking the numbers of GCs and OCs across iterations and across the multiple runs of the VI algorithm. Figure~\ref{fig:Num_GC_OC_Combined} presents boxplots of the estimated numbers of GCs and OCs aggregated over all iterations and across 30 distinct runs of the VI algorithm. Across all runs and iterations, the maximal numbers of GCs and OCs remain well below the truncation limits, suggesting that the chosen truncation levels are adequate for posterior inference.
    \begin{figure}[!htp]
    \centering
    \begin{subfigure}{0.45\textwidth}
      \centering
        \includegraphics[width= 1\linewidth]{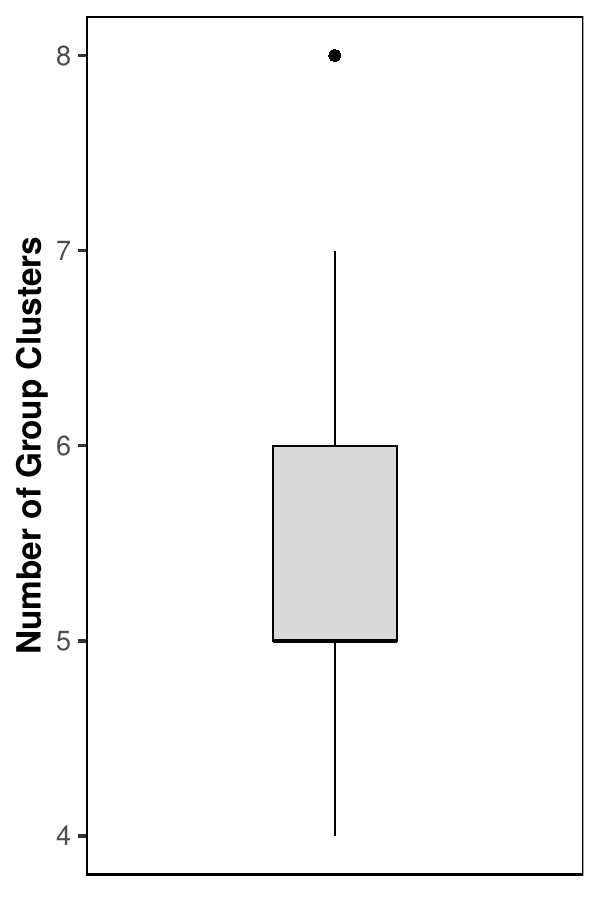}
        \caption{Group clustering}
        \label{fig:NUM_GC_Combined}
    \end{subfigure}
    \begin{subfigure}{0.45\textwidth}
        \centering
        \includegraphics[width= 1\linewidth]{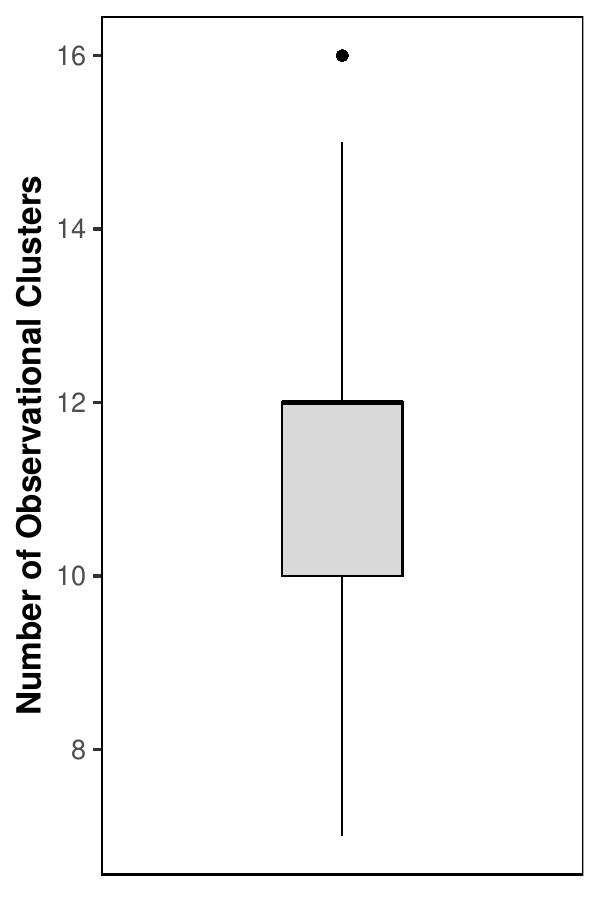}
        \caption{Observational clustering}
        \label{fig:NUM_OC_Combined}
    \end{subfigure}
    \caption{Boxplot of number of estimated (a) group clusters and (b) observational clusters across all iterations and across 30 distinct runs of our algorithm for the OneK1K data set.}
    \label{fig:Num_GC_OC_Combined}
    \end{figure}

\clearpage\newpage
\subsection{Additional Plots}\label{sec::suppl_rda:add_plots}
Figure~\ref{fig:Indiv131_True_Est_OCs} presents the uniform manifold approximation and projection (UMAP; \citealp{McInnes2018}) embeddings of the gene expression data for the individual numbered 131, with cells colored according to the estimated OCs alongside the annotated cell-type labels by OneK1K \citep{OneK1K_Yazar}.
    \begin{figure}[!htp]
        \centering
           \begin{subfigure}{0.65\textwidth}
            \centering
            \includegraphics[width= 1\linewidth]{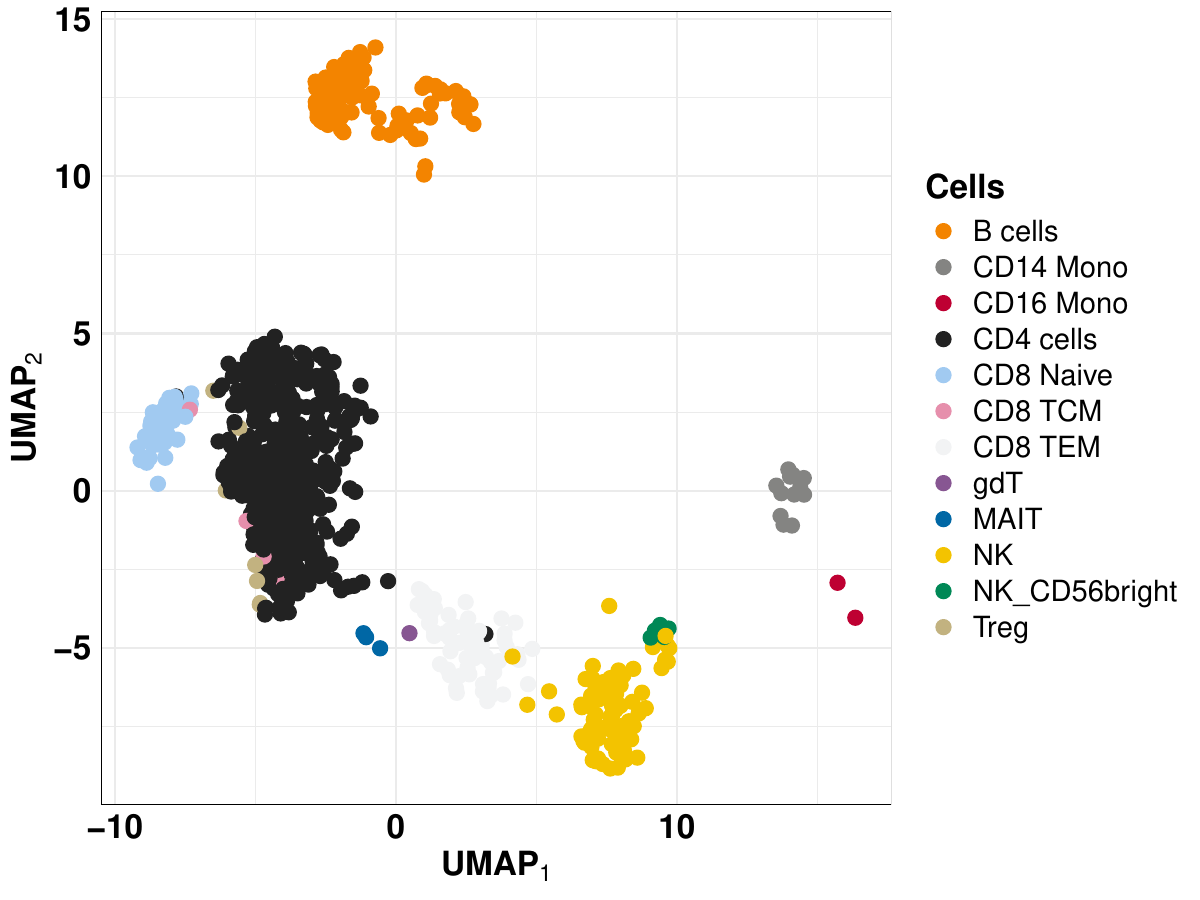}
            \caption{}
            \label{fig:Indiv131_TrueOCs}
        \end{subfigure}
            \begin{subfigure}{0.65\textwidth}
              \centering
                \includegraphics[width= 1\linewidth]{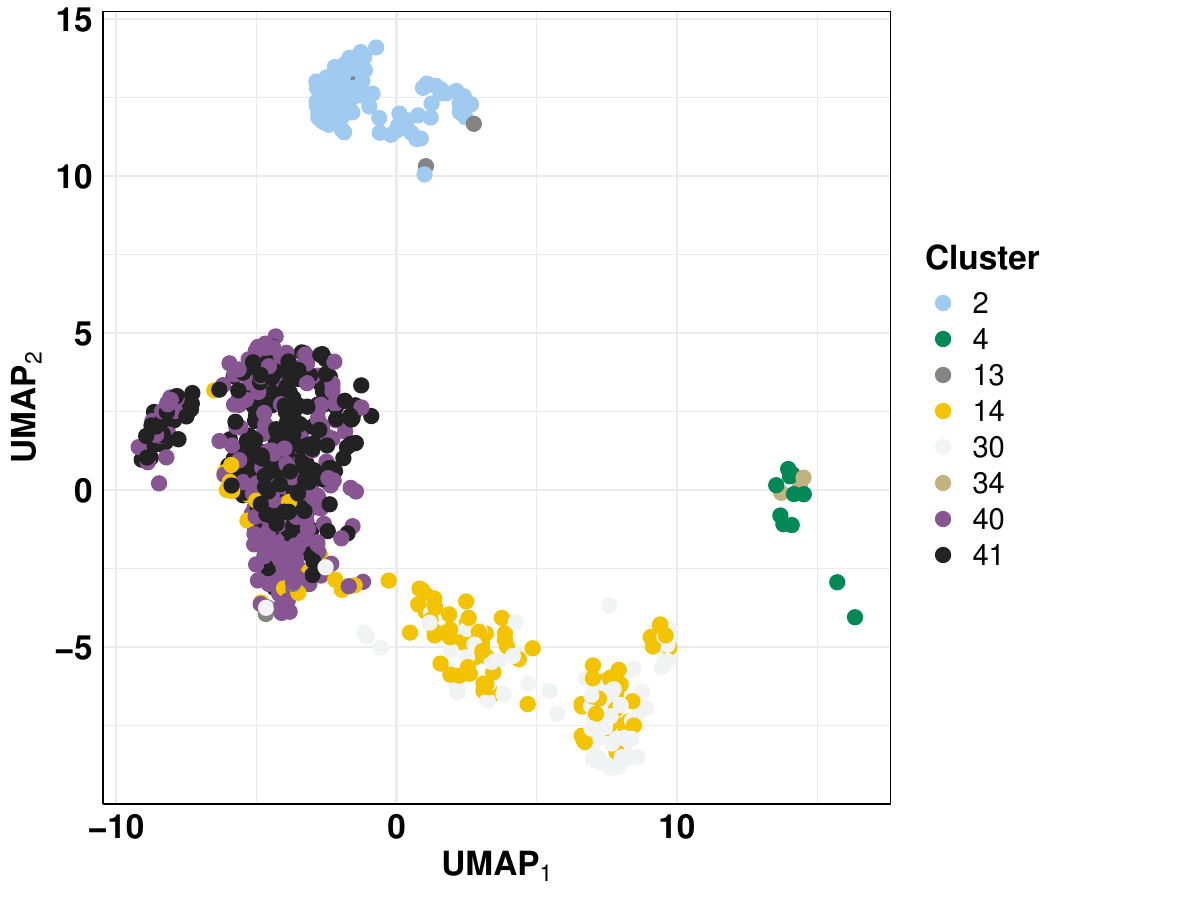}
                \caption{}
                \label{fig:Indiv131_EstimatedOCs}
            \end{subfigure}
        \caption{UMAP embeddings of the gene expression data for the individual labeled 131, colored by (a) manually annotated cell clusters and (b) estimated OCs obtained from the NAM.}
        \label{fig:Indiv131_True_Est_OCs}
        \end{figure}

\clearpage
\newpage

\section{Glossary}\label{supp::sec::glossary}
In this section, we provide a glossary of all abbreviations used throughout the paper.
\begin{table}[H]
                \centering
                \begin{tabular}{ll}
                \hline
                \textbf{Acronym} & \textbf{Description} \\
                \hline
                NAM & Nested Atoms Model \\
                DP & Dirichlet Process \\
                HDP & Hierarchical Dirichlet Process \\
                nDP & Nested Dirichlet Process \\
                CAM & Common Atoms Model \\
                fiSAN & Finite–Infinite Shared Atoms Nested Model \\
                HHDP & Hidden Hierarchical Dirichlet Process \\
                iid & Independent and Identically Distributed\\
                GEM & Griffiths–Engen–McCloskey distribution \\
                GC & Group Cluster \\
                OC & Observational Cluster \\
                VI & Variational Inference \\
                CAVI & Coordinate Ascent Variational Inference \\
                ELBO & Evidence Lower Bound \\
                ARI & Adjusted Rand Index \\
                PCA & Principal Component Analysis \\
                PC & Principal Component \\
                scRNA-seq & Single-cell RNA Sequencing \\
                SNP & Single Nucleotide Polymorphism \\
                NW & Normal–Wishart distribution \\
                UMAP & Uniform Manifold Approximation and Projection\\
                \hline
                \end{tabular}
                \caption{List of acronyms used throughout the paper.}
                \label{tab:acronyms}
            \end{table}
\clearpage

\end{document}